\documentclass{article}
\usepackage{graphicx} 
\usepackage{amsmath}
\usepackage{amsthm}
\usepackage{amssymb}
\usepackage{mathtools}
\usepackage{xcolor}
\usepackage[round]{natbib}
\usepackage{enumerate}
\usepackage{bbm}
\usepackage{bm}
\usepackage{physics}
\usepackage{apptools}
\usepackage{float}
\usepackage{threeparttable}
\usepackage{rotating} 
\usepackage{enumitem}
\usepackage{subfig}
\usepackage{adjustbox}

\usepackage[a4paper, total={6in, 8in}]{geometry}

\usepackage{authblk}

\usepackage{orcidlink}

\AtAppendix{\counterwithin{theorem}{section}}

\newtheorem{theorem}{Theorem}
\newtheorem{lemma}[theorem]{Lemma}
\newtheorem{remark}[theorem]{Remark}

\theoremstyle{definition}

\providecommand{\keywords}[1]
{
  \small	
  \textbf{\textit{Keywords:}} #1
}

\newcommand\extrafootertext[1]{%
    \bgroup
    \renewcommand\thefootnote{\fnsymbol{footnote}}%
    \renewcommand\thempfootnote{\fnsymbol{mpfootnote}}%
    \footnotetext[0]{#1}%
    \egroup
}

\title{A Better Comparison under right-censoring: ABC Statistic for Equivalence Testing and Quantification}

\author{
Simon Mack $^{1,\ast}$\orcidlink{0009-0006-0081-935X},
Kathrin Möllenhoff $^{2}$\orcidlink{0000-0001-7861-3892} and
Dennis Dobler $^{1}$\orcidlink{0000-0002-9040-0854}}

\affil{
$^{1}$ Institute of Statistics, RWTH Aachen University, Aachen, Germany \\
$^{2}$ Institute of Medical Statistics and Computational Biology, University of Cologne, Cologne, Germany}

\date{\today}

\begin{document}

\maketitle

\begin{abstract}
The ABC (area between curves) statistic is an $L^1$-distance which targets an easy-to-interpret estimand. 
Defined as the (normalized) integrated absolute distance between two survival curves it is a meaningful quantity even when survival functions are crossing.
Based on right-censored time-to-event data, estimation is based on Kaplan-Meier curves obtained from two independent sample groups.
In the present paper, we develop the large sample properties of the ABC statistic and investigate various resampling options for approximating the statistic's distribution which is possibly non-normal in the limit.
These breakthroughs enable the construction of equivalence tests which can be used to establish that differences between two survival functions are practically irrelevant.
Alternatively, the point estimator can be accompanied with confidence intervals that comprehensibly quantify the difference between the curves.
An extensive simulation study explores these inferential methods under various scenarios: proportional, crossing, and partially equal survival functions.
An application to data on overall and progression-free survival in a lung cancer trial illustrates the methods' benefits and some points of consideration.
\end{abstract}

\keywords{
Equivalence Test; Bootstrap; Delta-Method; $L^1$-distance; Subsampling; Survival Analysis.}
\extrafootertext{$\ast$ Corresponding author. Email adress: \texttt{simon.mack@rwth-aachen.de}}
\section{Introduction}

Time-to-event outcomes are central in medical research, particularly in areas such as oncology and cardiovascular disease, where understanding differences in survival between treatments is essential for clinical decision-making. Typically, the aim is to test whether there is a significant difference between two treatment groups. However, depending on the goal of a study, researchers are sometimes more interested in claiming that there is no \textit{relevant} difference between the two groups. To this end, instead of testing an alternative hypothesis of inequality, the goal is to show that they do not differ substantially, i.e. less than some pre-specified margin $d$. This equivalence testing approach provides, due to the usually free choice of the margin $d$, a very flexible framework for many research questions (see, for example, \citealp{wellek2010testing}). Depending on the study, a non-inferiority test may be used instead of an equivalence test, e.g., to show that a new treatment is not inferior to another treatment. Both tests, non-inferiority and equivalence, require the same statistical methods, which are different from those used to test classical difference hypotheses. 

In general, classical methods widely used to compare survival outcomes across treatment groups include Kaplan–Meier estimates combined with log-rank tests, and, when adjusting for covariates, the Cox proportional hazards model. These approaches have become standard because they do not assume a specific parametric form for the event time distribution and because the hazard ratio provides a single summary measure of treatment effect. Specifically, also when the goal of the study is not to demonstrate a significant difference, but rather to demonstrate equivalence or non-inferiority, approaches are typically based on these classical methods by using modifications of the log-rank test and thus also depend on the assumption of proportional hazards \citep{wellek1993}. 
It is well known that this assumption is crucial and frequently violated (see, e.g., \citealp{hernan2010,uno2014,kuemmerli2021}) and, if so, such approaches suffer from a considerable loss of power \citep{lin2020}. 
Therefore, alternative measures and tests have been proposed and compared. These include, for example, weighted log-rank tests \citep{fleming2013counting}, combination tests \citep{breslow1984,logan2008} and tests based on the difference or ratios of the RMST \citep{royston2013,trinquart2016,munko2024rmst}. 

However, all these previously mentioned tests aim for testing a significant difference between groups rather than claiming equivalence.
To address this problem, non-parametric methods investigating equivalence have been introduced based on the difference of Kaplan-Meier curves \citep{com1993}, the odds ratio \citep{martinez2017}, as well as semi-parametric methods based on a log transformation model \citep{shen2020}. In a work of \citealp{zhao2016}, such a test based on the RMST has been proposed by constructing simultaneous confidence bands for the difference between two RMST curves. Further, \citealp{mollenhoff2023}, have proposed a parametric approach to assess equivalence, in which survival curves are fitted according to various distributions of event times, and afterwards, confidence intervals for both the difference in survival curves and the hazard ratio are derived asymptotically and via bootstrapping methods. 

More generally, when it comes to equivalence testing of functions, several other measures have been considered, for example the $L^2$-distance \citep{detmolvolbre2015} and, extensively, the maximum distance between curves \citep{detmolvolbre2015,mollenhoff2023,hagemann2025}. In a recent work of \citealp{bastian2024comparing}, the $L^1$-distance was introduced as a measure of equivalence, focusing on (parametric) regression curves. In the two-sample survival setting the $L^1$-distance has been used already by \citet{lin2010new} to test the equality of survival functions, however, without establishing the correct asymptotic distribution which was found by \citet{liu2020resampling}.

In the present paper we extend the methodology in two directions: first, we focus on equivalence testing instead of classical hypotheses of inequality which is still a gap in the survival literature. Second, instead frequently used quantities such as the hazard ratio or the difference in RMST, we consider the $L^1$-distance as a measure of equivalence.
This bears the advantage of robustness because it does not rely on restrictive model assumptions, in particular, not on the proportional hazards assumption. Compared to the RMST, the $L^1$-distance further has the advantage that it can characterize the equivalence of survival functions; in contrast, two RMSTs can be equal although the underlying survival curved are not necessarily the same. 

This paper is organized as follows: First, we propose a modified $L^1$-distance as an estimator to describe the discrepancy between two survival functions and investigate its asymptotic properties. Secondly, we derive an equivalence test based on this estimator, using two different resampling-based test procedures. Next, we present extensive simulation results to demonstrate the finite sample performance of our methods. Finally, we apply our approach to the METLung trial \citep{spigel2017}, evaluating onartuzumab added to standard chemotherapy in advanced non-small cell lung cancer (NSCLC).

\section{Notation}
We consider a two-sample survival setting with $n_j \in \mathbb{N}, j=1,2$, observations of group-wise i.i.d.\ tupels $(X_{ij}, \delta_{ij}), i=1,\dots,n_j$, with observable times $X_{ij}=\min(T_{ij}, C_{ij})$ and survival functions
$T_{ij}\sim S_j$ and $C_{ij}\sim G_j$
representing the event and censoring times, respectively, and $\delta_{ij}=\mathbbm{1}\{T_{ij}\leq C_{ij}\}$ indicates the censoring status. We assume the nonnegativity and  stochastic independence (``random right-censoring'') of event and censoring times, but do not impose any additional model assumptions. In particular, $S_1$ and $S_2$ need not be continuous. Based on the $n_1+n_2=n$ pooled observations, we wish to decide whether the two survival functions $S_1$ and $S_2$ are ``equivalent'' on the interval $[0,\tau]$ where the terminal time $\tau$ has to be chosen such that $P(X_{ij}\geq \tau)>0.$ In a clinical trial, $\tau$ usually represents the end of study time. Unlike with finite dimensional parameters, the concept of \textit{distance} between survival functions is somewhat ambiguous and any metric on the space of probability distributions could be used to construct a notion of equivalence. We propose a modified $L^1$-distance,

\begin{equation}\label{theoreticalL1_distance}
    \Delta_{\tau}(S_1, S_2)= \frac{1}{\tau}\int_0^{\tau}|S_1(t)-S_2(t)|dt,
\end{equation}
to describe the discrepancy between two survival functions.
It is geometrically intuitive as the standardized 
area between both curves and, unlike the popular Kolmogorov-Smirnov distance, takes all function values on $[0,\tau]$ into account instead of only the maximum distance. The factor $1/\tau$ ensures that the distance does not depend on the used time scale and is contained in $[0,1].$ Based on this distance, we can formulate our equivalence testing problem:
\begin{equation}\label{equivalencehypothesis}
    H_0:\Delta_{\tau}(S_1, S_2) \geq \varepsilon \quad \text{versus}\quad H_1:\Delta_{\tau}(S_1, S_2)<\varepsilon
\end{equation}
for some pre-specified threshold value $\varepsilon \in (0,1).$

Adopting the standard counting process notation of  \citet{andersen1993statistical}, we denote for $t\geq0$ by $N_j(t)=\sum_{i=1}^{n_j}\mathbbm{1}\{X_{ij}\leq t, \delta_{ij}=1\}$ the number of observed events up until $t$ in group $j$ and by $Y_j(t)=\sum_{i=1}^{n_j}\mathbbm{1}\{X_{ij}\geq t\}$ the number of individuals at risk just before $t.$ In terms of these processes the group-wise Kaplan-Meier estimators are given by
    $\hat{S}_j(t)=\prod_{s\leq t}\left(1-\frac{\Delta N_j(s)}{Y(s)}\right).$
Here $\Delta N_j(s)=N_j(s)-N_j(s-)$ counts the number of events happening at time $s$ and $N_j(s-)$ denotes the left-continuous version of $N_j.$

\section{Asymptotic Inference}
We immediately obtain a point estimator for our discrepancy measure \eqref{theoreticalL1_distance} via plug-in:

\begin{equation}\label{estimatedL1_distance}
    \hat{\Delta}_{\tau}(S_1, S_2)= \frac{1}{\tau}\int_0^{\tau}|\hat{S}_1(t)-\hat{S}_2(t)|dt.
\end{equation}
As the Kaplan-Meier estimators are step functions, this estimator reduces to a sum:  Denote by $0 \leq t_1 < \dots < t_{k_n} \leq \tau$ the ordered distinct relevant event times in the pooled sample $X_{11},\dots,X_{n_11},$ $X_{12},\dots, X_{n_22}$; additionally, set $t_{k_n+1}=\tau.$ Then \eqref{estimatedL1_distance} becomes
\begin{equation*}
    \hat{\Delta}_{\tau}(S_1, S_2)=\frac{1}{\tau}\sum_{i=1}^{k_n}|\hat{S}_1(t_i)-\hat{S}_2(t_i)|(t_{i+1}-t_i).
\end{equation*}
This estimator is also useful outside the framework of testing for equivalence. Standard two-sample tests like the log-rank test lack the possibility to quantify the difference between survival functions in an interpretable manner if the null hypothesis of equality is rejected. In these scenarios $ \hat{\Delta}_{\tau}(S_1, S_2)$ serves as an interpretable effect measure for describing the difference of survival functions and could help to assess the practical significance of the results. The following theorem demonstrates the consistency of the estimator $\hat{\Delta}_{\tau}(S_1, S_2)$.

\begin{theorem}\label{consistency_effectmeasure}
    As $\min(n_1,n_2)\to \infty,$ the following convergence holds (outer) almost surely:
    $$\hat{\Delta}_{\tau}(S_1, S_2)\to \Delta_{\tau}(S_1, S_2).$$
\end{theorem}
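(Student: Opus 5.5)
The plan is to reduce the statement to the uniform strong consistency of the Kaplan--Meier estimator on $[0,\tau]$, and then pass this through the $L^1$-functional, which is Lipschitz with respect to the supremum norm.

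\emph{Step 1: uniform consistency.} For each $j=1,2$ we invoke the classical result that, since $P(X_{ij}\geq\tau)>0$,
\[
\sup_{t\in[0,\tau]}|\hat S_j(t)-S_j(t)|\to 0
\]
(outer) almost surely as $n_j\to\infty$. This holds without any continuity assumption on $S_j$ or $G_j$ (cf.\ \citet{andersen1993statistical}, or Stute and Wang's strong law for Kaplan--Meier integrals). If we want a self-contained argument, we would proceed as follows.
\begin{itemize}
\item By Glivenko--Cantelli, $Y_j/n_j$ and $N_j/n_j$ converge uniformly a.s.\ on $[0,\tau]$ to $y_j(t)=P(X_{1j}\geq t)$ and $P(X_{1j}\leq t,\delta_{1j}=1)$.
\item Since $y_j(\tau)>0$, the Nelson--Aalen estimator $\int_0^t dN_j/Y_j$ converges uniformly to the cumulative hazard $A_j$. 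This follows from continuity of the map $(N,Y)\mapsto\int dN/Y$ when $Y$ is bounded away from zero, combined with an integration-by-parts bound.
\item Applying the product integral, which is a continuous functional in the supremum norm on functions of bounded variation with uniformly bounded variation, gives $\hat S_j\to S_j$ uniformly.
\end{itemize}
The measurability caveat ("outer") enters only because the suprema over uncountable index sets need not be measurable. It is handled by noting that both $\hat S_j$ and $S_j$ are càdlàg, so the supremum may be taken over a countable dense set plus jump points.

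\emph{Step 2: Lipschitz bound.} By the reverse triangle inequality, $\bigl||a|-|b|\bigr|\leq|a-b|$, we get
\[
|\hat\Delta_\tau(S_1,S_2)-\Delta_\tau(S_1,S_2)|\leq \frac1\tau\int_0^\tau\bigl(|\hat S_1-S_1|+|\hat S_2-S_2|\bigr)dt\leq \sum_{j=1}^2\sup_{t\in[0,\tau]}|\hat S_j(t)-S_j(t)|.
\]
By Step 1, both suprema vanish a.s. as $\min(n_1,n_2)\to\infty$. The intersection of the two probability-one events still has probability one, so the right-hand side tends to zero (outer) almost surely.

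\emph{Main obstacle.} The only nontrivial ingredient is Step 1 in the generality claimed, namely possibly discontinuous $S_j$ and $G_j$ with ties, up to the endpoint $\tau$ included. I would first try to simply cite the known uniform strong consistency of Kaplan--Meier on $[0,\tau]$ when $y_j(\tau)>0$. Only if a citation for the discontinuous case is unavailable would I carry out the Nelson--Aalen/product-integral route sketched above. The key point there is that the denominators $Y_j/n_j\geq y_j(\tau)/2>0$ eventually a.s., which makes all maps involved uniformly continuous.

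Note also that the Kaplan--Meier formula in the paper has the denominator $Y(s)$, which should read $Y_j(s)$; this is immaterial for the argument.
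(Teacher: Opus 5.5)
Your proposal is correct and follows essentially the same route as the paper. The paper obtains outer almost sure uniform convergence of $(\hat S_1,\hat S_2)$ by viewing the Kaplan--Meier estimators as continuous functionals of empirical processes indexed by VC-classes (citing \citet{liu2020resampling}) together with Glivenko--Cantelli, which is what your Nelson--Aalen/product-integral sketch makes explicit; it then concludes via the continuous mapping theorem and Lemma~\ref{continous_integral}, whose proof is precisely your Lipschitz bound $|\Psi(f_n)-\Psi(f)|\le\sup_t|f_n(t)-f(t)|$, so your direct inequality simply replaces the appeal to the continuous mapping theorem.
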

For constructing a test for equivalence, we need to analyze the limiting distribution of the estimator. To this end, we impose that no (relative) group size vanishes asymptotically:

    \begin{equation} \label{groupassumption}
    \frac{n_1}{n} \to \kappa^{(1)} \in (0,1),\quad \frac{n_2}{n} \to \kappa^{(2)} = 1 - \kappa^{(1)} \quad \text{as}\quad \min(n_1,n_2)\to \infty.
\end{equation}
Subsequently, we denote by $\xrightarrow{d}$ convergence in distribution.
\begin{theorem}\label{limitABCgeneral}
    Suppose assumption \eqref{groupassumption} holds and let $W_1, W_2$ be two independent standard Wiener processes. Then we have, as $\min(n_1,n_2)\to \infty$,
    \begin{equation*}
    \begin{split}
        \sqrt{n}(\hat{\Delta}_{\tau}(S_1, S_2)-\Delta_{\tau}(S_1, S_2))\xrightarrow{d} &D_{\Delta_{\tau}}\coloneqq\frac{1}{\tau}\int_{\{t\in[0,\tau]:S_1(t)=S_2(t) \}}|\mathbb{G}(t)|dt\\
                    &+
        \frac{1}{\tau}\int_{\{t\in[0,\tau]:S_1(t)\ne S_2(t)\}}\operatorname{sign}(S_1(t)-S_2(t))\mathbb{G}(t)dt
    \end{split}
    \end{equation*}
    with $\mathbb{G}(t)=[S_1(t)W_1(\Gamma_1(t)) + S_2(t)W_2(\Gamma_2(t))]$
    and $\Gamma_j(t)=\frac{1}{\kappa_j}\int_0^t\frac{1}{(1-\Delta A_j(s))S_j(s-)G_j(s-)}dA_j(s).$
\end{theorem}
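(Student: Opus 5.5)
The plan is to use the functional delta method, in its version for Hadamard directionally differentiable maps (Shapiro / Fang–Santos / Römisch). Two ingredients are needed: weak convergence of the joint Kaplan–Meier process, and directional differentiability of the $L^1$ map.

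\textbf{Step 1: joint weak convergence of the Kaplan--Meier processes.} By the classical Kaplan--Meier weak convergence theorem for possibly discontinuous $S_j$ on $[0,\tau]$ with $P(X_{ij}\ge\tau)>0$ (e.g.\ Andersen et al., Theorem IV.3.2, or van der Vaart--Wellner via Hadamard differentiability of the product integral), we have $\sqrt{n_j}(\hat S_j-S_j)\xrightarrow{d} S_j\,\tilde W_j(\tilde\Gamma_j)$ in $D[0,\tau]$, where $\tilde\Gamma_j(t)=\int_0^t \{(1-\Delta A_j)S_j(s-)G_j(s-)\}^{-1}dA_j(s)$. Multiplying by $\sqrt{n/n_j}\to (\kappa^{(j)})^{-1/2}$ under \eqref{groupassumption} absorbs the factor $1/\kappa_j$ into $\Gamma_j$, using the time-scaling of Brownian motion. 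Independence of the two samples gives joint convergence. Hence
\[
\sqrt n\big((\hat S_1-\hat S_2)-(S_1-S_2)\big)\xrightarrow{d}\mathbb G
\]
in $\ell^\infty[0,\tau]$. The sign convention is immaterial because $W_2 \overset{d}{=} -W_2$. The limit $\mathbb G$ is tight and has a separable range.

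\textbf{Step 2: directional differentiability of the $L^1$ map.} Consider $\phi:\ell^\infty[0,\tau]\to\mathbb R$ defined by $\phi(f)=\tau^{-1}\int_0^\tau|f(t)|\,dt$, restricted to bounded measurable functions (càdlàg differences suffice). Fix $f=S_1-S_2$, $h_k\to h$ uniformly, and $t_k\downarrow0$. Pointwise,
\[
\frac{|f+t_kh_k|-|f|}{t_k}\to \operatorname{sign}(f)h\,\mathbbm 1\{f\ne0\}+|h|\,\mathbbm 1\{f=0\}.
\]
On $\{f=0\}$ this quotient equals $|h_k|\to|h|$. On $\{f\neq0\}$, for $k$ large enough that $t_k\|h_k\|<|f(t)|$, it equals $\operatorname{sign}(f)h_k$. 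The quotients are bounded by $\|h_k\|_\infty\le C$, so dominated convergence on the finite interval gives convergence of the integrals to $\phi'_f(h)$, which is exactly the functional appearing in $D_{\Delta_\tau}$. The derivative $\phi'_f$ is continuous and positively homogeneous but nonlinear, so the limit is possibly non-normal.

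\textbf{Step 3: apply the delta method and identify the limit.} The delta method for Hadamard directionally differentiable maps then yields $\sqrt n(\phi(\hat S_1-\hat S_2)-\phi(S_1-S_2))\xrightarrow{d}\phi'_f(\mathbb G)$, which is the stated $D_{\Delta_\tau}$. A measurability caveat: $\mathbb G$ has measurable paths, since it is a Gaussian process with càdlàg-type structure via $W_j\circ\Gamma_j$, so the integrals in $D_{\Delta_\tau}$ are well defined.

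\textbf{Main obstacle.} The hard step is verifying tangential (uniform-in-direction) Hadamard directional differentiability pointwise on $\{f\ne0\}$ when $f$ may be arbitrarily close to zero. The argument above avoids any uniform lower bound on $|f|$ by using pointwise convergence plus domination, which is the step to check carefully. A secondary point is the precise form of the Kaplan--Meier limit in the discontinuous case, which I would take from the literature rather than re-derive.
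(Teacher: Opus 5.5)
Your proposal is correct and follows the paper's proof. Both take weak convergence of $\sqrt n((\hat S_1-\hat S_2)-(S_1-S_2))$ to $\mathbb G$ in $D[0,\tau]$ from the Kaplan--Meier literature, use directional Hadamard differentiability of $\Psi$ with the stated derivative (Lemma~\ref{directional_diff_L1}), and apply Shapiro's delta method for directionally differentiable functionals. The only difference is in the differentiability step. The paper factors $\Psi$ through the injection $D[0,\tau]\to L^1[0,\tau]$ and treats $L^1$-convergent directions via an integrable envelope $\sup_n|h_n|$, whereas you work directly with uniformly convergent directions and the constant bound $\sup_k\|h_k\|_\infty$; your version is simpler and fully sufficient, since the Kaplan--Meier processes converge in sup norm.
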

Here $A_j(t)=-\int_0^t\frac{dS_j(s)}{S_j(s-)}$ denotes the group-specific cumulative hazard function, and $\operatorname{sign}(x)=1\{x>0\}-1\{x<0\}$ is the signum function. This limiting distribution is a mixture of a non-Gaussian and a Gaussian random variable, with in general unknown mixture weights $m_1(t)=\frac1\tau\mathbbm{1}\{S_1(t)= S_2(t)\}$ and $m_2(t)=\frac1\tau\mathbbm{1}\{S_1(t)\neq S_2(t)\}$. If one is willing to make the assumption that $S_1$ and $S_2$ only coincide on a set of Lebesgue measure zero, the first summand of the limiting random variable vanishes. In this case $D_{\Delta_{\tau}}$ is Gaussian, and drawing inferences would simplify considerably. However, we do not make this assumption as it is unrealistic in many medical applications; furthermore, it can not be statistically tested. 

The limiting distribution of $D_{\Delta_{\tau}}$ can not directly be used for developing inference procedures, as it contains unknown quantities, i.e., the unknown survival functions of event and censoring distributions as well as the mixture weights. Nonetheless, denoting by $q_{\alpha}$ the $\alpha$-quantile of $D_{\Delta_{\tau}}$ we can define the one sided ``oracle'' confidence interval
\begin{equation*}\label{orableinterval}
    \left[0,\hat{\Delta}_{\tau}(S_1, S_2)-\frac{q_{\alpha}}{\sqrt{n}}\right)
\end{equation*}
which contains $\Delta_{\tau}(S_1, S_2)$ with probability converging to $1-\alpha.$ Based on this interval, we obtain a test for our hypothesis of equivalence \eqref{equivalencehypothesis}:
\begin{equation}\label{oracletest}
    \varphi_{n,\alpha}=\mathbbm{1}\left\{\hat{\Delta}_{\tau}(S_1, S_2)-\varepsilon\leq\frac{q_{\alpha}}{\sqrt{n}} \right\}.
\end{equation}
The following theorem establishes that \eqref{oracletest} defines a consistent and asymptotic level $\alpha$ test.

\begin{theorem}\label{oracletest_limts}
    Suppose assumption \eqref{groupassumption} holds. Then as $\min(n_1,n_2)\to \infty$
    \begin{enumerate}
        \item[(a)] If $\Delta_{\tau}= \varepsilon$, then $\lim_{n\to\infty} E(\varphi_{n,\alpha})=\alpha.$
        \item[(b)] If $\Delta_{\tau}> \varepsilon$, then $\lim_{n\to\infty} E(\varphi_{n,\alpha})=0.$
        \item[(c)] If $\Delta_{\tau}< \varepsilon$, then $\lim_{n\to\infty} E(\varphi_{n,\alpha})=1.$
    \end{enumerate}
\end{theorem}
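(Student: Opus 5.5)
The plan is to rewrite the rejection event in terms of the normalized statistic $T_n\coloneqq\sqrt{n}(\hat{\Delta}_{\tau}(S_1,S_2)-\Delta_{\tau}(S_1,S_2))$ and then apply Theorem~\ref{limitABCgeneral}. Write $\Delta_\tau=\Delta_\tau(S_1,S_2)$. Then
\begin{equation*}
E(\varphi_{n,\alpha})=P\left(T_n\leq q_\alpha+\sqrt{n}(\varepsilon-\Delta_\tau)\right).
\end{equation*}
Each of the three cases then follows from the convergence $T_n\xrightarrow{d}D_{\Delta_\tau}$ together with the behaviour of the deterministic shift $\sqrt{n}(\varepsilon-\Delta_\tau)$, which is $0$, tends to $-\infty$, or tends to $+\infty$.

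For (a), the shift is zero, so $E(\varphi_{n,\alpha})=P(T_n\le q_\alpha)$. By the Portmanteau theorem this converges to $P(D_{\Delta_\tau}\le q_\alpha)=\alpha$, provided $q_\alpha$ is a continuity point of the distribution function $F$ of $D_{\Delta_\tau}$ and $F(q_\alpha)=\alpha$. Establishing this is the main obstacle, because $D_{\Delta_\tau}$ need not be Gaussian.

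To handle it, I would note that $D_{\Delta_\tau}=L(\mathbb{G})$ is a continuous convex functional of the centred Gaussian process $\mathbb{G}$: it is convex because $|\cdot|$ and linear terms are convex, and it is a sum of Lipschitz functionals on $L^1[0,\tau]$. A result of Tsirelson-type, or the general fact that convex (Lipschitz) functionals of Gaussian measures have distribution functions that are continuous except possibly at the left endpoint of the support, gives continuity of $F$ wherever it matters. The atom can only occur at the infimum of the support. Since $0<\alpha<1$, the quantile $q_\alpha$ lies above any such atom, except in degenerate cases. These degenerate cases include $\mathbb{G}\equiv0$ almost everywhere on $[0,\tau]$, which could happen only if $A_1,A_2$ vanish on $[0,\tau]$. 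In such cases I would exclude the situation by assuming a nondegenerate limit, or define the test with randomization. So I would state the argument with the continuity of $F$ at $q_\alpha$ as the key lemma, proved via the Gaussian-convexity result, and then apply Portmanteau.

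For (b), fix $M>0$. For large $n$ the threshold satisfies $q_\alpha+\sqrt{n}(\varepsilon-\Delta_\tau)\le -M$, so $\limsup_n E(\varphi_{n,\alpha})\le \limsup_n P(T_n\le -M)\le P(D_{\Delta_\tau}\le -M)$ by Portmanteau applied to the closed set $(-\infty,-M]$. Letting $M\to\infty$ gives the limit $0$; this step uses only tightness of $T_n$, which weak convergence provides.

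For (c), symmetrically, the threshold tends to $+\infty$, so $E(\varphi_{n,\alpha})\ge P(T_n\le M)\to P(D_{\Delta_\tau}\le M)$ at continuity points $M$. Letting $M\to\infty$ gives the limit $1$. Neither (b) nor (c) needs any property of $q_\alpha$ beyond its finiteness.
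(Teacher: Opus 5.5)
Your route is the same as the paper's. The paper derives the theorem from Theorem~\ref{limitABCgeneral} together with its general Lemma~\ref{behavior_general_tests}. Your Portmanteau argument for (a) and your tightness-plus-diverging-shift argument for (b) and (c) are exactly that lemma, specialized to the deterministic critical value $q_\alpha$. The paper states the lemma for random $F_n$ only so that it also covers the subsampling and bootstrap tests. Parts (b) and (c) are complete as written.

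The real difference is how continuity of $F$ at $q_\alpha$ is justified.
\begin{itemize}
\item The paper only asserts that $D_{\Delta_\tau}$, ``as a mixture of two continuous distributions'', has a continuous distribution function. This is loose: the two summands are dependent functionals of the same process $\mathbb{G}$, so $D_{\Delta_\tau}$ is a sum rather than a mixture, and sums of dependent continuous variables can have atoms.
\item Your argument writes $D_{\Delta_\tau}=L(\mathbb{G})$ with $L$ convex and Lipschitz. By Ehrhard's inequality, $\Phi^{-1}\circ F$ is then concave, with $\Phi$ the standard normal distribution function. Hence $F$ is continuous except possibly at the left endpoint $r_0$ of its support. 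This is the sounder justification.
\end{itemize}

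One inference in your version does not follow: $q_\alpha$ need not lie above every atom just because $0<\alpha<1$. If there is an atom of mass $p_0$ at $r_0$, then for every $\alpha<p_0$ we get $q_\alpha=r_0$ and $F(q_\alpha)=p_0\neq\alpha$. So (a) for every $\alpha$ requires $P(D_{\Delta_\tau}=r_0)=0$. You can narrow this down in two ways.
\begin{itemize}
\item $L$ is positively homogeneous, and $L(h)+L(-h)=\frac{2}{\tau}\int_{\{S_1=S_2\}}|h|\geq 0$. So the image under $L$ of the (linear) support of $\mathbb{G}$ is $\{0\}$, $[0,\infty)$ or $\mathbb{R}$. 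Hence $r_0\in\{-\infty,0\}$, and only an atom at $0$ is possible.
\item The degenerate case you single out, $\mathbb{G}\equiv 0$, forces $A_1=A_2=0$ on $[0,\tau)$. Then $S_1=S_2\equiv 1$ there and $\Delta_\tau=0<\varepsilon$. So this case never arises under (a), and you need neither an extra assumption nor randomization for it.
\end{itemize}

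What remains open is a proof that $P(D_{\Delta_\tau}=0)=0$ when $r_0=0$ and $\lambda(\{S_1\neq S_2\})>0$. The paper's one-line justification asserts this rather than proves it. Your proposal is therefore at least as complete as the paper's proof, but this is the step a fully rigorous write-up would still have to supply.
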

\begin{remark}
    Based on Theorem \ref{limitABCgeneral} two- and left-sided confidence intervals for $\Delta_{\tau}(S_1, S_2)$ can be obtained as well. By using the latter, a test for the goodness-of-fit hypothesis of equal survival functions can be defined, which leads to the test described in \citet{liu2020resampling}.
\end{remark}
In order to obtain a feasible hypothesis test, we need methods to approximate the unknown quantile $q_{\alpha}$. We propose resampling-based approaches, as explained in the next section.

\section{Resampling-based tests}
\subsection{Subsampling}\label{sec:subsample}
A very general resampling approach is the subsampling procedure introduced originally for one-sample statistics by \citet{politis1994large}. The general idea is to recompute the statistic of interest on a subsample drawn from the original observations without replacement, and to use the resulting discrete distribution as an approximation. We use a general multi-sample extension described in \citet{politis2010k}: we choose integers $1\leq b_j\leq n_j$, representing the size of our subsamples, which fulfill the following convergence rate condition:
\begin{equation}\label{subsamplingassumption}
    b_j/n_j\to0 \quad\text{as}\quad n_j\to \infty \quad\text{with}\quad b_j\to\infty.
\end{equation}
 Let $Z_{1j},\dots,Z_{N_j}$ be equal to the $N_j=\binom{n_j}{b_j}$ possible unordered subsets of $(X_{1j}, \delta_{1j}), \dots, (X_{n_jj}, \delta_{n_jj}).$ Denote by $\hat{S}_{l_jj}$ the Kaplan-Meier estimator computed from the subset $Z_{l_j}.$ Then the subsampling distribution of $\hat{\Delta}_{\tau}(S_1, S_2)$ is defined as
\begin{equation*}
    L_{n,b}(x)=\frac{1}{\binom{n_1}{b_1}\binom{n_2}{b_2}}\sum_{l_1=1}^{\binom{n_1}{b_1}}\sum_{l_2=1}^{\binom{n_2}{b_2}}\mathbbm{1}\left\{\sqrt{b}\left(\frac{1}{\tau}\int_0^{\tau}|\hat{S}_{l_11}(t)-\hat{S}_{l_22}(t)|dt-\frac{1}{\tau}\int_0^{\tau}|\hat{S}_1(t)-\hat{S}_2(t)|dt\right)\leq x\right\}
\end{equation*}
with $b=b_1+b_2.$ By Theorem 3.1 in \citet{politis2010k} $L_{n,b}$ is a consistent estimator for the distribution of $D_{\Delta_{\tau}},$ which leads to the one-sided asymptotic $1-\alpha$ confidence interval
\begin{equation*}
    \left[0,\hat{\Delta}_{\tau}(S_1, S_2)-\frac{L^{-1}_{n,b}(\alpha)}{\sqrt{n}}\right)
\end{equation*}
where $L^{-1}_{n,b}(\alpha)$ is the $\alpha$-quantile of $L_{n,b}$, considered as a (random) distribution function. The corresponding hypothesis test for the hypothesis of equivalence is given by
\begin{equation*}\label{subsampletest}
    \varphi_{n,b,\alpha}^S=\mathbbm{1}\left\{\hat{\Delta}_{\tau}(S_1, S_2)-\varepsilon\leq\frac{L^{-1}_{n,b}(\alpha)}{\sqrt{n}} \right\}.
\end{equation*}
This test has the same desirable properties as the oracle test $\varphi_{n,\alpha},$ as demonstrated here:
\begin{theorem}\label{subsampletest_limits}
    Suppose assumptions \eqref{groupassumption} and \eqref{subsamplingassumption} hold. Then as $\min(n_1,n_2)\to \infty$
    \begin{enumerate}
        \item[(a)] If $\Delta_{\tau}= \varepsilon$, then $\lim_{n\to\infty} E(\varphi_{n,b,\alpha}^S)=\alpha.$
        \item[(b)] If $\Delta_{\tau}> \varepsilon$, then $\lim_{n\to\infty} E(\varphi_{n,b,\alpha}^S)=0.$
        \item[(c)] If $\Delta_{\tau}< \varepsilon$, then $\lim_{n\to\infty} E(\varphi_{n,b,\alpha}^S)=1.$
    \end{enumerate}
\end{theorem}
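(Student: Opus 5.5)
The plan is to reduce Theorem \ref{subsampletest_limits} to Theorem \ref{oracletest_limts}. The key intermediate claim is that $L^{-1}_{n,b}(\alpha)\to q_\alpha$ in probability. Given this, the subsampling test differs from the oracle test $\varphi_{n,\alpha}$ only through a critical value that is consistent for $q_\alpha$.

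\textbf{Step 1: consistency of the subsampling quantile.} By Theorem \ref{limitABCgeneral}, under \eqref{groupassumption}, $\sqrt{n}(\hat{\Delta}_{\tau}-\Delta_{\tau})\xrightarrow{d}D_{\Delta_{\tau}}$. This is exactly the input required by Theorem 3.1 of \citet{politis2010k}: a multi-sample statistic with a nondegenerate limit at rate $\sqrt{n}$, independent i.i.d.\ samples, and subsample sizes satisfying \eqref{subsamplingassumption}. That theorem gives $\sup_x |L_{n,b}(x)-P(D_{\Delta_\tau}\le x)|\to 0$ in probability at continuity points, and uniformly if the limit distribution function is continuous.

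A small point is that $b_1/b$ must approach the same $\kappa^{(1)}$ so that the subsample limit coincides with $D_{\Delta_\tau}$. I would either assume this implicitly or note that the subsample limit is then $D_{\Delta_\tau}$ with the same $\kappa$. The quantile convergence then follows from the standard argument, provided the distribution function $F$ of $D_{\Delta_\tau}$ is continuous and strictly increasing at $q_\alpha$.

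\textbf{Step 2: continuity of $F$ at $q_\alpha$.} I expect this to be the main obstacle. If $\{S_1\neq S_2\}$ has positive Lebesgue measure, then $D_{\Delta_\tau}$ contains a nondegenerate Gaussian part. More precisely, $D_{\Delta_\tau}$ is a convex functional $\int m_1|\mathbb{G}|+\int m_2\operatorname{sign}(S_1-S_2)\mathbb{G}$ of the Gaussian process $\mathbb{G}$. Continuity of its law, away from the possible atom at $0$ when $m_2\equiv 0$, follows from results on laws of convex/norm-type functionals of Gaussian processes (Tsirel'son-type theorems). In the degenerate case $\Delta_\tau=0$ we have $\Delta_\tau<\varepsilon$, which is covered separately in (c). In case (a), $\Delta_\tau=\varepsilon>0$, so $m_2\not\equiv0$ and a nondegenerate linear Gaussian component is present. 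I would then verify that $q_\alpha$ is a continuity point with $F$ strictly increasing there. For $\alpha<1/2$, $q_\alpha<0$ lies in the support, which is an interval.

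\textbf{Step 3: the three cases.} Write $E\varphi^S_{n,b,\alpha}=P\big(\sqrt{n}(\hat\Delta_\tau-\Delta_\tau)+\sqrt{n}(\Delta_\tau-\varepsilon)\le L^{-1}_{n,b}(\alpha)\big)$.

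In case (a) the drift term vanishes. By Slutsky and Step 1, $\sqrt{n}(\hat\Delta_\tau-\Delta_\tau)-L^{-1}_{n,b}(\alpha)\xrightarrow{d}D_{\Delta_\tau}-q_\alpha$, so the probability tends to $F(q_\alpha)=\alpha$ by continuity.

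In case (b) the drift tends to $+\infty$. The first term is tight and $L^{-1}_{n,b}(\alpha)=O_P(1)$, so the probability tends to $0$.

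In case (c) the drift tends to $-\infty$, and the probability tends to $1$. Here only tightness of $L^{-1}_{n,b}(\alpha)$ is needed, not continuity of $F$. When $\Delta_\tau=0$ the subsampling distribution still converges weakly to $D_{\Delta_\tau}=\frac1\tau\int|\mathbb{G}|$, whose $\alpha$-quantile is finite, so $L^{-1}_{n,b}(\alpha)$ remains bounded in probability.
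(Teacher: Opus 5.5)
Your proposal is correct and follows the paper's proof: the paper also invokes Theorem~3.1 of \citet{politis2010k} to get consistency of $L_{n,b}$ for the law of $D_{\Delta_{\tau}}$ under both hypotheses, and then applies Lemma~\ref{behavior_general_tests}, whose proof is your Slutsky/drift argument in Step~3; your observation that one needs $b_1/b\to\kappa^{(1)}$ is a legitimate refinement that the paper passes over. The one slip is the aside that $q_\alpha<0$ for $\alpha<1/2$, which can fail because the nonnegative term $\frac{1}{\tau}\int_{\{S_1=S_2\}}|\mathbb{G}(t)|\,dt$ shifts mass to the right; it is not needed, though, since every $\alpha$-quantile lies in the support, and the monotonicity-in-$\alpha$ sandwich in the proof of Lemma~\ref{behavior_general_tests} removes any need for strict increase of $F$ at $q_\alpha$.
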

A natural choice for the subsampling rates would be $b_j=n_j^{\gamma}$ for $\gamma\in (0,1)$ and the question arises, whether an optimal $\gamma$ exists. To the best of our knowledge no definitive answer has been found to date, but heuristic arguments of \citet{politis1994large} suggest that the choice $b_j=n_j^{2/3}$ is first-order optimal in many situations.

\subsection{Modified bootstrap}\label{sec:modified_boot}
As an alternative to the previous subsampling approach, we consider a modified bootstrap approach of \citet{fang2019inference} based on a generalization of the Delta method to functionals that are only \textit{directionally} Hadamard-differentiable. To elaborate, denote by $D[0,\tau]$ the space of càdlàg functions which we equip with the supremum norm and define the mapping $\Psi(f)=\frac{1}{\tau}\int_0^{\tau}|f(t)|dt$ from $D[0,\tau]$ to $\mathbb{R}.$ It is demonstrated in the appendix that $\Psi$ is directionally Hadamard-differentiable, cf.\ Definition~2.2 in \citet{shapiro1990concepts}, with derivative 
$$\Psi'_{f}(h)=\frac{1}{\tau}\int_{\{f=0\}}|h(x)|dx+\frac{1}{\tau}\int_{\{f\ne0\}}\operatorname{sign}(f(x))h(x)dx$$
and $f,h\in D[0,\tau].$ 
If the derivative $\Psi'_{f}(h)$ would be linear, which by Proposition 2.1 of \citet{fang2019inference} is equivalent to ordinary Hadamard differentiability of $\Psi$, the Delta method for the bootstrap (Theorem 3.10.11 of \citet{vanderVaartWellner2023}) would imply that the distribution of $D_{\Delta_{\tau}}$ can be consistently estimated by the conditional distribution of
\begin{equation}\label{naive_boot}
   \sqrt{n}(\Psi(\hat{S}_1^{(B)}-\hat{S}_2^{(B)})-\Psi(\hat{S}_1-\hat{S}_2))
\end{equation}
given the data. Here $\hat{S}_j^{(B)}$ denotes the Kaplan-Meier estimator computed from a bootstrap sample
\\
$ (X^{(B)}_{1j}, \delta^{(B)}_{1j}),$ $\dots, (X^{(B)}_{n_jj}, \delta^{(B)}_{n_jj})$ which is obtained by drawing group-wise with replacement from the original sample. As the process $\mathbb{G}$ is Gaussian, it follows by an application of Corollary 3.1 of \citet{fang2019inference}, that the bootstrap fails in general in our case, because $\Psi'_{S_1-S_2}(h)$ is linear if and only if $S_1$ and $S_2$ coincide on a set of Lebesgue measure zero. As we do not make this assumption in advance, bootstrap validity depends on the unknown event time distributions, which is undesirable.

As a consistent alternative, \citet{fang2019inference} propose a modified bootstrap procedure, based on an estimator for the directional derivative. This approach was also used by \citet{bastian2024comparing} in the context of comparing regression functions in terms of the $L^1$ distance. Adapting their estimator to the scaled distance $\Delta_{\tau}(S_1,S_2),$ we define
\begin{equation}\label{estimated_derivative}
    \hat{\Psi}_n'(h)=\frac{1}{\tau}\int_{\{t\in[0,\tau]:|\hat{S}_1(t)-\hat{S}_2(t)|\leq \frac{1}{c_n} \}}|h(t)|dt +\frac{1}{\tau}\int_{\{t\in[0,\tau]:|\hat{S}_1(t)-\hat{S}_2(t)|> \frac{1}{c_n} \}}\operatorname{sign}(\hat{S}_1(t)-\hat{S}_2(t))h(t)dt
\end{equation}
for a sequence of constants $c_n$ which also fulfill a rate condition:
\begin{equation}\label{boot_ratecondition}
    c_n\to\infty \quad\text{with}\quad c_n/\sqrt{n}\to0 \quad\text{as}\quad \min(n_1,n_2)\to\infty.
\end{equation}
Based on this estimator we obtain the modified bootstrap approximation
\begin{equation}\label{modboot-approximation}
    \hat{\Psi}'_n(\sqrt{n}((\hat{S}_1^{(B)}-\hat{S}_2^{(B)}) -(\hat{S}_1-\hat{S}_2)))
\end{equation}
which consistently estimates the distribution of $D_{\Delta_{\tau}}$ given the data. If we denote by $q^{B}_{n,\alpha}$ the conditional $\alpha$-quantile of \eqref{modboot-approximation}, we obtain the equivalence test
\begin{equation*}
        \varphi_{n,\alpha}^B=\mathbbm{1}\left\{\hat{\Delta}_{\tau}(S_1, S_2)-\varepsilon\leq\frac{q^{B}_{n,\alpha}}{\sqrt{n}} \right\},
\end{equation*}
which again shares its properties with the oracle test \eqref{oracletest}.
\begin{theorem}\label{boottest_limits}
    Suppose assumptions \eqref{groupassumption} and \eqref{boot_ratecondition} hold. Then as $\min(n_1,n_2)\to \infty$
    \begin{enumerate}
        \item[(a)] If $\Delta_{\tau}= \varepsilon$, then $\lim_{n\to\infty} E(\varphi_{n,\alpha}^B)=\alpha.$
        \item[(b)] If $\Delta_{\tau}>\varepsilon$, then $\lim_{n\to\infty} E(\varphi_{n,\alpha}^B)=0$
        \item[(c)] If $\Delta_{\tau}< \varepsilon$, then $\lim_{n\to\infty} E(\varphi_{n,\alpha}^B)=1.$
    \end{enumerate}
\end{theorem}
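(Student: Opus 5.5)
The proof rests on a single ingredient, the conditional consistency of the modified bootstrap quantile $q^B_{n,\alpha}$ for $q_\alpha$. Once that is available, parts (a)--(c) follow from Theorem \ref{limitABCgeneral} exactly as for the oracle test in Theorem \ref{oracletest_limts}.

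The first step is to verify the conditions of Theorem 3.2 of \citet{fang2019inference}; I would use it in the version for directionally Hadamard differentiable maps with an estimated derivative. Write $\theta=S_1-S_2$ and $\hat\theta_n=\hat S_1-\hat S_2$, viewed as elements of $D[0,\tau]$.
\begin{itemize}
\item[(i)] The weak convergence $\sqrt n(\hat\theta_n-\theta)\rightsquigarrow\mathbb G$ is the classical Kaplan--Meier functional CLT under \eqref{groupassumption}; it is the same step that underlies Theorem \ref{limitABCgeneral}.
\item[(ii)] Conditional weak convergence of the bootstrap process $\sqrt n(\hat\theta^{(B)}_n-\hat\theta_n)$ to $\mathbb G$ in outer probability follows from the bootstrap for Kaplan--Meier (e.g.\ \citet{vanderVaartWellner2023}, Theorem 3.10.11 combined with Hadamard differentiability of the product-integral).
\item[(iii)] The derivative estimator $\hat\Psi'_n$ must satisfy Assumption 4 of Fang and Santos. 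That is, for every compact $K\subset D[0,\tau]$ and every $\eta>0$,
\[
P\Bigl(\sup_{h\in K}|\hat\Psi'_n(h)-\Psi'_\theta(h)|>\eta\Bigr)\to0.
\]
\end{itemize}

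Condition (iii) is the main obstacle, and I would prove it as follows. Because $|\operatorname{sign}(x)h-|h||\le 2|h|$, we get
\[
\sup_{h\in K}|\hat\Psi'_n(h)-\Psi'_\theta(h)|\le \frac{2}{\tau}\sup_{h\in K}\|h\|_\infty\,\lambda(E_n),
\]
where $E_n$ is the set on which the two indicator partitions disagree, together with the set where both are nonzero but the signs of $\hat\theta_n$ and $\theta$ differ. The Kaplan--Meier process is uniformly $O_P(n^{-1/2})$, i.e.\ $\|\hat\theta_n-\theta\|_\infty=O_P(n^{-1/2})$. Together with $c_n/\sqrt n\to0$ this gives $c_n\|\hat\theta_n-\theta\|_\infty\to0$ in probability. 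On that event, for points with $\theta(t)=0$ we have $|\hat\theta_n(t)|\le 1/c_n$, so these points are correctly classified. A point with $\theta(t)\neq0$ is misclassified or has the wrong sign only if $0<|\theta(t)|\le 2/c_n$. Hence $\lambda(E_n)\le\lambda\{t:0<|\theta(t)|\le 2/c_n\}$, which tends to $0$ by continuity of measure because $c_n\to\infty$. Finally, $\sup_K\|h\|_\infty<\infty$ by compactness, which completes (iii).

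With (i)--(iii) established, Theorem 3.2 of \citet{fang2019inference} gives that the conditional law of \eqref{modboot-approximation} converges in outer probability, in bounded Lipschitz distance, to the law of $\Psi'_\theta(\mathbb G)=D_{\Delta_\tau}$. The second step is to pass from this to quantiles. If the distribution function of $D_{\Delta_\tau}$ is continuous and strictly increasing at $q_\alpha$, then $q^B_{n,\alpha}\to q_\alpha$ in probability (cf.\ Lemma 21.2 in van der Vaart).

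Continuity at $q_\alpha$ requires a separate argument, which I would handle by cases on $\lambda\{\theta=0\}$.
\begin{itemize}
\item If $\lambda\{\theta=0\}=0$, then $D_{\Delta_\tau}$ is a centered Gaussian variable. It is nondegenerate because $\operatorname{Var}\mathbb G(t)>0$ for $t>0$ with $A_j(t)>0$.
\item Otherwise $D_{\Delta_\tau}$ is the sum of an absolutely integrated Gaussian process and a linear Gaussian functional. It is a convex, Lipschitz functional of a Gaussian process, and Tsirelson's theorem shows that such a functional has a continuous distribution except possibly at the left endpoint of its support. That endpoint lies below $q_\alpha$ for $\alpha\in(0,1/2)$.
\end{itemize}
If necessary, I would simply impose continuity of the law of $D_{\Delta_\tau}$ at $q_\alpha$ as a mild regularity condition, as in the oracle theorem.

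The third step is to conclude. Write
\[
E\varphi^B_{n,\alpha}=P\bigl(\sqrt n(\hat\Delta_\tau-\Delta_\tau)\le q^B_{n,\alpha}+\sqrt n(\varepsilon-\Delta_\tau)\bigr).
\]
In case (a) the shift term is zero, and Slutsky's lemma with Theorem \ref{limitABCgeneral} and $q^B_{n,\alpha}\to_P q_\alpha$ gives the limit $P(D_{\Delta_\tau}\le q_\alpha)=\alpha$. In case (b) the shift tends to $-\infty$ while the other terms are $O_P(1)$, so the probability tends to $0$. In case (c) the shift tends to $+\infty$, so the probability tends to $1$.
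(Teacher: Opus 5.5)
Your argument is correct and follows the paper's skeleton. You get conditional weak convergence of \eqref{modboot-approximation} to $D_{\Delta_\tau}$ from Theorem 3.2 of Fang and Santos, then conclude (a)--(c) by the Slutsky/quantile argument that the paper packages as Lemma~\ref{behavior_general_tests}.

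The genuine difference is how you verify the derivative-estimator condition. The paper notes that $\hat\Psi_n'$ is Lipschitz in $h$ uniformly in $n$, so by Remark 3.4 of Fang and Santos pointwise consistency suffices. It proves pointwise consistency in Lemma~\ref{consistency_derivative} by a Fubini argument on $[0,\tau]\times\Omega$ with measurable envelopes:
\begin{itemize}
\item uniform consistency of the Kaplan--Meier estimators handles $\{\theta\neq0,\,|\hat\theta_n|\le 1/c_n\}$;
\item the rate condition \eqref{boot_ratecondition} handles $\{\theta=0,\,|\hat\theta_n|>1/c_n\}$;
\item the sign-mismatch term is delegated to Bastian et al.
\end{itemize}
You instead work on the single event $\{c_n\|\hat\theta_n-\theta\|_\infty\le 1\}$, whose probability tends to one. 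On it, the discrepancy set is deterministically contained in $\{0<|\theta|\le 2/c_n\}$. Sign mismatches cannot occur there at all, since they would force $|\hat\theta_n(t)|\le|\hat\theta_n(t)-\theta(t)|\le 1/c_n$. Your bound depends on $h$ only through $\|h\|_\infty$, so it gives uniform convergence over every bounded set. That covers Fang and Santos's Assumption 4, which is stated over $\delta$-enlargements of compacts (slightly more than the compact-set version you wrote). It does so without the Lipschitz reduction and without the product-measure and measurability machinery. The paper's route is more modular (pointwise consistency plus a Lipschitz bound is a reusable recipe); yours is shorter and self-contained.

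On continuity of the law of $D_{\Delta_\tau}$ you are more careful than the paper. The paper's remark ``a mixture of two continuous distributions'' is not really an argument, since $D_{\Delta_\tau}$ is a sum of two dependent terms. Two loose ends remain in your version, however.

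First, convexity alone does not place the possible atom below $q_\alpha$. By positive homogeneity, the support of $\Psi'_{S_1-S_2}(\mathbb G)$ is either unbounded below, in which case $F$ is continuous everywhere, or it has lower endpoint $0$. In the latter case the atom at $0$ can have mass up to $1/2$. For example, take a one-dimensional $\mathbb G=\xi\phi$ with $\frac1\tau\int_{\{S_1=S_2\}}|\phi|=\frac1\tau\int_{\{S_1\neq S_2\}}\operatorname{sign}(S_1-S_2)\phi=c>0$. Then $D=2c\max(\xi,0)$, and $q_\alpha=0$ sits on the atom for every $\alpha<1/2$. So you need the problem-specific fact that the support is unbounded below, for instance via Cameron--Martin directions concentrated where $S_1\neq S_2$.

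Second, $q^B_{n,\alpha}\to q_\alpha$ also needs $F$ to be strictly increasing at $q_\alpha$, which you do not check. You can avoid this as Lemma~\ref{behavior_general_tests} does: convergence for all but countably many $\alpha$ plus monotonicity in $\alpha$, which needs only continuity. Alternatively, Ehrhard's inequality makes $\Phi^{-1}\circ F$ concave (with $\Phi$ the standard normal distribution function), which gives continuity and strict increase together.

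Finally, adding continuity as an extra hypothesis would not be ``as in the oracle theorem''. Theorem~\ref{oracletest_limts} asserts continuity; it does not assume it.
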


\begin{remark}
    The estimator $\hat{\Psi}_n'(h)$ can be replaced by the numerical derivative
    \begin{equation}\label{num_diff}
        \Tilde{\Psi}_1(S_1,S_2)=\frac{\Psi[(\hat{S}_1-\hat{S}_2)+\varepsilon_n\sqrt{n}((\hat{S}_1^{(B)}-\hat{S}_2^{(B)}) -(\hat{S}_1-\hat{S}_2))]- \Psi(\hat{S}_1-\hat{S}_2)}{\varepsilon_n}
    \end{equation}
    as outlined by \citet{hong2018numerical}. Here $(\varepsilon_n^{-1})_{n \in \mathbb{N}}$ fulfills the same assumptions as $(c_n)_{n\in \mathbb{N}}.$
\end{remark}
\subsection{Approximation errors and possible remedies}\label{sec:corrections_typeI_errors}
The previous results were purely asymptotic. Therefore, the question arises how well the resampling procedures perform for ``realistic''  sample sizes, especially regarding type-I error control of hypothesis tests. In clinical trails the number of observations per treatment group is usually in the low to mid hundreds, which facilitates that such a consideration is essential in order to apply the proposed methods in practice.
We focus on the following two aspects:
\begin{enumerate}
    \item[(i)] Convergence speeds of the Type-I error rate,
    \item[(ii)] direction of the bias in the finite sample Type-I error.
\end{enumerate}
We treat (i) only somewhat heuristically, as a rigorous analysis would require formal expansions of the distribution of our centered test statistic $\sqrt{n}(\hat{\Delta}_{\tau}(S_1, S_2)-\Delta_{\tau}(S_1, S_2)).$ First, under the assumption that $S_1(t)$ equals $S_2(t)$ only on a set of Lebesgue measure zero, the standard bootstrap as well as a normal approximation (which corresponds to the oracle \eqref{oracletest}) are valid, and the arguments in Chapter 23.3 of \citet{vanderVaart1998} suggest, that the Type-I error of the corresponding tests is equal to $\alpha$ up to an error term of order $O_p(n^{-1/2})$ and $O(n^{-1/2})$ respectively, whereas this error term is at most $O_p(n^{-1/3})$ for the test based on subsampling \citep{politis1994large}. For the modified delta method it is possible to choose $c_n=1/\varepsilon_n=n^{1/2}$ in this case, which is asymptotically equivalent to the usual bootstrap (see Remark 3.5 of \citet{fang2019inference} and Section 5.2 of \citet{hong2018numerical}) and therefore the error is $O_p(n^{-1/2})$ as well.

If $S_1$ and $S_2$ coincide on an interval and therefore the test statistic is not asymptotically normal, the situation is less clear. However, we expect that the error terms depend on a combination of the underlying distribution and the chosen tuning parameters $b_j, c_n, \varepsilon_n$ and even an optimal choice might not recover the $O_p(n^{-1/2})$ rate, which is expected to be the best achievable, as we use a statistic which is not asymptotically pivotal, see Section 10 of \citet{politis1999subsampling} and the references therein for a discussion in the one-sample case. As the optimal choice requires knowledge of the underlying distribution, we cannot guarantee any rate of convergence for the Type-I error to $\alpha$, just that the convergence to $\alpha$ holds.

Analyzing (ii) is easiest for the test based on the estimated derivative \eqref{estimated_derivative}, but we expect similar results to hold for the numerical derivative as well as the subsampling approach. It follows from the definition of the limiting random variable $D_{\Delta_{\tau}}$, that it always dominates a normal distribution with the same variance in the usual stochastic order, due to the nonlinear component which is always nonnegative. As the corresponding term in the estimator is usually strictly positive for a fixed sample size, the bootstrap approximations will be too large on average if the true survival functions do not coincide on an interval, furthermore, the bootstrap approximation will have the wrong skewness making the estimation of the lower tail very challenging. Because this lower tail defines our rejection region, the bias in the Type-I error is positive in this case, and it increases the closer $S_1$ and $S_2$ are. To minimize this bias, $c_n$ should be chosen close to $n^{1/2}.$ However, the situation is reversed if $S_1(t)=S_2(t)$ on a large interval. In this case the estimator \eqref{estimated_derivative} might underestimate the limiting distribution, leading to a conservative finite sample behavior of the bootstrap test. This is most prominent if the survival functions are equal on the full interval $[0,\tau]$ in which case the linear part in $D_{\Delta_{\tau}}$ vanishes completely and its support is bounded from below. In this situation a very slowly growing choice for $c_n$ would improve the approximation, which suggests that  a good choice depends on the unknown survival functions and can not be made a priori without additional assumptions. 

In order to improve the finite sample properties of the proposed methods, we focus on possibilities to decrease the positive bias of our approximations, at the cost of possibly losing some power under the alternative. Even though it might be possible to choose the parameters $b_j, c_n, \varepsilon_n$ based on the data (see \citet{goetze2001adaptive} and \citet{bickel1997resampling} for some suggestions), we aim for data independent corrections, which have a stronger theoretical basis without adding additional assumptions. A method which is valid for all considered resampling approaches is to replace the used $\alpha$-quantiles with $\alpha_n$-quantiles where $\alpha_n=\alpha-C/n$ for an arbitrary constant $C>0.$ For the subsampling approach, we use an extrapolation procedure proposed by \citet{politis1999subsampling}, which we adopt to the two-sample case as follows: The subsampling distribution $L_{n,b}(x)$ is replaced by
\begin{equation*}
    L_{n,b}^{\ast}(x)=\frac{ L_{n,r_1}((1-r_1/n)^{1/2}x)(r_2^{-1/2}-n^{-1/2})-L_{n,r_2}((1-r_2/n)^{1/2}x)(r_1^{-1/2}-n^{-1/2})}{(r_2^{-1/2}-n^{-1/2})-(r_1^{-1/2}-n^{-1/2})}
\end{equation*}
where $r_1=C_1n^{2/3}, r_2=C_2n^{2/3}$ and $0<C_2<C_1.$ The within group subsamples are drawn proportional to the original sample sizes, without making this explicit in the notation. This correction is expected to recover the $O_p(n^{-1/2})$ rate at least in the asymptotically normal case. However, as the convergence rate of the Type-I error is unknown in the non-normal case, it is difficult to judge whether this adjustment is beneficial in general. 

As already discussed, for the modified bootstrap based on the estimated derivative \eqref{estimated_derivative}, a choice of $c_n$ close to $n^{1/2}$ leads to a decrease in positive bias under asymptotic normality, at the cost of being possibly conservative if asymptotic normality fails. We therefore propose to use $c_n=n^{1/(2+s)}$ for some $s\in(0,1).$ For the numerical delta method, a direct interpretable relation between $\varepsilon_n$ and the bias of the bootstrap test is less clear.
However, we expect a similar behavior as for the choice of $c_n$ in the sense that values closer to $n^{-1/2}$ will lead to a more conservative behavior. Furthermore, \citet{hong2018numerical} suggest a 2-point numerical derivative based on a higher order directional Taylor expansion for bias reduction, i.e., here
    \begin{equation}\label{numdiff_biasadjust}
    \begin{split}
        \Tilde{\Psi}_2(S_1,S_2)=\Big(&-0.5\Psi\big((\hat{S}_1-\hat{S}_2)+2\varepsilon_n\sqrt{n}((\hat{S}_1^{(B)}-\hat{S}_2^{(B)})\big)\\ &+2\Psi\big((\hat{S}_1-\hat{S}_2)+\varepsilon_n\sqrt{n}((\hat{S}_1^{(B)}-\hat{S}_2^{(B)}) -(\hat{S}_1-\hat{S}_2))\big)\\&-1.5\Psi(\hat{S}_1-\hat{S}_2)\Big)\varepsilon_n^{-1}.
    \end{split}
    \end{equation}
\section{Simulations}
\subsection{Simulation Setup}
To assess the finite sample properties of our proposed methods, especially type-I errors and power of the equivalence test for the hypothesis \eqref{equivalencehypothesis}, we conducted a simulation study. The following tests were included:
\begin{enumerate}[label=(\arabic*)]
    \item The subsampling test $\varphi^S_{n,b,\alpha}$ based on the extrapolation distribution $L^{\ast}_{n,b}$ with $C_1=2, C_2=1.$
    \item The corrected bootstrap test $\varphi^B_{n,\alpha}$ with the choice $c_n=n^{2.1}.$
    \item Bootstrap tests based numerical derivatives $\eqref{num_diff}$ and $\eqref{numdiff_biasadjust}$ which we denote by $\varphi^{N(1)}_{n,\alpha}$ and $\varphi^{N(2)}_{n,\alpha}$, respectively, $\varepsilon_n$ was chosen as $1/c_n=1/n^{1/2.1}$ for both tests.
    \item A naive bootstrap test $\varphi^E_{n,\alpha}$ based on conditional quantiles of \eqref{naive_boot}.
\end{enumerate}
For all tests, we used the more conservative $\alpha_n$-quantiles described in Section~\ref{sec:corrections_typeI_errors} with $C=1.$ It is worth noting that $\varphi^E_{n,\alpha}$ is only asymptotically exact if $S_1$ and $S_2$ do not coincide on an interval. However, it follows from the stochastic dominance of $D_{\Delta_{\tau}}$ over a normal distribution that the asymptotic size of this test is bounded by $\alpha$. 
In addition to type-I error rates and power, we also investigate coverage probabilities for two-sided confidence intervals for the effect measure $\Delta_{\tau}(S_1,S_2)$. To confine confidence intervals to the interval $(0,1),$ the complementary log-log (cloglog) transformation was used. 
As a consequence, this interval is only valid when $0<\Delta_{\tau}(S_1,S_2)<1.$ We denote these confidence intervals by $\text{CI}^S_{n,b,\alpha}$ etc.

Furthermore, all considered test were also combined with this transformation, based on one-sided intervals. These tests are denoted with an additional 'cl' in the superscript. In the display of the results, we mostly suppress the subscripts $n,\alpha,b$ to improve readability.

All simulations were carried out utilizing the R-computing environment, version 4.5.0 \citep{Rsoftware}, with $N_{\text{sim}}=5000$ simulation runs. Moreover for the resampling tests, $B=1000$ subsampling or bootstrap runs in each simulation step were used to approximate the respective distributions. Such a Monte-Carlo approach can also be justified theoretically, see Proposition 4.1 of \citet{romano1989bootstrap} and Lemma 5.2 of \citet{bucher2019note}.

For data generation, three different settings were considered; Weibull distributions with crossing/propor\-tio\-nal survival functions and piecewise exponential distributions with a certain overlap. Note that the two Weibull settings lead to asymptotically normally distributed test statistics, whereas the piecewise exponential settings results in a true mixture distribution. We chose the equivalence threshold $\varepsilon=0.1$. That is, all parameters have been chosen so that $\Delta_{\tau}(S_1,S_2)=0.1$ under $H_0.$ The resulting survival functions are displayed in Figure~\ref{fig:distH0}.
\begin{figure}[H]
\centering
\includegraphics[width=.3\textwidth]{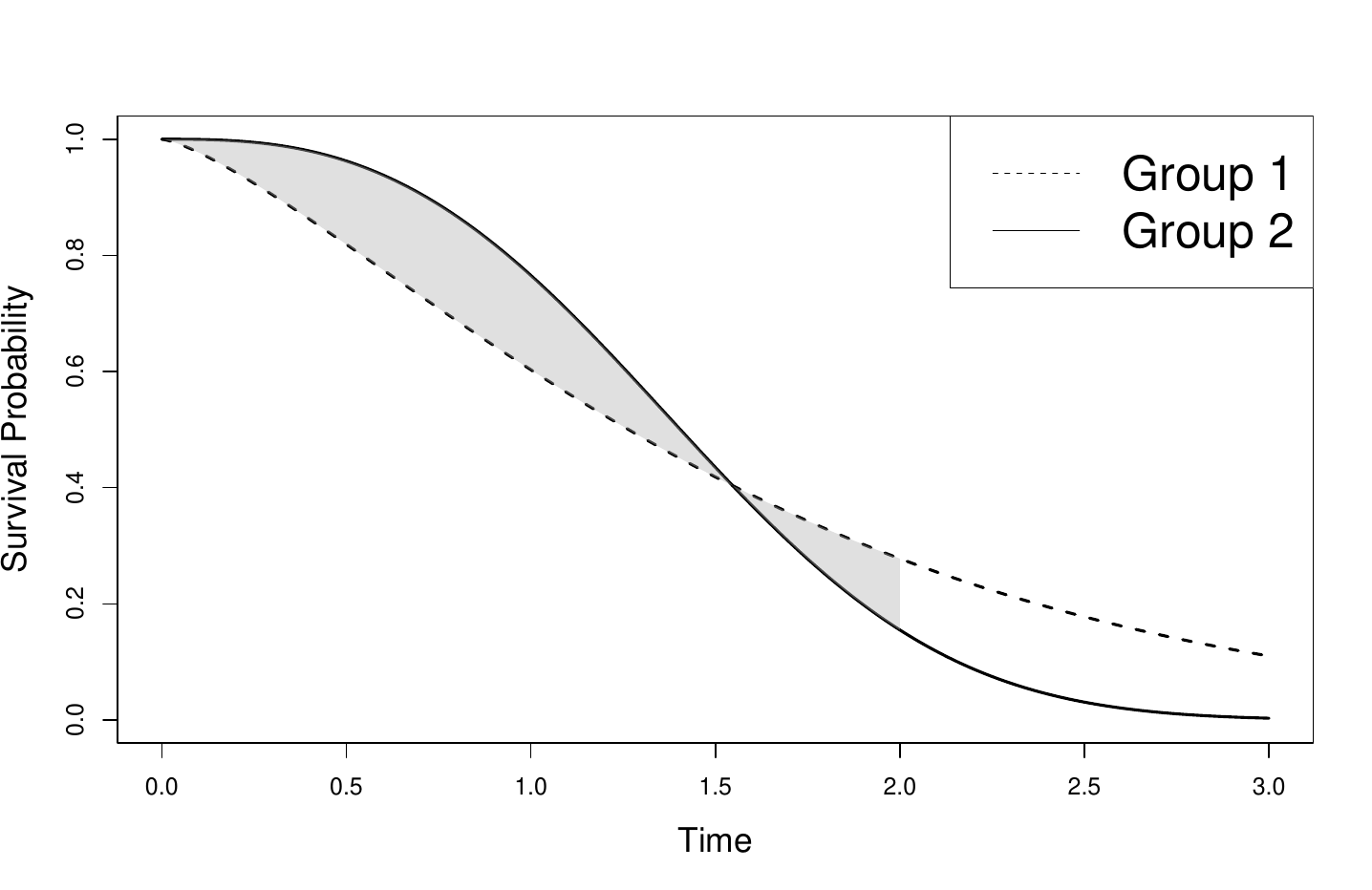}\hfill
\includegraphics[width=.3\textwidth]{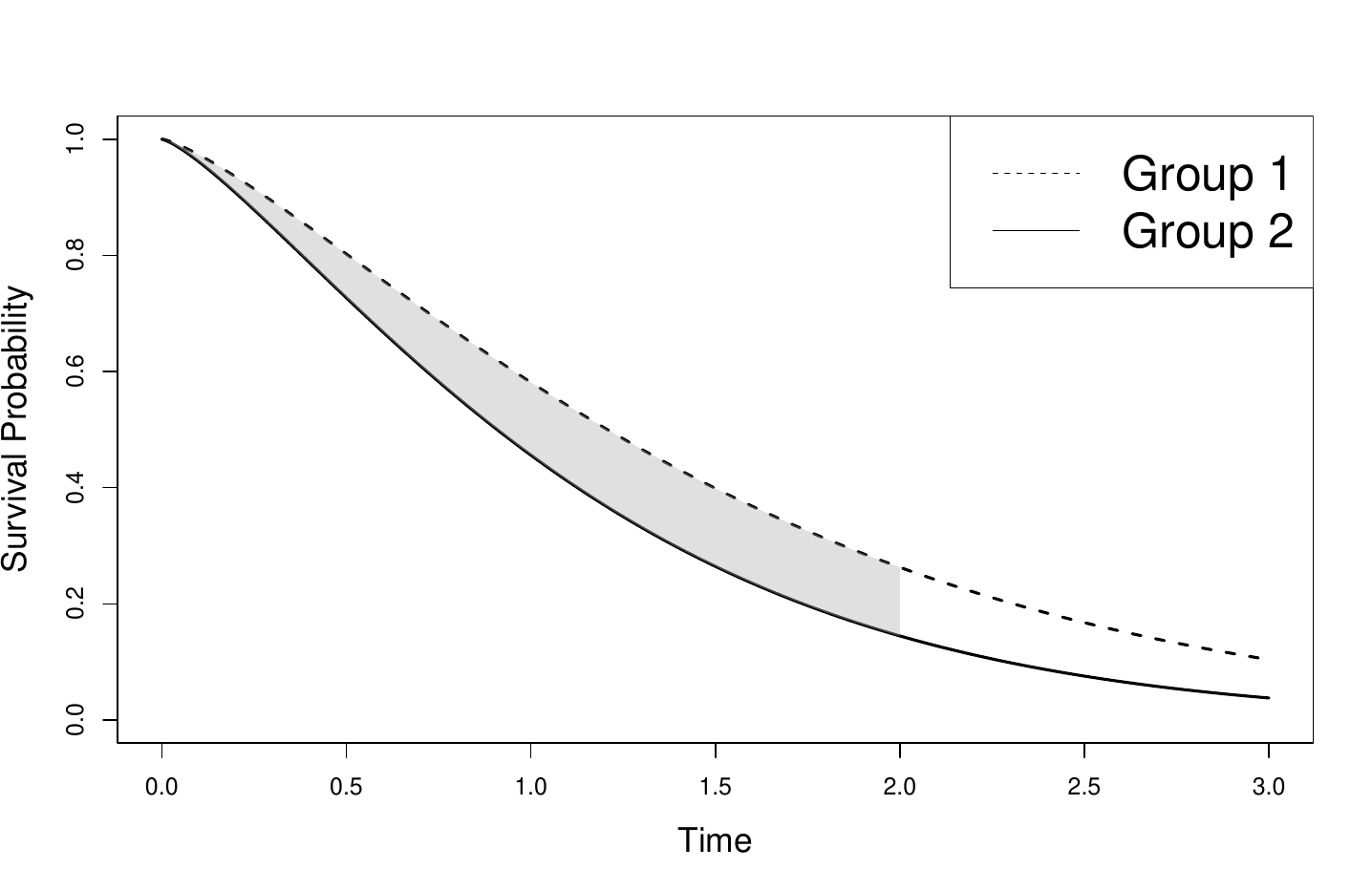}\hfill
\includegraphics[width=.3\textwidth]{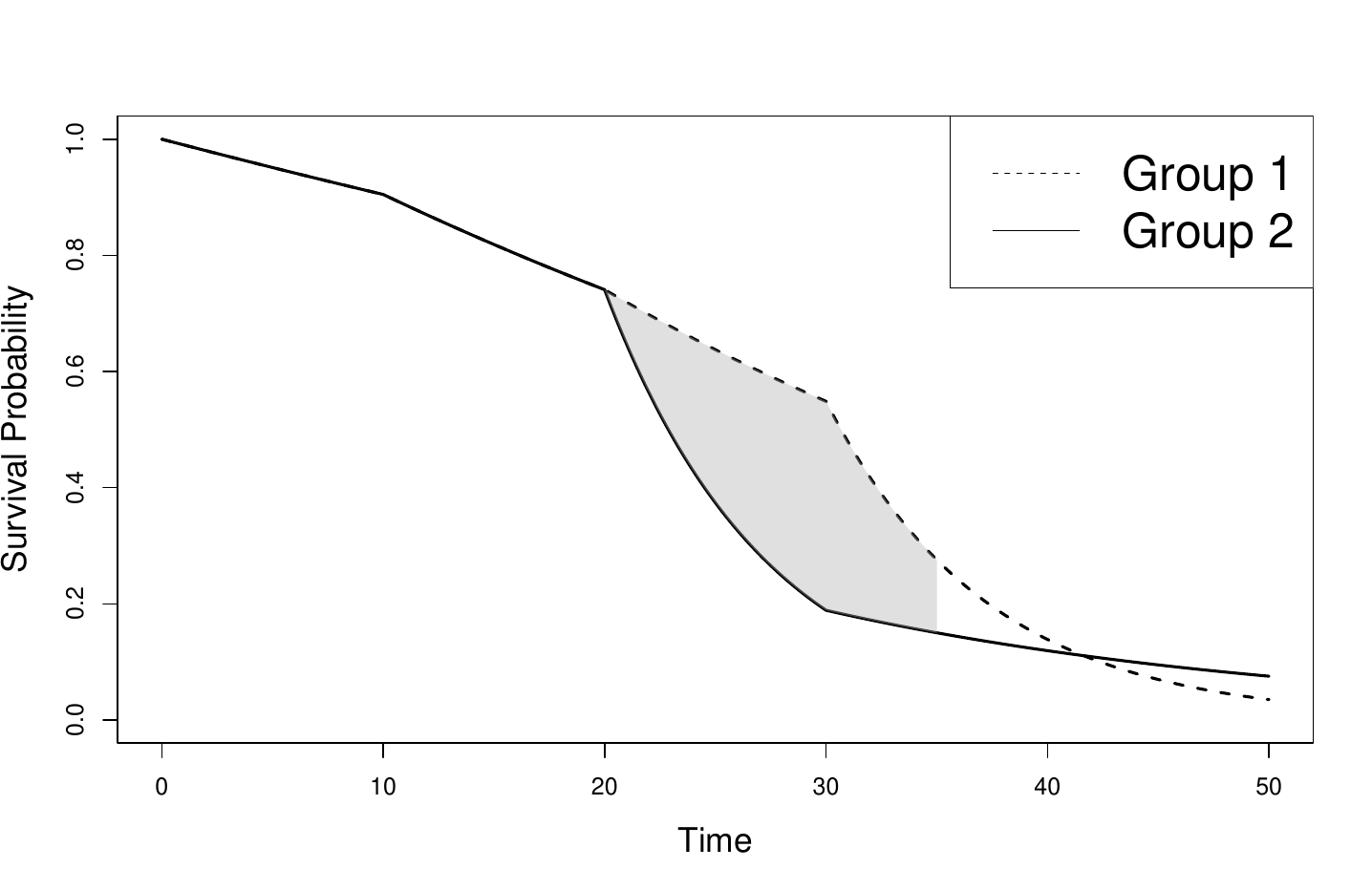}

\caption{Crossing and proportional Weibull as well as piecewise exponential distributions under the null hypothesis. The total area up to the terminal time $\tau$ is shaded.}
\label{fig:distH0}
\end{figure}
For the alternative, the parameters were chosen such that $\Delta_{\tau}(S_1,S_2)=0.05$; the resulting survival functions are displayed in Figure \ref{fig:distH1}. The terminal time $\tau=2$ was chosen under the Weibull distributions, and $\tau=35$ under the piecewise exponential distributions.

\begin{figure}[H]
\centering
\includegraphics[width=.3\textwidth]{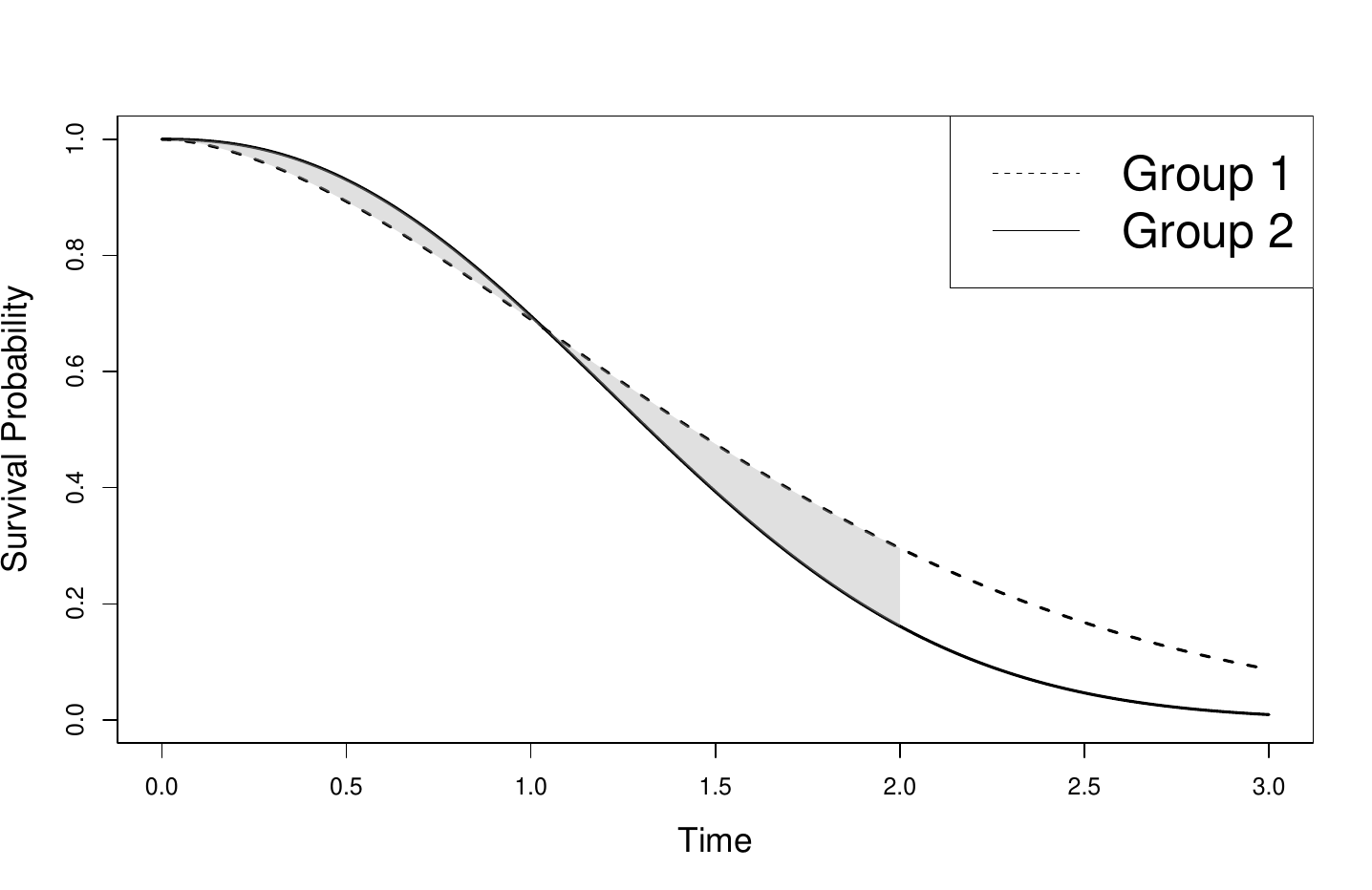}\hfill
\includegraphics[width=.3\textwidth]{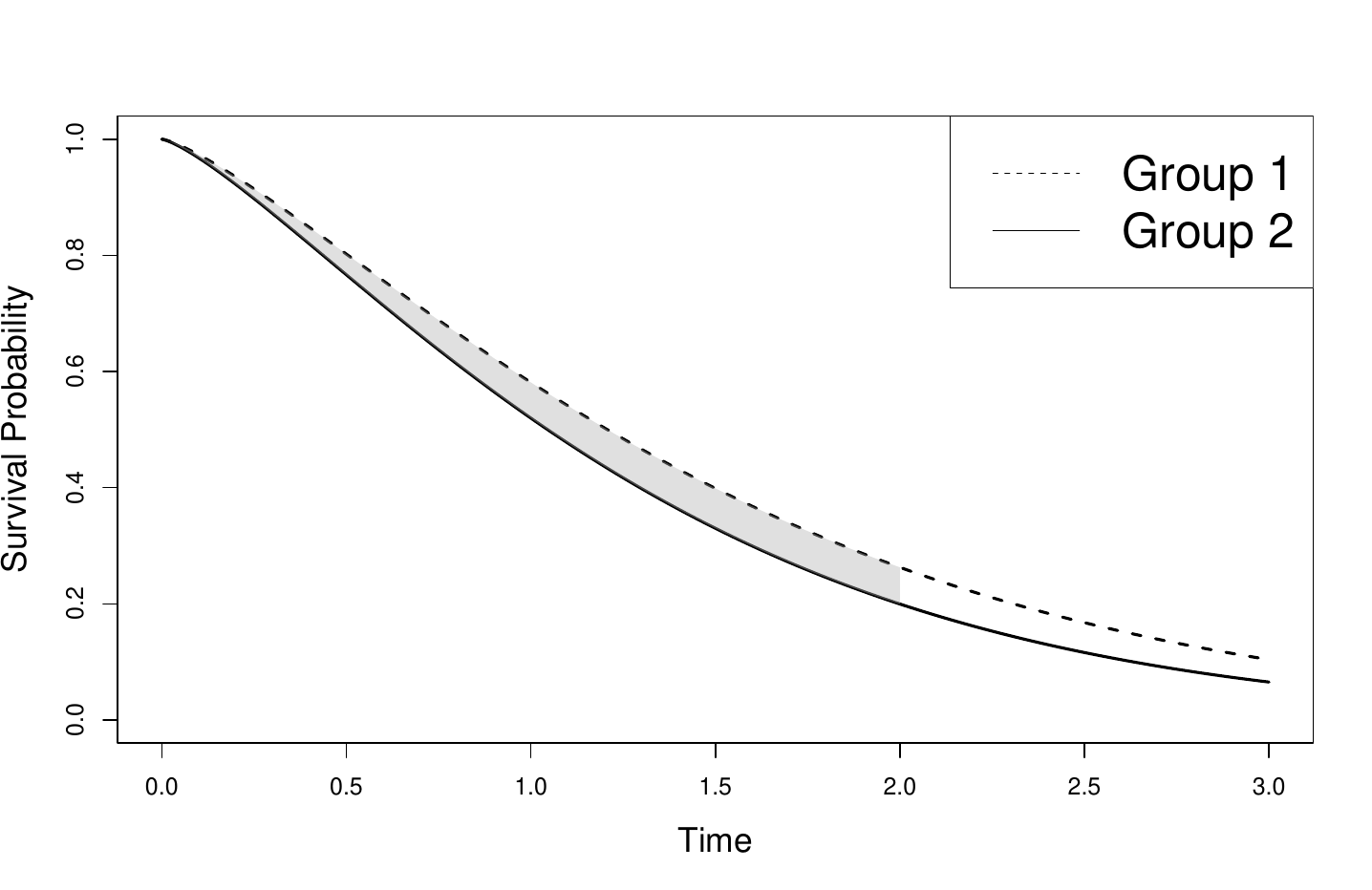}\hfill
\includegraphics[width=.3\textwidth]{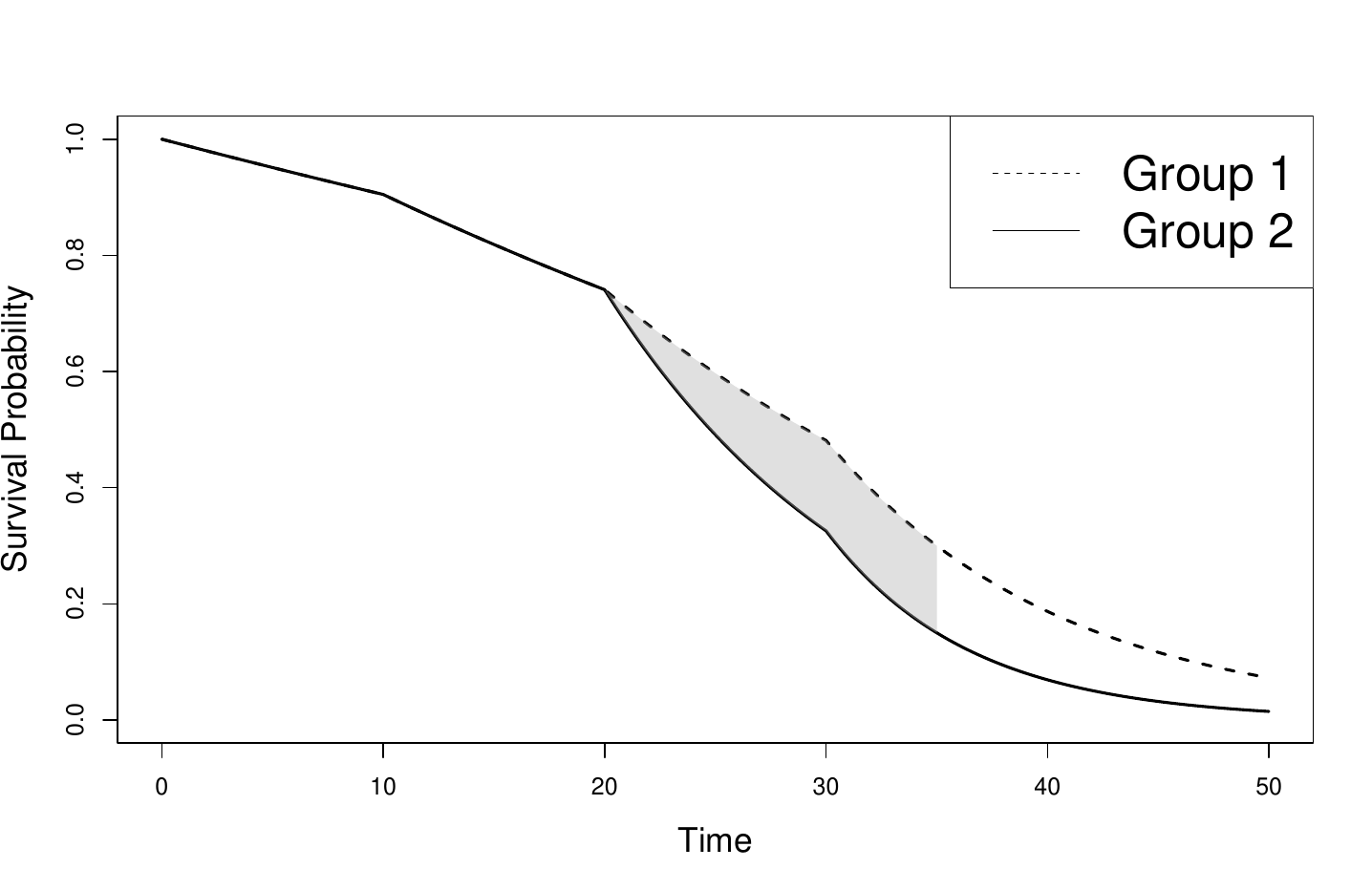}

\caption{Crossing and proportional Weibull as well as piecewise exponential distributions under the alternative hypothesis. The total area up to the terminal time $\tau$ is shaded.}
\label{fig:distH1}
\end{figure}
The exact parameters of the distributions have been obtained through nonlinear optimization with the COBYLA algorithm \citep{COBYLA, powell1998direct}, implemented in the NLopt package \citep{NLopt}. 
Each setting was simulated with and without right-censoring, where censoring times were simulated as independent exponential random variables, with different rate parameters chosen such that censoring probabilities for the two samples of $(\text{cens}_1, \text{cens}_2)= (0\%, 0\%), (30\%, 30\%), (60\%,60\%), \allowbreak (30\%,60\%)$ were achieved. We considered the sample sizes $(n_1, n_2)=(100,100), (100,200), (200,200),\allowbreak (250,500), (500,500)$.

\subsection{Results under the null hypothesis}
Table \ref{weib_crossing_h0} displays the results under $H_0$ based on Weibull distributions with crossing survival functions. Without the cloglog transformation, nearly all tests are too liberal in smaller samples; the  test $\varphi^{N(2)}$ based on the 2-point numerical derivative is an exception, it is too conservative in all settings. Even though the empirical size decreases for all methods in larger samples, only the subsampling test reaches the nominal level. However, it becomes conservative in settings with no or low censoring. The cloglog-transformation generally decreases the size of the tests, and the tests $\varphi^B$ and $\varphi^E$ keep the nominal level at least in larger samples, both in settings with and without censoring.
Regarding the coverage probabilities of the two-sided confidence intervals, the results are more satisfactory, especially the interval $\text{CI}^{N(2)}$ has close to nominal coverage across all sample sizes and censoring proportions.

\begin{table}[H]
\centering
\begin{adjustbox}{width=1\textwidth}
\small
\begin{tabular}{|rr|rr|*{5}{r}|*{5}{r}|*{5}{r}|}
\hline
$n_1$ & $n_2$ & $\text{cens}_1$ & $\text{cens}_2$ & $\varphi^S$ & $\varphi^B$ & $\varphi^{N(1)}$ & $\varphi^{N(2)}$ & $\varphi^E$ & $\varphi^S_{\text{cl}}$ & $\varphi^B_{\text{cl}}$ & $\varphi^{N(1)}_{\text{cl}}$ & $\varphi^{N(2)}_{\text{cl}}$ & $\varphi^{E}_{\text{cl}}$ & $\text{CI}^S$ & $\text{CI}^B$ & $\text{CI}^{N(1)}$ & $\text{CI}^{N(2)}$ & $\text{CI}^E$ \\
\hline
100 & 100 & 0\% & 0\% & 11.8 & 20.9 & 10.1 &  2.6 &  8.9 &  6.7 & 16.7 &  8.3 &  1.2 &  7.0 & 92.2 & 80.7 & 90.7 & 94.9 & 91.9 \\
100 & 200 & 0\% & 0\% & 10.5 & 16.1 &  9.8 &  2.4 &  8.6 & \textbf{5.0} & 11.1 &  7.7 &  1.0 &  6.3 & 94.5 & 86.7 & 92.8 & 96.0 & 93.4 \\
200 & 200 & 0\% & 0\% &  6.2 & 13.2 &  7.9 &  2.2 &  7.2 &  2.5 &  7.8 &  6.2 &  0.8 & \textbf{5.2} & 97.4 & 90.0 & 93.9 & 96.1 & 94.2 \\
250 & 500 & 0\% & 0\% &  4.1 & 10.6 &  7.4 &  3.0 &  7.0 &  1.1 & \textbf{5.6} &  6.1 &  1.7 & \textbf{5.5} & 98.9 & 92.5 & 94.3 & 96.5 & 94.6 \\
500 & 500 & 0\% & 0\% &  1.8 &  9.6 &  7.5 &  3.6 &  6.9 &  0.2 & \textbf{4.8} &  5.9 &  2.4 & \textbf{5.4} & 99.8 & 93.2 & 94.7 & 96.1 & 94.8 \\
\hline
100 & 100 & 30\% & 30\% & 13.8 & 18.3 & 10.5 &  3.0 &  9.4 &  8.8 & 14.4 &  9.0 &  1.6 &  7.5 & 90.0 & 82.0 & 89.6 & 93.9 & 90.5 \\
100 & 200 & 30\% & 30\% & 12.6 & 15.7 & 10.4 &  2.7 &  9.2 &  7.2 & 10.6 &  8.4 &  1.3 &  7.0 & 92.4 & 86.4 & 91.0 & 94.9 & 91.8 \\
200 & 200 & 30\% & 30\% &  8.3 & 12.9 &  8.1 &  2.3 &  7.3 &  3.7 &  7.8 &  6.4 &  0.9 & \textbf{5.6} & 96.2 & 89.8 & 90.7 & 94.7 & 91.4 \\
250 & 500 & 30\% & 30\% & \textbf{4.9} & 10.6 &  7.8 &  2.5 &  7.2 &  1.2 & \textbf{5.1} &  6.1 &  1.1 & \textbf{5.4} & 98.8 & 92.7 & 94.3 & 96.2 & 94.8 \\
500 & 500 & 30\% & 30\% &  2.6 &  9.3 &  7.0 &  2.9 &  6.5 &  0.5 & \textbf{4.5} &  5.7 &  1.5 & \textbf{5.1} & 99.5 & 93.3 & 94.7 & 96.1 & 94.9 \\
\hline
100 & 100 & 60\% & 60\% & 16.9 & 15.9 & 11.0 &  3.6 &  9.6 & 12.6 & 12.9 &  9.9 &  2.4 &  8.2 & 86.1 & 81.7 & 87.1 & 91.4 & 87.9 \\
100 & 200 & 60\% & 60\% & 16.6 & 13.6 & 11.1 &  3.4 &  9.5 & 11.5 &  9.9 &  9.6 &  2.1 &  8.0 & 87.8 & 85.0 & 87.6 & 92.3 & 88.6 \\
200 & 200 & 60\% & 60\% & 13.2 & 11.6 &  9.3 &  2.9 &  8.3 &  8.1 &  7.5 &  7.9 &  1.7 &  6.8 & 91.8 & 88.8 & 90.7 & 94.7 & 91.4 \\
250 & 500 & 60\% & 60\% &  9.4 &  8.9 &  8.3 &  2.1 &  7.3 & \textbf{4.5} & \textbf{4.5} &  6.9 &  1.2 & \textbf{5.6} & 95.5 & 92.1 & 92.8 & 95.4 & 93.5 \\
500 & 500 & 60\% & 60\% &  6.5 &  8.8 &  7.8 &  2.4 &  6.8 &  2.3 &  4.1 &  6.2 &  1.4 & \textbf{5.6} & 97.7 & 92.4 & 93.1 & 95.1 & 93.7 \\
\hline
100 & 100 & 30\% & 60\% & 15.1 & 17.1 & 10.3 &  3.3 &  8.8 & 10.2 & 12.9 &  8.9 &  2.0 &  7.4 & 88.4 & 83.1 & 89.1 & 93.7 & 89.9 \\
100 & 200 & 30\% & 60\% & 12.6 & 13.7 &  8.8 &  2.6 &  7.9 &  7.8 &  9.0 &  7.6 &  1.3 &  6.3 & 91.6 & 87.9 & 91.5 & 95.1 & 92.6 \\
200 & 200 & 30\% & 60\% & 11.0 & 12.8 &  9.0 &  2.3 &  7.8 &  5.9 &  7.7 &  7.4 &  1.2 &  6.4 & 93.9 & 88.6 & 91.6 & 94.7 & 92.2 \\
250 & 500 & 30\% & 60\% &  6.4 &  9.5 &  7.5 &  2.4 &  7.0 &  2.5 & \textbf{4.8} &  6.2 &  1.1 & \textbf{5.4} & 97.5 & 92.4 & 93.6 & 95.5 & 94.0 \\
500 & 500 & 30\% & 60\% & \textbf{4.5} &  8.9 &  7.2 &  2.5 &  6.7 &  1.2 & \textbf{4.5} &  6.0 &  1.5 & \textbf{5.4} & 98.8 & 92.8 & 94.1 & 95.9 & 94.6 \\
\hline
\end{tabular}
\end{adjustbox}
\caption{Empirical size and coverage probabilities in percent under $H_0$ based on Weibull distributions with crossing survival functions. Values for the size in the 95\% binomial interval $[0.044,0.056]$ are printed in bold.}
\label{weib_crossing_h0}
\end{table}

The setting based on Weibull distributions with proportional survival functions under $H_0$ appears to be more challenging for all considered methods; see the results in Table~\ref{weib_prop_h0}. No method controls the type-I error below sample sizes of $(200,200)$ with some methods exceeding 30\% in small samples. With increasing censoring this behavior worsens, such that acceptable but partially conservative results are only achieved for the largest sample size $(500,500)$ by the tests $\varphi^B_{\text{cl}}, \varphi^{E}_{\text{cl}}, \varphi^{N(2)}$ and $\varphi_{\text{cl}}^{N(2)}$. However, the general behavior of the tests is similar: the methods based on the 2-point numerical derivative are the most conservative and the cloglog transformation decreases the type-I error. The simulated empirical coverage probabilities of the confidence intervals also decrease, but at least in larger samples the two-point derivative and the subsampling approach lead to acceptable results.

\begin{table}[H]
\centering
\begin{adjustbox}{width=1\textwidth}
\small
\begin{tabular}{|rr|rr|*{5}{r}|*{5}{r}|*{5}{r}|}
\hline
$n_1$ & $n_2$ & $\text{cens}_1$ & $\text{cens}_2$ & $\varphi^S$ & $\varphi^B$ & $\varphi^{N(1)}$ & $\varphi^{N(2)}$ & $\varphi^E$ & $\varphi^S_{\text{cl}}$ & $\varphi^B_{\text{cl}}$ & $\varphi^{N(1)}_{\text{cl}}$ & $\varphi^{N(2)}_{\text{cl}}$ & $\varphi^{E}_{\text{cl}}$ & $\text{CI}^S$ & $\text{CI}^B$ & $\text{CI}^{N(1)}$ & $\text{CI}^{N(2)}$ & $\text{CI}^E$ \\
\hline
100 & 100 & 0\% & 0\% & 24.5 & 32.0 & 23.4 & 12.9 & 21.5 & 20.2 & 29.4 & 21.0 &  9.5 & 18.4 & 77.0 & 64.8 & 74.8 & 85.7 & 77.5 \\
100 & 200 & 0\% & 0\% & 20.5 & 23.7 & 19.9 &  9.7 & 17.8 & 15.9 & 19.9 & 17.0 &  6.4 & 14.7 & 82.3 & 74.4 & 78.8 & 88.8 & 81.1 \\
200 & 200 & 0\% & 0\% & 13.5 & 16.3 & 14.1 & \textbf{5.5} & 12.1 &  9.0 & 11.5 & 11.0 &  2.8 &  8.9 & 90.2 & 83.6 & 85.4 & 92.7 & 87.1 \\
250 & 500 & 0\% & 0\% &  6.8 &  9.4 &  8.9 &  2.3 &  7.6 &  3.6 &  4.2 &  5.7 &  0.6 & \textbf{4.7} & 96.0 & 91.2 & 90.8 & 94.9 & 92.0 \\
500 & 500 & 0\% & 0\% &  3.2 &  7.7 &  6.7 &  1.4 &  6.1 &  1.0 &  2.6 &  3.8 &  0.3 &  3.1 & 98.8 & 92.9 & 93.0 & 95.3 & 93.8 \\
\hline
100 & 100 & 30\% & 30\% & 28.8 & 31.2 & 25.8 & 14.9 & 24.2 & 24.8 & 28.6 & 24.1 & 11.6 & 21.8 & 72.2 & 64.8 & 70.9 & 83.2 & 73.6 \\
100 & 200 & 30\% & 30\% & 24.8 & 24.9 & 23.4 & 12.8 & 21.0 & 20.4 & 21.5 & 20.7 &  9.0 & 18.1 & 77.5 & 72.5 & 74.8 & 86.0 & 77.2 \\
200 & 200 & 30\% & 30\% & 17.5 & 18.1 & 16.9 &  7.1 & 14.8 & 12.8 & 13.0 & 13.7 &  4.2 & 10.8 & 86.3 & 81.5 & 82.4 & 90.6 & 85.0 \\
250 & 500 & 30\% & 30\% & 10.0 &  9.9 & 11.0 &  3.4 &  9.2 &  5.8 & \textbf{5.3} &  7.9 &  1.2 &  6.2 & 93.8 & 90.1 & 88.9 & 94.4 & 90.4 \\
500 & 500 & 30\% & 30\% & \textbf{5.2} &  7.7 &  7.7 &  1.5 &  6.8 &  2.1 &  2.4 & \textbf{4.5} &  0.5 &  3.5 & 97.6 & 93.0 & 92.1 & 95.1 & 93.4 \\
\hline
100 & 100 & 60\% & 60\% & 35.5 & 31.7 & 29.5 & 17.9 & 27.6 & 31.6 & 28.8 & 28.4 & 14.5 & 26.3 & 64.8 & 63.8 & 66.2 & 80.2 & 68.7 \\
100 & 200 & 60\% & 60\% & 33.9 & 27.3 & 28.4 & 17.0 & 26.1 & 29.6 & 23.7 & 26.7 & 13.3 & 24.1 & 67.8 & 69.1 & 67.9 & 81.2 & 70.6 \\
200 & 200 & 60\% & 60\% & 28.1 & 22.3 & 24.4 & 13.2 & 21.8 & 24.2 & 17.9 & 21.9 &  9.8 & 19.3 & 74.4 & 75.3 & 73.2 & 85.0 & 75.8 \\
250 & 500 & 60\% & 60\% & 21.2 & 14.4 & 18.5 &  8.0 & 16.3 & 16.3 &  9.0 & 15.6 & \textbf{5.2} & 12.1 & 83.2 & 85.6 & 81.2 & 90.2 & 84.1 \\
500 & 500 & 60\% & 60\% & 13.5 &  9.1 & 12.6 &  4.2 & 10.7 &  8.9 & \textbf{4.5} &  9.2 &  1.9 &  7.2 & 90.9 & 90.6 & 87.3 & 93.5 & 89.3 \\
\hline
100 & 100 & 30\% & 60\% & 34.8 & 32.9 & 30.6 & 18.8 & 28.3 & 31.3 & 29.6 & 28.8 & 15.0 & 26.2 & 64.2 & 62.6 & 65.2 & 79.0 & 67.6 \\
100 & 200 & 30\% & 60\% & 30.4 & 27.2 & 26.7 & 15.4 & 24.7 & 26.3 & 23.6 & 24.7 & 11.7 & 21.9 & 71.4 & 70.0 & 70.7 & 83.3 & 73.2 \\
200 & 200 & 30\% & 60\% & 24.3 & 20.4 & 21.2 & 10.8 & 18.9 & 19.9 & 15.8 & 18.6 &  7.4 & 15.8 & 78.8 & 78.0 & 77.0 & 87.2 & 79.5 \\
250 & 500 & 30\% & 60\% & 15.1 & 11.7 & 14.2 & \textbf{5.4} & 12.0 & 10.5 &  6.7 & 11.0 &  2.5 &  8.5 & 89.1 & 87.9 & 85.3 & 92.7 & 87.1 \\
500 & 500 & 30\% & 60\% & 10.4 &  8.5 & 10.8 &  3.4 &  9.0 &  6.3 &  3.6 &  7.7 &  1.4 & \textbf{5.5} & 93.4 & 91.2 & 88.9 & 93.7 & 90.3 \\
\hline
\end{tabular}
\end{adjustbox}
\caption{Empirical size and coverage probabilities in percent under $H_0$ based on Weibull distributions with proportional survival functions. Values for the size in the 95\% binomial interval $[0.044,0.056]$ are printed in bold.}
\label{weib_prop_h0}
\end{table}

Table \ref{pwexp_h0} displays the results based on piecewise exponential distributions under $H_0$. Here, nearly all tests without the cloglog transformation are in general too conservative, especially in larger samples. One exception is the test $\varphi^B$: all empirical sizes lie within the 95\% binomial interval for the largest considered sample sizes. Like with the Weibull settings, the cloglog transformation leads to an even more conservative behavior. As a consequence for all tests based on this transformation, the sizes are well below the nominal level for large samples. The empirical coverage probabilities of the confidence intervals are two-fold: the subsampling approach leads to coverages close to one for large samples, whereas the other methods do not reach the nominal coverage. Their behavior is very similar, and at least for the largest sample size coverages of 91-92\% are reached.

\begin{table}[H]
\centering
\begin{adjustbox}{width=1\textwidth}
\small
\begin{tabular}{|rr|rr|*{5}{r}|*{5}{r}|*{5}{r}|}
\hline
$n_1$ & $n_2$ & $\text{cens}_1$ & $\text{cens}_2$ & $\varphi^S$ & $\varphi^B$ & $\varphi^{N(1)}$ & $\varphi^{N(2)}$ & $\varphi^E$ & $\varphi^S_{\text{cl}}$ & $\varphi^B_{\text{cl}}$ & $\varphi^{N(1)}_{\text{cl}}$ & $\varphi^{N(2)}_{\text{cl}}$ & $\varphi^{E}_{\text{cl}}$ & $\text{CI}^S$ & $\text{CI}^B$ & $\text{CI}^{N(1)}$ & $\text{CI}^{N(2)}$ & $\text{CI}^E$ \\
\hline
100 & 100 & 0\%  & 0\%  & \textbf{5.4} &  9.2 &  4.1 &  0.9 &  3.7 &  3.0 &  6.2 &  3.6 &  0.4 &  2.9 & 93.2 & 86.7 & 88.6 & 89.3 & 88.9 \\
100 & 200 & 0\%  & 0\%  & \textbf{4.8} &  7.1 &  3.6 &  0.8 &  3.1 &  1.9 &  4.1 &  2.9 &  0.4 &  2.4 & 95.8 & 88.6 & 89.3 & 89.8 & 89.3 \\
200 & 200 & 0\%  & 0\%  &  2.7 &  5.8 &  2.7 &  0.8 &  2.4 &  0.8 &  2.9 &  2.1 &  0.4 &  1.8 & 98.6 & 90.9 & 91.3 & 91.2 & 91.2 \\
250 & 500 & 0\%  & 0\%  &  2.0 & \textbf{5.5} &  3.0 &  1.1 &  2.6 &  0.4 &  2.6 &  2.3 &  0.7 &  1.9 & 99.3 & 91.7 & 92.1 & 91.9 & 91.9 \\
500 & 500 & 0\%  & 0\%  &  1.2 & \textbf{5.6} &  2.7 &  1.5 &  2.4 &  0.2 &  2.5 &  2.3 &  1.1 &  2.1 & 99.7 & 92.1 & 92.4 & 92.2 & 92.3 \\
\hline
100 & 100 & 30\% & 30\% &  6.9 &  8.6 & \textbf{4.9} &  1.1 &  4.2 &  4.1 &  6.1 &  4.1 &  0.5 &  3.3 & 92.6 & 86.6 & 88.0 & 89.5 & 88.3 \\
100 & 200 & 30\% & 30\% &  6.0 &  7.6 &  4.2 &  1.1 &  3.7 &  2.9 &  4.3 &  3.7 &  0.5 &  2.9 & 94.7 & 88.1 & 88.7 & 89.7 & 89.0 \\
200 & 200 & 30\% & 30\% &  3.8 &  6.4 &  3.3 &  0.7 &  2.8 &  1.2 &  3.1 &  2.6 &  0.2 &  2.0 & 97.9 & 90.2 & 90.6 & 90.9 & 90.5 \\
250 & 500 & 30\% & 30\% &  2.0 & \textbf{5.3} &  2.7 &  0.9 &  2.5 &  0.5 &  2.2 &  2.4 &  0.6 &  1.9 & 99.1 & 91.8 & 91.8 & 91.7 & 91.7 \\
500 & 500 & 30\% & 30\% &  1.4 & \textbf{5.2} &  2.8 &  1.4 &  2.6 &  0.2 &  2.4 &  2.2 &  0.9 &  1.8 & 99.7 & 91.7 & 91.8 & 91.8 & 91.7 \\
\hline
100 & 100 & 60\% & 60\% & 12.3 & 10.6 &  8.2 &  2.6 &  7.3 &  8.9 &  7.6 &  7.5 &  1.7 &  6.1 & 86.6 & 83.6 & 83.5 & 86.8 & 84.1 \\
100 & 200 & 60\% & 60\% & 10.5 &  8.9 &  7.0 &  1.7 &  5.9 &  6.9 & \textbf{5.2} &  5.9 &  1.1 & \textbf{5.0} & 90.6 & 86.7 & 85.7 & 88.1 & 86.1 \\
200 & 200 & 60\% & 60\% &  7.2 &  6.9 & \textbf{5.5} &  1.1 & \textbf{4.8} & \textbf{4.5} &  3.8 & \textbf{4.6} &  0.4 &  3.8 & 94.0 & 88.3 & 87.6 & 89.8 & 88.3 \\
250 & 500 & 60\% & 60\% & \textbf{4.6} &  5.8 & \textbf{4.5} &  1.0 &  3.7 &  1.7 &  2.5 &  3.4 &  0.4 &  2.9 & 97.8 & 90.3 & 89.8 & 90.9 & 90.0 \\
500 & 500 & 60\% & 60\% &  2.3 & \textbf{5.2} &  3.5 &  1.0 &  3.0 &  0.5 &  1.9 &  2.7 &  0.4 &  2.0 & 99.3 & 91.7 & 91.1 & 91.4 & 91.0 \\
\hline
100 & 100 & 30\% & 60\% &  9.7 &  9.8 &  6.9 &  1.8 &  6.0 &  6.7 &  6.3 &  6.0 &  0.9 & \textbf{4.8} & 89.2 & 84.5 & 84.2 & 87.1 & 85.1 \\
100 & 200 & 30\% & 60\% &  7.2 &  7.5 & \textbf{5.2} &  1.2 &  4.3 &  3.8 &  4.3 &  4.1 &  0.5 &  3.3 & 94.0 & 87.6 & 87.6 & 89.2 & 87.9 \\
200 & 200 & 30\% & 60\% &  5.8 &  6.9 & \textbf{4.7} &  0.9 &  4.1 &  2.9 &  3.7 &  3.7 &  0.4 &  2.9 & 96.2 & 89.0 & 88.7 & 90.1 & 89.0 \\
250 & 500 & 30\% & 60\% &  2.9 &  5.8 &  3.8 &  1.0 &  3.3 &  0.7 &  2.5 &  2.8 &  0.4 &  2.3 & 98.9 & 90.9 & 90.8 & 91.2 & 91.0 \\
500 & 500 & 30\% & 60\% &  1.4 & \textbf{4.8} &  2.6 &  0.9 &  2.3 &  0.2 &  1.7 &  2.0 &  0.4 &  1.7 & 99.6 & 92.1 & 91.0 & 91.1 & 91.1 \\
\hline
\end{tabular}
\end{adjustbox}
\caption{Empirical size and coverage probabilities in percent under $H_0$ based on piecewise exponential distributions. Values for the size in the 95\% binomial interval $[0.044,0.056]$ are printed in bold.}
\label{pwexp_h0}
\end{table}
In both settings based on Weibull distributions, an increase in censoring proportion generally increases the type-I error of tests without the cloglog transformation, however, the tests based on the transformation seem less affected. For the setting based on piecewise exponential distributions, the size of the tests does not vary much with the censoring rate.

\subsection{Results under the alternative}
Table \ref{weib_crossing_h1} displays the results under $H_1$ based on Weibull distributions with crossing survival functions. The test $\varphi^{N(2)}$ and all transformed tests besides $\varphi^{N(1)}_{\text{cl}}$ did not exceed the nominal level under $H_0,$ at least in larger samples, therefore we focus on those regarding power comparisons. The tests based on the cloglog transformation have in general a lower power compared to their untransformed counterparts, which is not surprising given the more conservative behavior under $H_0.$ The power of these tests a fairly similar, with the exception of $\varphi^S_{\text{cl}}$ which has noticeably lower power in larger samples, which is in accordance with the very low size of this test under $H_0.$
The empirical coverage probabilities of all confidence intervals are noticeably worse compared to the same settings under $H_0.$ This can be attributed to the fact that $\Delta_{\tau}(S_1,S_2)=0.05,$ is closer to the boundary of 0, which makes the confidence intervals more unstable. Without censoring, the intervals $\text{CI}^S$ and $\text{CI}^E$ still provide acceptable coverage in large samples. However, under high censoring proportions, no confidence interval even reaches 90\% coverage.

\begin{table}[H]
\centering
\begin{adjustbox}{width=1\textwidth}
\small
\begin{tabular}{|rr|rr|*{5}{r}|*{5}{r}|*{5}{r}|}
\hline
$n_1$ & $n_2$ & $\text{cens}_1$ & $\text{cens}_2$ & $\varphi^S$ & $\varphi^B$ & $\varphi^{N(1)}$ & $\varphi^{N(2)}$ & $\varphi^E$ & $\varphi^S_{\text{cl}}$ & $\varphi^B_{\text{cl}}$ & $\varphi^{N(1)}_{\text{cl}}$ & $\varphi^{N(2)}_{\text{cl}}$ & $\varphi^{E}_{\text{cl}}$ & $\text{CI}^S$ & $\text{CI}^B$ & $\text{CI}^{N(1)}$ & $\text{CI}^{N(2)}$ & $\text{CI}^E$ \\
\hline
100 & 100 & 0\%  & 0\%  & 75.8 & 82.7 & 87.9 & 73.6 & 71.2 & 66.2 & 77.6 & 87.3 & 70.0 & 65.5 & 79.3 & 60.4 & 68.0 & 79.8 & 82.7 \\
100 & 200 & 0\%  & 0\%  & 80.9 & 86.0 & 92.5 & 80.0 & 78.1 & 71.3 & 79.4 & 92.2 & 76.3 & 73.3 & 79.9 & 67.6 & 68.0 & 80.0 & 84.6 \\
200 & 200 & 0\%  & 0\%  & 87.2 & 92.7 & 96.5 & 87.9 & 88.2 & 75.9 & 86.8 & 96.2 & 84.3 & 84.3 & 83.9 & 71.5 & 69.5 & 82.7 & 87.4 \\
250 & 500 & 0\%  & 0\%  & 94.3 & 98.0 & 99.3 & 95.5 & 96.6 & 84.6 & 95.1 & 99.1 & 93.6 & 95.1 & 88.3 & 79.8 & 71.1 & 85.9 & 90.0 \\
500 & 500 & 0\%  & 0\%  & 98.2 & 99.6 & 99.9 & 98.9 & 99.4 & 91.5 & 99.1 & 99.9 & 98.3 & 99.1 & 93.4 & 84.8 & 75.0 & 90.6 & 93.1 \\
\hline
100 & 100 & 30\% & 30\% & 71.8 & 75.1 & 84.2 & 69.4 & 65.3 & 64.2 & 68.4 & 83.9 & 66.4 & 60.1 & 75.3 & 63.6 & 66.8 & 76.9 & 80.8 \\
100 & 200 & 30\% & 30\% & 76.1 & 78.4 & 88.8 & 75.7 & 71.4 & 67.4 & 70.3 & 88.3 & 72.6 & 66.5 & 76.5 & 67.3 & 64.9 & 76.4 & 81.7 \\
200 & 200 & 30\% & 30\% & 83.9 & 87.7 & 94.3 & 84.1 & 82.5 & 72.9 & 78.9 & 93.9 & 80.8 & 76.9 & 79.2 & 71.3 & 66.1 & 78.7 & 85.1 \\
250 & 500 & 30\% & 30\% & 90.3 & 95.0 & 98.3 & 92.5 & 92.9 & 79.9 & 88.9 & 98.2 & 90.0 & 89.8 & 84.7 & 80.3 & 67.8 & 82.7 & 88.2 \\
500 & 500 & 30\% & 30\% & 95.2 & 98.9 & 99.6 & 97.3 & 98.1 & 85.5 & 96.7 & 99.6 & 95.6 & 97.2 & 89.0 & 83.0 & 70.8 & 86.2 & 90.6 \\
\hline
100 & 100 & 60\% & 60\% & 67.0 & 62.3 & 77.5 & 62.8 & 55.7 & 60.1 & 55.2 & 77.4 & 60.7 & 52.3 & 70.1 & 64.2 & 63.9 & 72.9 & 77.5 \\
100 & 200 & 60\% & 60\% & 70.9 & 64.9 & 82.5 & 69.2 & 61.7 & 63.6 & 56.5 & 82.3 & 67.2 & 57.7 & 70.2 & 67.3 & 61.8 & 71.7 & 77.2 \\
200 & 200 & 60\% & 60\% & 78.0 & 74.5 & 90.0 & 78.6 & 72.0 & 70.6 & 63.7 & 89.8 & 76.3 & 66.9 & 70.8 & 69.6 & 60.7 & 72.0 & 79.2 \\
250 & 500 & 60\% & 60\% & 84.1 & 85.1 & 95.3 & 86.7 & 82.9 & 75.5 & 71.9 & 95.2 & 84.7 & 77.7 & 73.9 & 77.3 & 60.5 & 73.2 & 82.8 \\
500 & 500 & 60\% & 60\% & 88.6 & 93.0 & 98.4 & 92.2 & 91.2 & 78.9 & 84.5 & 98.3 & 90.3 & 87.3 & 78.8 & 81.8 & 63.6 & 77.1 & 86.0 \\
\hline
100 & 100 & 30\% & 60\% & 67.9 & 66.3 & 80.0 & 64.9 & 58.5 & 60.9 & 59.0 & 79.8 & 62.4 & 54.0 & 71.7 & 63.8 & 64.1 & 73.8 & 77.5 \\
100 & 200 & 30\% & 60\% & 75.1 & 73.3 & 87.1 & 74.1 & 68.3 & 67.1 & 64.3 & 86.9 & 71.7 & 63.1 & 74.0 & 68.9 & 64.6 & 75.3 & 80.9 \\
200 & 200 & 30\% & 60\% & 79.0 & 78.8 & 91.9 & 79.4 & 74.5 & 69.9 & 68.2 & 91.5 & 76.5 & 68.9 & 75.0 & 71.3 & 63.3 & 74.9 & 81.6 \\
250 & 500 & 30\% & 60\% & 87.8 & 91.5 & 97.5 & 90.4 & 89.2 & 77.7 & 82.7 & 97.3 & 88.3 & 85.0 & 79.8 & 79.2 & 64.6 & 78.4 & 86.0 \\
500 & 500 & 30\% & 60\% & 91.9 & 96.3 & 99.0 & 94.7 & 94.9 & 81.4 & 91.0 & 98.9 & 92.8 & 92.7 & 83.8 & 82.9 & 66.3 & 80.9 & 88.3 \\
\hline
\end{tabular}
\end{adjustbox}
\caption{Empirical power and coverage probabilities in percent under $H_1$ based on Weibull distributions with crossing survival functions.}
\label{weib_crossing_h1}
\end{table}

 Similar to the behavior under $H_0,$ the performance of all considered methods under $H_1$ with proportional Weibull survival functions displayed in Table \ref{weib_prop_h1} is worse compared to the setup with crossing survival functions, i.e., lower empirical power and coverage probabilities. Furthermore, it is remarkable for $\varphi^S_{\text{cl}}$ and $\varphi^B_{\text{cl}}$ that their power does not increase with sample size; for some settings it even decreases. A possible explanation is the possible instability of the cloglog-transformed confidence intervals at the lower boundary, combined with the high empirical type-I error in small samples. This could also partly explain the low empirical coverage probabilities of about 70\% in high censoring regimes, even for the largest samples.

\begin{table}[H]
\centering
\begin{adjustbox}{width=1\textwidth}
\small
\begin{tabular}{|rr|rr|*{5}{r}|*{5}{r}|*{5}{r}|}
\hline
$n_1$ & $n_2$ & $\text{cens}_1$ & $\text{cens}_2$ & $\varphi^S$ & $\varphi^B$ & $\varphi^{N(1)}$ & $\varphi^{N(2)}$ & $\varphi^E$ & $\varphi^S_{\text{cl}}$ & $\varphi^B_{\text{cl}}$ & $\varphi^{N(1)}_{\text{cl}}$ & $\varphi^{N(2)}_{\text{cl}}$ & $\varphi^{E}_{\text{cl}}$ & $\text{CI}^S$ & $\text{CI}^B$ & $\text{CI}^{N(1)}$ & $\text{CI}^{N(2)}$ & $\text{CI}^E$ \\
\hline
100 & 100 & 0\%  & 0\%  & 65.0 & 73.0 & 78.1 & 67.2 & 61.2 & 59.3 & 70.3 & 77.7 & 64.4 & 56.8 & 60.7 & 40.7 & 49.2 & 59.4 & 66.3 \\
100 & 200 & 0\%  & 0\%  & 66.4 & 70.3 & 81.6 & 70.4 & 63.1 & 60.2 & 65.7 & 81.1 & 67.6 & 57.9 & 61.5 & 46.4 & 48.2 & 58.7 & 66.0 \\
200 & 200 & 0\%  & 0\%  & 68.0 & 70.9 & 85.7 & 74.2 & 65.1 & 58.7 & 63.5 & 85.1 & 70.7 & 58.3 & 66.2 & 52.1 & 50.6 & 60.8 & 69.1 \\
250 & 500 & 0\%  & 0\%  & 69.9 & 73.9 & 91.6 & 80.9 & 71.2 & 57.1 & 58.1 & 91.2 & 76.8 & 61.1 & 72.3 & 64.2 & 53.0 & 64.4 & 75.0 \\
500 & 500 & 0\%  & 0\%  & 72.2 & 83.3 & 95.3 & 85.3 & 80.5 & 54.5 & 62.0 & 94.8 & 80.7 & 68.7 & 79.2 & 71.8 & 56.4 & 69.1 & 80.9 \\
\hline
100 & 100 & 30\% & 30\% & 65.6 & 68.5 & 76.5 & 65.7 & 59.9 & 60.6 & 65.6 & 76.4 & 63.8 & 56.7 & 59.5 & 45.8 & 50.2 & 60.8 & 67.0 \\
100 & 200 & 30\% & 30\% & 67.3 & 67.3 & 79.9 & 69.7 & 62.7 & 61.9 & 62.6 & 79.6 & 67.7 & 58.5 & 59.0 & 48.2 & 47.4 & 57.1 & 65.3 \\
200 & 200 & 30\% & 30\% & 68.7 & 68.7 & 84.7 & 73.6 & 64.4 & 60.9 & 61.3 & 84.3 & 70.7 & 58.3 & 63.3 & 52.4 & 48.7 & 59.3 & 67.8 \\
250 & 500 & 30\% & 30\% & 70.8 & 70.9 & 90.8 & 80.2 & 69.5 & 59.1 & 55.6 & 90.3 & 76.7 & 59.5 & 68.0 & 63.0 & 51.1 & 61.3 & 72.4 \\
500 & 500 & 30\% & 30\% & 73.1 & 78.5 & 93.9 & 84.3 & 76.4 & 59.0 & 58.1 & 93.2 & 80.5 & 65.2 & 73.6 & 68.7 & 52.0 & 64.3 & 76.2 \\
\hline
100 & 100 & 60\% & 60\% & 62.9 & 58.0 & 71.3 & 59.7 & 53.1 & 58.4 & 54.0 & 71.2 & 58.7 & 50.6 & 63.1 & 57.7 & 57.2 & 67.2 & 72.9 \\
100 & 200 & 60\% & 60\% & 67.1 & 60.1 & 76.1 & 66.5 & 59.4 & 62.8 & 55.0 & 76.1 & 65.2 & 56.1 & 59.0 & 57.0 & 51.5 & 60.9 & 68.7 \\
200 & 200 & 60\% & 60\% & 69.2 & 63.0 & 80.0 & 70.3 & 62.5 & 64.9 & 56.7 & 79.7 & 69.0 & 58.3 & 58.1 & 55.9 & 47.4 & 57.7 & 67.3 \\
250 & 500 & 60\% & 60\% & 73.5 & 64.4 & 87.1 & 78.2 & 67.5 & 66.8 & 53.5 & 86.8 & 76.4 & 61.2 & 58.3 & 61.1 & 45.3 & 55.3 & 66.2 \\
500 & 500 & 60\% & 60\% & 74.5 & 68.1 & 91.7 & 82.2 & 70.3 & 66.1 & 51.5 & 91.5 & 79.7 & 62.0 & 62.7 & 66.2 & 48.1 & 58.4 & 71.0 \\
\hline
100 & 100 & 30\% & 60\% & 64.7 & 63.1 & 73.3 & 62.8 & 57.4 & 60.1 & 59.0 & 73.2 & 61.5 & 54.3 & 59.4 & 51.8 & 52.4 & 63.2 & 69.3 \\
100 & 200 & 30\% & 60\% & 68.4 & 65.4 & 79.2 & 69.3 & 62.2 & 63.8 & 60.2 & 79.0 & 67.7 & 58.2 & 57.9 & 52.0 & 48.8 & 59.5 & 67.1 \\
200 & 200 & 30\% & 60\% & 69.1 & 64.7 & 82.5 & 71.5 & 63.0 & 63.6 & 57.8 & 82.2 & 69.7 & 58.2 & 59.8 & 54.7 & 48.6 & 58.6 & 67.2 \\
250 & 500 & 30\% & 60\% & 72.1 & 67.2 & 89.0 & 79.2 & 67.6 & 63.6 & 54.9 & 88.7 & 76.2 & 60.6 & 62.5 & 61.3 & 47.3 & 57.7 & 69.2 \\
500 & 500 & 30\% & 60\% & 74.0 & 72.1 & 92.2 & 82.8 & 72.1 & 63.9 & 54.0 & 91.8 & 80.1 & 63.1 & 66.8 & 66.9 & 48.3 & 60.0 & 72.0 \\
\hline
\end{tabular}
\end{adjustbox}
\caption{Empirical power and coverage probabilities in percent under $H_1$ based on Weibull distributions with proportional survival functions.}
\label{weib_prop_h1}
\end{table}
Table \ref{pwexp_h1} displays the results based on piecewise exponential distributions under $H_1.$ Among the untransformed tests, $\varphi^{N(1)}$ and $\varphi^B$ have the highest empirical power, whereas $\varphi^S$ has the lowest empirical power in large samples; the tests $\varphi^{N(2)}$ and $\varphi^E$ lie in between and have comparable empirical power. For the tests based the cloglog-transformation, the relationships are similar; as expected, due to the more conservative behavior under $H_0,$ the empirical power is generally lower. The empirical coverage probability of $\text{CI}^S$ is satisfactory at least in large samples, the other intervals exhibit severe undercoverage, reaching at most 90\% without censoring, and even lower values in high censoring regimes.

\begin{table}[H]
\centering
\begin{adjustbox}{width=1\textwidth}
\small
\begin{tabular}{|rr|rr|*{5}{r}|*{5}{r}|*{5}{r}|}
\hline
$n_1$ & $n_2$ & $\text{cens}_1$ & $\text{cens}_2$ & $\varphi^S$ & $\varphi^B$ & $\varphi^{N(1)}$ & $\varphi^{N(2)}$ & $\varphi^E$ & $\varphi^S_{\text{cl}}$ & $\varphi^B_{\text{cl}}$ & $\varphi^{N(1)}_{\text{cl}}$ & $\varphi^{N(2)}_{\text{cl}}$ & $\varphi^{E}_{\text{cl}}$ & $\text{CI}^S$ & $\text{CI}^B$ & $\text{CI}^{N(1)}$ & $\text{CI}^{N(2)}$ & $\text{CI}^E$ \\
\hline
100 & 100 & 0\%  & 0\%  & 64.7 & 74.9 & 78.7 & 60.9 & 59.6 & 54.6 & 67.1 & 77.9 & 56.7 & 54.4 & 82.6 & 69.6 & 71.9 & 79.4 & 81.4 \\
100 & 200 & 0\%  & 0\%  & 69.9 & 78.6 & 83.1 & 67.4 & 67.5 & 60.0 & 71.5 & 82.4 & 63.3 & 62.4 & 84.8 & 75.2 & 71.9 & 80.7 & 83.2 \\
200 & 200 & 0\%  & 0\%  & 76.4 & 86.6 & 89.2 & 75.4 & 76.9 & 63.7 & 81.3 & 88.7 & 70.7 & 72.2 & 91.3 & 81.3 & 76.6 & 84.9 & 86.6 \\
250 & 500 & 0\%  & 0\%  & 84.7 & 93.3 & 95.3 & 86.2 & 87.9 & 73.3 & 89.9 & 94.9 & 82.8 & 84.5 & 94.0 & 85.7 & 79.0 & 87.5 & 88.0 \\
500 & 500 & 0\%  & 0\%  & 90.6 & 96.6 & 98.1 & 93.2 & 94.1 & 81.0 & 95.0 & 97.9 & 91.0 & 92.4 & 96.5 & 87.9 & 81.9 & 90.0 & 89.7 \\
\hline
100 & 100 & 30\% & 30\% & 62.3 & 66.1 & 76.0 & 58.3 & 54.6 & 53.8 & 57.5 & 75.4 & 55.1 & 49.7 & 78.6 & 68.7 & 68.5 & 76.6 & 79.5 \\
100 & 200 & 30\% & 30\% & 66.3 & 71.6 & 80.4 & 64.8 & 61.7 & 56.7 & 61.8 & 79.9 & 61.4 & 56.3 & 81.7 & 73.6 & 69.2 & 77.6 & 80.5 \\
200 & 200 & 30\% & 30\% & 72.2 & 80.5 & 86.7 & 71.3 & 70.9 & 60.6 & 72.4 & 86.1 & 67.5 & 65.7 & 86.9 & 79.5 & 72.8 & 80.7 & 83.8 \\
250 & 500 & 30\% & 30\% & 81.1 & 89.6 & 93.7 & 82.4 & 83.6 & 67.7 & 84.4 & 93.2 & 78.8 & 80.1 & 91.1 & 84.8 & 75.0 & 84.6 & 86.6 \\
500 & 500 & 30\% & 30\% & 87.4 & 94.9 & 97.3 & 90.3 & 91.3 & 75.5 & 92.0 & 97.0 & 87.4 & 88.7 & 94.9 & 87.2 & 78.6 & 87.7 & 88.7 \\
\hline
100 & 100 & 60\% & 60\% & 57.9 & 53.6 & 70.5 & 54.5 & 47.8 & 52.3 & 46.0 & 70.3 & 52.4 & 44.4 & 72.9 & 67.0 & 65.1 & 72.1 & 75.8 \\
100 & 200 & 60\% & 60\% & 63.1 & 58.5 & 76.5 & 61.3 & 53.4 & 55.0 & 47.3 & 76.3 & 58.9 & 48.7 & 74.7 & 70.3 & 63.5 & 71.6 & 76.1 \\
200 & 200 & 60\% & 60\% & 67.5 & 67.8 & 82.9 & 67.4 & 61.1 & 58.8 & 54.7 & 82.5 & 64.5 & 56.0 & 77.9 & 74.7 & 64.5 & 73.4 & 78.9 \\
250 & 500 & 60\% & 60\% & 74.1 & 79.5 & 90.2 & 76.8 & 73.0 & 62.1 & 68.2 & 89.8 & 73.9 & 67.6 & 83.3 & 81.4 & 67.1 & 76.9 & 82.4 \\
500 & 500 & 60\% & 60\% & 79.9 & 88.4 & 94.6 & 83.8 & 83.0 & 66.0 & 80.9 & 94.2 & 80.1 & 78.6 & 88.1 & 85.4 & 70.4 & 80.9 & 85.8 \\
\hline
100 & 100 & 30\% & 60\% & 59.9 & 58.5 & 73.0 & 56.1 & 50.3 & 52.2 & 50.6 & 72.7 & 53.6 & 46.9 & 74.7 & 67.0 & 65.2 & 73.0 & 76.2 \\
100 & 200 & 30\% & 60\% & 63.9 & 65.3 & 78.8 & 62.4 & 56.6 & 54.5 & 54.4 & 78.3 & 59.4 & 51.4 & 79.6 & 72.9 & 67.3 & 75.2 & 78.7 \\
200 & 200 & 30\% & 60\% & 69.2 & 73.1 & 84.2 & 69.1 & 64.5 & 58.3 & 60.9 & 83.8 & 65.5 & 58.6 & 80.7 & 76.1 & 66.8 & 76.0 & 80.8 \\
250 & 500 & 30\% & 60\% & 78.0 & 86.5 & 92.6 & 80.3 & 79.7 & 65.0 & 78.7 & 92.2 & 76.7 & 75.2 & 88.0 & 83.7 & 71.6 & 81.3 & 85.2 \\
500 & 500 & 30\% & 60\% & 82.8 & 92.1 & 96.1 & 86.6 & 87.6 & 69.6 & 86.9 & 95.9 & 83.1 & 83.7 & 91.6 & 86.9 & 74.1 & 83.9 & 87.4 \\
\hline
\end{tabular}
\end{adjustbox}
\caption{Empirical power and coverage probabilities in percent under $H_1$ based on piecewise exponential distributions.}
\label{pwexp_h1}
\end{table}

\subsection{Practical recommendations}
Based on our simulation results and the theoretical findings of Section \ref{sec:corrections_typeI_errors} we make the following recommendations:
\begin{itemize}
    \item The methods based on the modified delta methods for the bootstrap of \citet{fang2019inference} paired with the cloglog-transformation as well as the numerical delta method with bias correction of \citet{hong2018numerical} lead to tests which are able to control the specified type-I error while maintaining good power in many settings.
    \item Regarding confidence intervals, the situation is two-fold: as long as the true effect is not too close to 0, the interval $\text{CI}^{N(2)}$ works well, often also in smaller samples. If the true effect is very close to 0, obtaining a two-sided confidence interval with reasonable coverage seems to be very challenging. We therefore recommend to only compute right-sided intervals with the left boundary set to 0, or to treat the two-sided intervals as merely illustrative and exploratory, avoiding any inferential statements based on them if the point estimate $\hat{\Delta}_{\tau}(S_1, S_2)$ is very low.
    \item To achieve reasonable type-I errors, relatively large sample sizes are necessary, especially if higher censoring proportions are expected.
    \item In situations where the proportional hazards assumptions is reasonable, a test tailored to this setup such as the proposal of \citet{wellek1993} is expected to achieve considerably better type-I error control, especially since this setup proved to be the most challenging for our methods in our simulations.
\end{itemize}

\section{Real-Data Application}

In order to evaluate the practical relevance of our method, we consider the OAM4971g study (METLung, \citealp{spigel2017}). The METLung trial was a phase III study investigating whether the addition of the MET-targeting antibody onartuzumab to erlotinib could improve the outcomes in previously treated patients with advanced MET-positive non-small cell lung cancer. Earlier phase II data had suggested a potential benefit, making MET inhibition a promising therapeutic strategy. The data is not publicly available but have been digitized and published in the R package \texttt{kmdata} \citep{fell2021kmdata}. 

In this trial, a total of 499 patients were randomized at a one-to-one ratio to receive either erlotinib plus onartuzumab or erlotinib plus placebo. The primary endpoint was overall survival (OS), with progression-free survival (PFS), response rate, and safety assessed as secondary outcomes. Figure \ref{fig:casestudy} shows the Kaplan–Meier curves for OS and PFS by treatment arm. 
Although the study was designed as a superiority trial rather than an equivalence or non-inferiority trial, such a post hoc analysis may still provide clinically meaningful conclusions. 

\begin{figure}[htp]
\centering
\includegraphics[width=.49\textwidth]{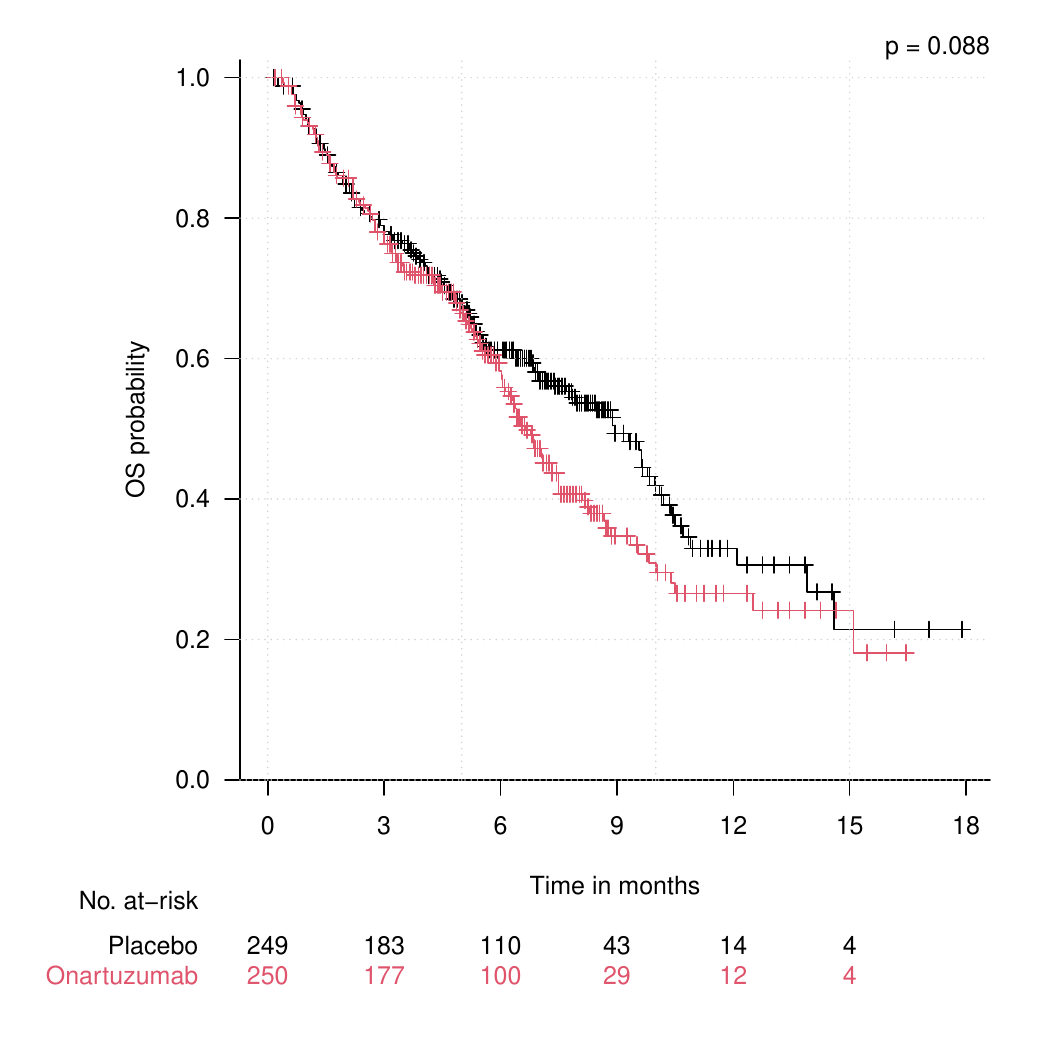}\hfill
\includegraphics[width=.49\textwidth]{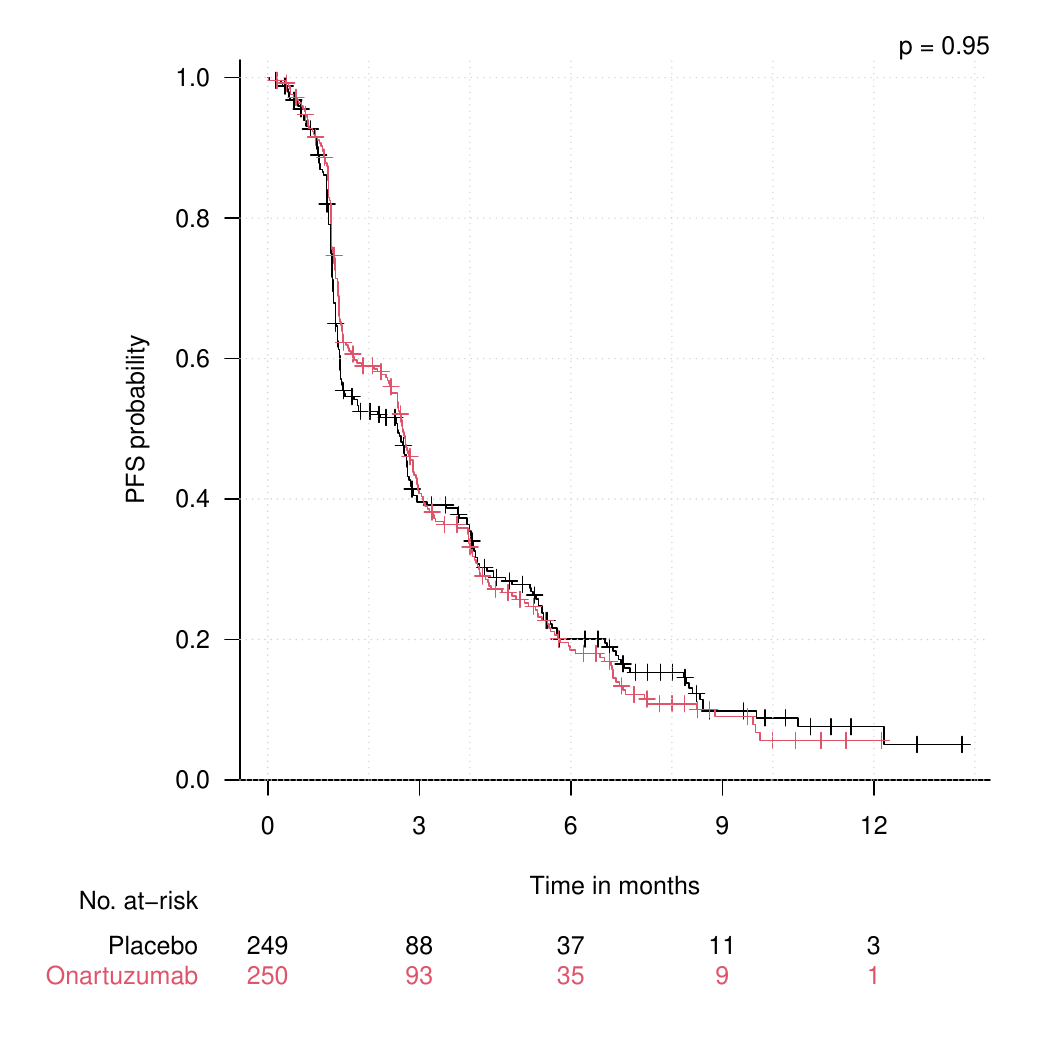}
\caption{Kaplan–Meier curves and numbers of patients at risk at several time points for overall survival (OS) and progression-free survival (PFS) by treatment arm. The $p$-values displayed in the top right correspond to log-rank tests between treatment groups.}
\label{fig:casestudy}
\end{figure}

Regarding superiority, the study failed to demonstrate a clinical benefit for the combination therapy. Median overall survival and progression-free survival were not improved in the onartuzumab arm, given by 8.9 (2.6) months in the control group vs. 6.7 (2.7) months in the onartuzumab arm for OS (PFS, respectively), and the log rank tests could not conclude any significant differences between the treatment groups ($p$-value $=0.088$ for OS, $p$-value $=0.95$ for PFS, see Figure \ref{fig:casestudy}). Further, both figures indicate the non-proportionality of hazard rates, which can substantially affect the interpretation and performance of the log-rank tests. 

However, these non-significant results alone do not prove that there is no relevant difference between the two treatments, which might be another clinical question as safety concerns (e.g., higher rates of adverse events) or additional costs would result in favoring the placebo treatment in case of similar efficacy. In order to answer this question, we consider the equivalence hypothesis as derived in \eqref{equivalencehypothesis}, utilizing different equivalence thresholds $\varepsilon$.

Figure \ref{fig:casestudy2} displays the resulting $p$-values of the tests in dependence of the equivalence threshold $\varepsilon$ for OS and PFS, i.e., using the delta method and the different subsampling and bootstrap procedures, respectively, to obtain critical values.  
Calculating multiple $p$-values for different thresholds can also be justified mathematically; details are provided in Appendix \ref{appendix:dynamic}.
As the observation period ended after 18 months, we consider $\tau=18$.
In view of our simulation results, the Fang \& Santos bootstrap method resulted in acceptable type-I error probabilities in the piece-wise exponential model setting with partly overlapping survival curves which seems to describe the present lung cancer data most accurately. For total sample sizes of about $n_1+n_2=500$ these type-I error probabilities are close to or below 0.05, so we can also assume the type-I error to be controlled in the present analysis.
Regarding OS, the point estimate equals $\hat{\Delta}_{\tau}(S_1, S_2)=0.054$, which is displayed as the red dashed line in the left panel of Figure~\ref{fig:casestudy2}. 
We observe that the smallest values of $\varepsilon$ for which $H_0$ can be rejected are given by $0.038$, $0.05$, $0.06$, $0.07$ and $0.052$ (for Fang \& Santos, delta method, adj. delta method, Efron's bootstrap and subsampling, respectively).

Similarly, regarding PFS, the test point estimate equals $\hat{\Delta}_{\tau}(S_1, S_2)=0.0185$, which is much smaller, as the survival curves are very similar (see Figure \ref{fig:casestudy}). The obtained minimal thresholds for rejection are here given as 0.006, 0.012, 0.016, 0.020 and 0.004 (same order as above). In general, $p$-values are comparably small, which is a direct consequence of the test statistic's very small value.

Finally, we note that selecting the equivalence threshold generally involves a clinical decision made in advance. The threshold should represent a clinically negligible difference and, at the same time, account for the study’s ability to detect equivalence with sufficient reliability.

\begin{figure}[htp]
\centering
\includegraphics[width=.49\textwidth]{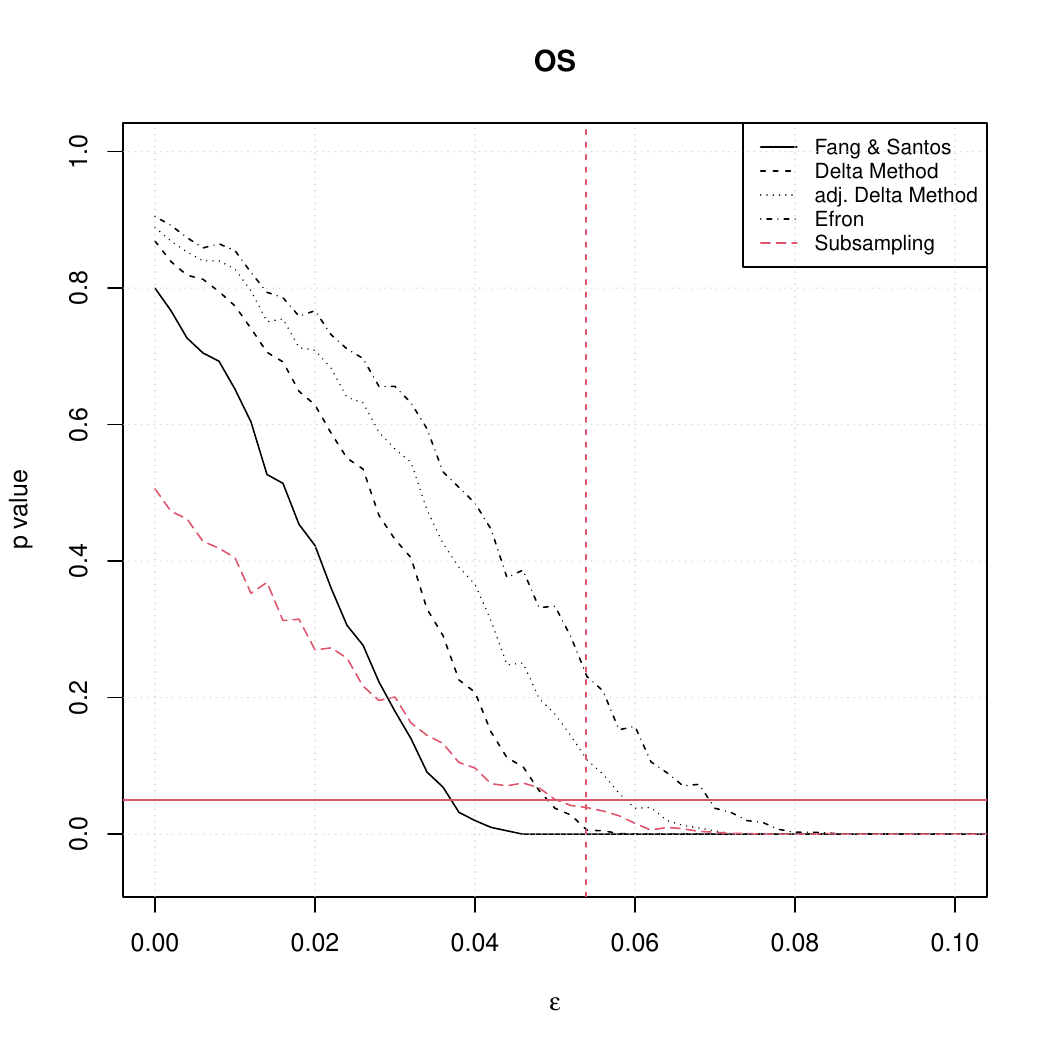}\hfill
\includegraphics[width=.49\textwidth]{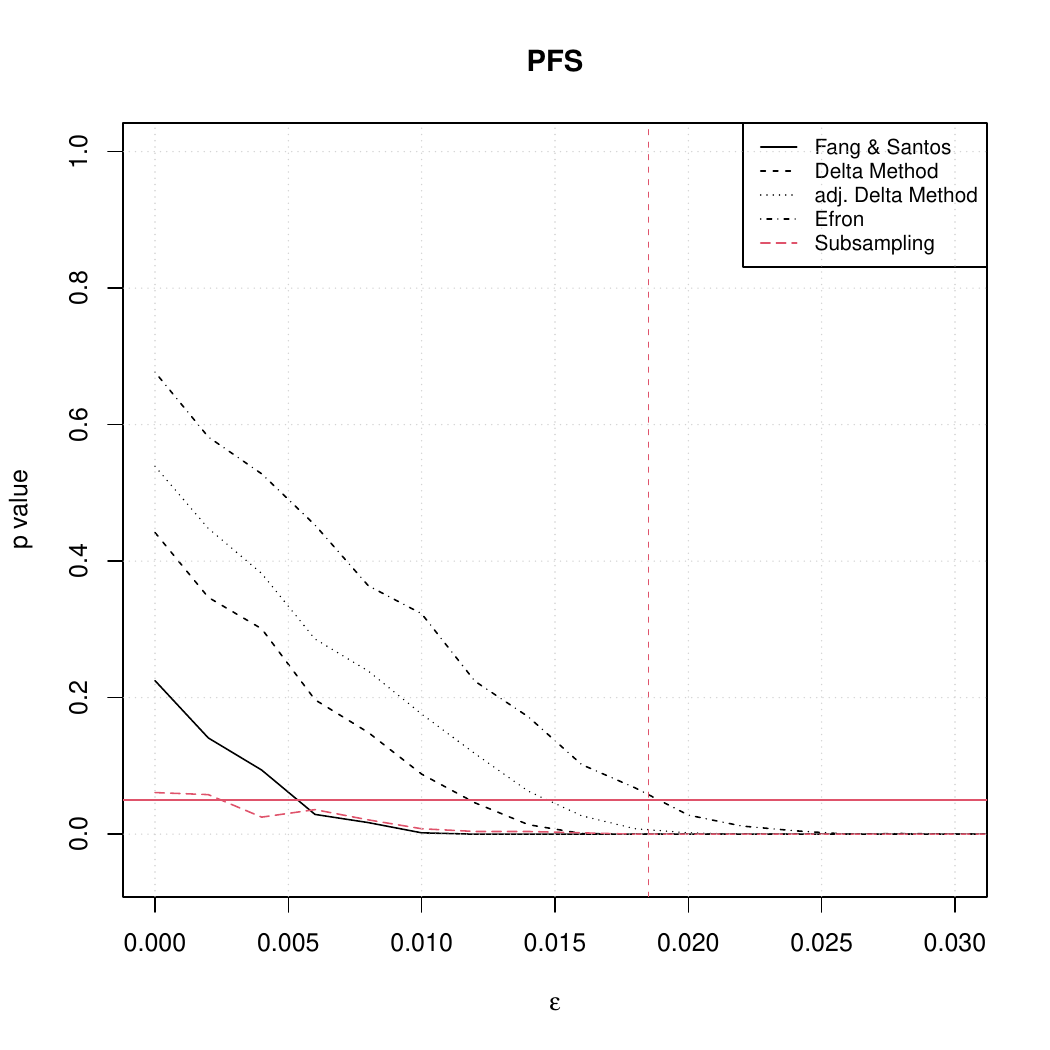}
\caption{Obtained $p$-values of the tests in dependence of the equivalence threshold $\varepsilon$ for the delta method and the different subsampling and bootstrap procedures, respectively, for OS and PFS. The solid red horizontal line indicates the $5\%$-significance level, the vertical dashed red lines indicates the value of the test statistic.}
\label{fig:casestudy2}
\end{figure}

\section{Conclusion and Discussion}
We have considered the two-sample testing problem of equivalence of two survival functions from independent groups. As our theoretical findings and simulation results demonstrate, in full generality this inference problem is difficult, in the sense that larger samples and more complicated methods are needed in comparison to classical two-sample tests for equality of distributions or for other estimands relevant to survival analysis. Nonetheless, our results indicate that the $L^1$-distance constitutes a powerful tool to quantify the difference as well as similarity of survival functions under minimal assumptions. Even though both applications are handled in a joint theoretical framework, they should be distinguished in applications: We recommend using the proposed effect measure either to quantify differences between survival curves by means of a confidence interval or to test for their equivalence, but not both in the same study because these two approaches address different statistical problems.

Future research could consider a full analysis of $L^1$-distances between cumulative incidence functions in the case of competing risks or  $L^1$-distances between transition probability curves in more general multi-state models.
In the supplementary materials for the present paper the procedure in the competing risks case is discussed in some detail.
With a glance at the present simulation results in the survival case, however, we expect that much larger sample sizes will be required to control the type-I error of equivalence tests in the competing risks case.

\section*{Acknowledgements}

We wish to thank Zoe Lange (Ruhr University Bochum) and Markus Pauly (TU Dortmund University) for helpful discussions.

\newpage
\appendix
\section*{Appendix}

The appendices are organized as follows: Appendix \ref{appendix:proofs} contains the proofs of the results from the main part, as well as additional mathematical results that are necessary for these proofs. In Appendix \ref{appendix:comprisk} we extend our results to the competing risks case. Appendix \ref{appendix:dynamic} contains a discussion regarding the data dependent choice of the equivalence threshold $\varepsilon.$

\section{Proofs}\label{appendix:proofs}
In this section, the proofs of all stated results are presented. As in the main body, we denote by $D[0,\tau]$ the space of càdlàg functions equipped with the supremum norm. We first need a simple continuity result, which we prove for the sake of completeness.
\begin{lemma}\label{continous_integral}
    The mapping $\Psi(f)=\frac{1}{\tau}\int_0^{\tau}|f(t)|dt$ from $D[0,\tau]$ to $\mathbb{R}$ is continuous.
\end{lemma}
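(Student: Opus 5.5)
The plan is to show that $\Psi$ is Lipschitz continuous with constant $1$ with respect to the supremum norm on $D[0,\tau]$. Lipschitz continuity immediately gives continuity. Before doing so, I will note that $\Psi$ is well defined. Every càdlàg function on the compact interval $[0,\tau]$ is bounded, and it has at most countably many discontinuities. Hence $f$ is Borel measurable, indeed a uniform limit of step functions. So $|f|$ is bounded and measurable, and the Lebesgue integral $\int_0^\tau |f(t)|\,dt$ exists and is finite.

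The main step is a pointwise estimate followed by integration. For $f,g\in D[0,\tau]$, the reverse triangle inequality gives $\bigl||f(t)|-|g(t)|\bigr|\le |f(t)-g(t)|\le \|f-g\|_\infty$ for every $t\in[0,\tau]$. Integrating over $[0,\tau]$ and dividing by $\tau$ then yields
\begin{equation*}
|\Psi(f)-\Psi(g)|\le \frac{1}{\tau}\int_0^\tau \bigl||f(t)|-|g(t)|\bigr|\,dt\le \frac{1}{\tau}\int_0^\tau |f(t)-g(t)|\,dt\le \|f-g\|_\infty .
\end{equation*}
Consequently, whenever $\|f_n-f\|_\infty\to 0$, we get $\Psi(f_n)\to\Psi(f)$, which is the claimed continuity.

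I expect no real obstacle here. The only point needing care is measurability and integrability of càdlàg functions, which is dealt with by the boundedness and step-function approximation remark above. It may also be worth remarking that the same bound holds with $\|f-g\|_\infty$ replaced by the $L^1$-distance. This stronger form is convenient later, since the bound is uniform and linear, and it can be reused in the directional Hadamard differentiability argument.
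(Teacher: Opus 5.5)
Your proposal is correct and follows essentially the same route as the paper, which also bounds $|\Psi(f_n)-\Psi(f)|$ by $\frac{1}{\tau}\int_0^\tau|f_n-f|\,dt\le\sup_t|f_n(t)-f(t)|$ using the reverse triangle inequality. Your added remarks on measurability of càdlàg functions and the $1$-Lipschitz formulation are harmless refinements.
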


\begin{proof}
    Let $(f_n)_{n\in\mathbb{N}}$ be a sequence in $D[0,\tau]$ converging uniformly to $f\in D[0,\tau]$ as $n\to \infty.$ Then by the (reverse) triangle- and Hölder inequality
    \begin{equation*}
    \begin{split}
                &\left|\frac{1}{\tau}\int_0^{\tau}|f_n(t)|dt-\frac{1}{\tau}\int_0^{\tau}|f(t)|dt\right|\\
                =&\frac{1}{\tau}\left|\int_0^{\tau}|f_n(t)|-|f(t)|dt\right|\\
                \leq&\frac{1}{\tau}\int_0^{\tau}\left|f_n(t)-f(t)\right|dt\\
                \leq& \sup_{t\in\tau}|f_n(t)-f(t)|
    \end{split}
    \end{equation*}
    which converges to 0 as $n\to\infty.$
\end{proof}

With this result, we can prove the consistency of $ \hat{\Delta}_{\tau}(S_1, S_2).$
\begin{proof}[Proof of Theorem~\ref{consistency_effectmeasure}]
    The Kaplan-Meier estimators $\hat{S}_j$ are elements of $D[0,\tau]$ and as noted e.g. in Lemma 1 of \citet{liu2020resampling}, they are continuous functionals of empirical processes indexed by VC-classes. Because every VC-class is also a Donsker and therefore Glivenko-Cantelli class (see e.g. Chapter 2.6 of \citealp{vanderVaartWellner2023}), the vector $(\hat{S}_1, \hat{S}_2)$ converges outer almost surely to $(S_1, S_2).$ The result now follows from the continuous mapping theorem (Theorem 1.11.1 of \citealp{vanderVaartWellner2023}), Lemma \ref{continous_integral} and the continuity of addition on $D[0,\tau].$
\end{proof}

In order to establish our weak convergence results we need another preliminary analytic result, the directional differentiability of $\Psi.$ A similar result was already established by \citet{bastian2024comparing}, but they considered the functional $\Psi$ on a different function space. 

\begin{lemma}\label{directional_diff_L1}
    The functional $\Psi(f)$ is directionally Hadamard differentiable at $f\in D[0,\tau]$ with derivative $\Psi'_{f}(h)=\frac{1}{\tau}\int_{\{f=0\}}|h(x)|dx+\frac{1}{\tau}\int_{\{f\ne0\}}\operatorname{sign}(f(x))h(x)dx.$
\end{lemma}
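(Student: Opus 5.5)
We verify the definition of directional Hadamard differentiability (Definition 2.2 in \citet{shapiro1990concepts}). Take $t_n\downarrow 0$ and $h_n\to h$ in $D[0,\tau]$ under the supremum norm. We must show
\begin{equation*}
\frac{\Psi(f+t_nh_n)-\Psi(f)}{t_n}\to \Psi'_f(h).
\end{equation*}
The approach is pointwise: reduce the problem to the scalar function $x\mapsto|x|$, then pass to the limit under the integral by dominated convergence. We also note that $\Psi'_f$ is continuous in $h$, because it is Lipschitz with constant $1$ in the supremum norm. This is the form of the derivative that Fang and Santos require.

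First, we write the difference quotient as $\frac1\tau\int_0^\tau q_n(x)\,dx$ with
\begin{equation*}
q_n(x)=\frac{|f(x)+t_nh_n(x)|-|f(x)|}{t_n}.
\end{equation*}
By the reverse triangle inequality, $|q_n(x)|\le |h_n(x)|\le \sup_n\|h_n\|_\infty<\infty$. A uniformly convergent sequence of càdlàg functions is uniformly bounded, since each càdlàg function on $[0,\tau]$ is bounded. Hence the constant bound is integrable on $[0,\tau]$. Measurability is unproblematic because càdlàg functions are Borel measurable.

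Next, we identify the pointwise limit by splitting into two cases.
\begin{itemize}
\item On $\{f=0\}$ we have $q_n(x)=|h_n(x)|\to|h(x)|$.
\item On $\{f\neq0\}$, say $f(x)>0$, we have $f(x)+t_nh_n(x)>0$ for all large $n$, since $t_nh_n(x)\to0$ uniformly. Then $q_n(x)=h_n(x)\to h(x)=\operatorname{sign}(f(x))h(x)$. The case $f(x)<0$ is symmetric.
\end{itemize}
Thus $q_n\to \mathbbm{1}\{f=0\}|h|+\mathbbm{1}\{f\ne0\}\operatorname{sign}(f)h$ pointwise. Dominated convergence then gives convergence to $\Psi'_f(h)$.

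The main obstacle is only bookkeeping. The threshold $n$ after which $f(x)+t_nh_n(x)$ has the same sign as $f(x)$ depends on $x$, so there is no uniform control near the zero set of $f$. The pointwise argument combined with dominated convergence sidesteps this, because we never need uniformity in $x$. Finally, we remark that $\Psi'_f$ is positively homogeneous, and it is linear exactly when $\{f=0\}$ has Lebesgue measure zero; this is the fact used later for the bootstrap discussion.
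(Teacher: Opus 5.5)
Your proof is correct, but it is organized differently from the paper's.

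The paper factors $\Psi=\Psi_2\circ\Psi_1$ through the continuous linear embedding $\Psi_1:D[0,\tau]\to L^1[0,\tau]$. It then proves the stronger fact that $\Psi_2=\tau^{-1}\|\cdot\|_{L^1}$ is directionally Hadamard differentiable along $L^1$-convergent directions $h_n\to h$, and applies the chain rule. On $\{f=0\}$ it uses the reverse triangle inequality in $L^1$. On $\{f\neq0\}$ it has to pass to a subsequence to get a.e.\ convergence and then produce an integrable majorant.

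You work directly with sup-norm convergence in $D[0,\tau]$. That is exactly what the functional delta method later needs, since the weak convergence of $\sqrt{n}((\hat{S}_1-\hat{S}_2)-(S_1-S_2))$ is established in $D[0,\tau]$ with the supremum norm. As a result, pointwise convergence of the difference quotients is automatic, and the constant $\sup_n\|h_n\|_\infty$ is a trivially integrable majorant.

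This makes your argument shorter and airtight. By contrast, the paper's majorant step (``$L^1$ convergence implies uniform integrability, so $\sup_n|h_n|+|h|$ is integrable'') is not valid as written. Take $h_n=n\mathbbm{1}_{I_n}$ with pairwise disjoint intervals $I_n\subset[0,\tau]$ of length $c\,n^{-2}$, with $c>0$ small enough that they fit. Then $h_n\to0$ in $L^1$, yet $\int\sup_n h_n=\infty$. The step can be repaired by extracting a subsequence with $\sum_k\|h_{n_k}-h\|_{L^1}<\infty$, or by a generalized dominated convergence argument using $|h_n|+|h|\to2|h|$ in $L^1$.

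Once repaired, the paper's route buys generality: differentiability tangentially to the weaker $L^1$ topology, which would also cover estimators that converge only in $L^1$.

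Your side remarks are also correct: $\Psi'_f$ is Lipschitz with constant $1$ in the supremum norm, and it is linear exactly when $\lambda(\{f=0\})=0$.
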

\begin{proof}
    
    The mapping $\Psi$ can be represented as the composition $\Psi_2\circ \Psi_1$ where $\Psi_1:D[0,\tau]\to L^1[0,\tau]$ is the natural injection, $\Psi_1(f)=f$, and $\Psi_2:L^1[0,\tau]\to \mathbb{R}$ is a multiple of the usual $L^1$ norm, $\Psi_2(f)=\frac{1}{\tau}||f||_{L^1}.$ The functional $\Psi_1$ is linear and by Lemma \ref{continous_integral} also continuous, which implies it equals its own Hadamard derivative.  By the chain rule (Proposition 3.6 of \citet{shapiro1990concepts}) it remains to show that 
    \begin{equation*}\label{hadamard_to_show}
        \left|\frac{\frac{1}{\tau}\int_0^{\tau}|f(x)+h_n(x)t_n|dx-\frac{1}{\tau}\int_0^{\tau}|f(x)|dx}{t_n}-\frac{1}{\tau}\left(\int_{\{f=0\}}|h(x)|dx+\int_{\{f\ne0\}}\operatorname{sign}(f(x))h(x)dx\right)\right|
    \end{equation*}
    converges to $0$ for $h_n\to h$ in $L^1$ and $t_n \searrow 0.$ Subsequently we omit the scaling factor $1/\tau$ to improve readability. We treat the cases $f=0$ and $f\ne 0$ separately. On the set $N_f :=\{f=0\}\cap [0,\tau]$ we have 
    \begin{equation*}
    \begin{split}
                \left|\frac{\int_{N_f}|h_n(x)t_n|dx}{t_n}-\int_{N_f}|h(x)|dx\right|&=\left|\frac{t_n\int_{N_f}|h_n(x)|dx}{t_n}-\int_{N_f}|h(x)|dx\right|\\
                &=\left|\int_{N_f}|h_n(x)|-|h(x)|dx\right|\\
                &\leq \int_{N_f}|h_n(x)-h(x)|dx \to 0
    \end{split}
    \end{equation*}
    by the reverse triangle inequality and monotonicity of the integral. Now define the function
    \begin{equation*}
        g_n(x)=\frac{|f(x)+h_n(x)t_n|-|f(x)|}{t_n}.
    \end{equation*}
    Convergence in $L^1$ implies converge in measure and by a subsequence argument, we can assume w.l.o.g. that $h_n\to h$ almost everywhere. On the set $\{f\ne0\}\cap [0,\tau]$ and by considering $\{f>0\}\cap [0,\tau]$ and $\{f<0\}\cap [0,\tau]$ separately, it is easy to see that $g_n(x)$ converges almost everywhere to $g(x)=\operatorname{sign}(f(x))h(x).$ Additionally we have
    \begin{equation*}
    \begin{split}
      |g_n(x)-g(x)|&=  \left|\frac{|f(x)+h_n(x)t_n|-|f(x)|}{t_n}-\operatorname{sign}(f(x))h(x)\right|\\
      &\leq \left|\frac{|f(x)+h_n(x)t_n|-|f(x)|}{t_n}\right| + |h(x)| \\
      &\leq \frac{|f(x)+h_n(x)t_n-f(x)|}{t_n} + |h(x)|\\
      &=|h_n(x)|+|h(x)|\\
      &\leq \sup_n |h_n(x)| + |h(x)|.
    \end{split}
    \end{equation*}
    Because $L^1$ convergence implies uniform integrability of the sequence $(h_n)_{n\in\mathbb{N}},$ the function 
    $$x \mapsto \sup_n |h_n(x)| + |h(x)|$$
    is integrable. The result now follows from the dominated convergence theorem.
\end{proof}
\begin{proof}{Proof of Theorem~\ref{limitABCgeneral}}
    It follows analogously to the proofs of \citet{liu2020resampling}, that
    \[\sqrt{n}((\hat{S}_1-\hat{S}_2)- (S_1-S_2))\xrightarrow{d}\mathbb{G}\]
    on $D[0,\tau].$ The result now follows from Lemma \ref{directional_diff_L1} and an application of the functional Delta method for directionally Hadamard-differentiable functionals (Theorem 2.1 of \citet{shapiro1991asymptotic}).
\end{proof}
In order to unify the treatment of all our testing procedures, we subsequently establish a general result regarding the asymptotic behavior of hypothesis tests. It can be viewed as a formalization of some of the introductory remarks of \citet{beran1988prepivoting}. None of these results are new by themselves, and we do not use the most general possible formulation, which has the advantage of making the assumptions easy to check in applications. The essence of the lemma is that regarding asymptotic size and power, ``all hypothesis tests are created equal''. The result is only formulated for the one-sided case, a two-sided version is immediate.
\begin{lemma}\label{behavior_general_tests}
    Let $T_n$ be a sequence of real-valued statistics with
    \[r_n(T_n-\theta)\xrightarrow{d}T\]
    as $n\to\infty$ for a sequence of constants $r_n\to\infty$ as $n\to \infty$ and a non-degenerate random variable $T$ with continuous distribution function $F.$ Denote by $F_n$ a sequence of (possibly random) distribution functions with $\sup_{x\in\mathbb{R}}|F_n(x)-F(x)|\xrightarrow{p}0$ as $n \to \infty.$
    For $\alpha\in(0,1)$ and $\theta_0\in\mathbb{R}$ define \[\varphi_{n,\alpha}=\mathbbm{1}\{r_n(T_n-\theta_0)\leq  F_n^{-1}(\alpha)\}\] where $F_n^{-1}(\alpha)$ denotes the (possibly random) $\alpha$-quantile of $F_n.$ Then as $n\to \infty$
    \begin{enumerate}
        \item[(a)] $E(\varphi_{n,\alpha})\to\alpha$ if $\theta-\theta_0=0.$
        \item[(b)] $E(\varphi_{n,\alpha})\to0$ if $\theta-\theta_0>0.$
        \item[(c)] $E(\varphi_{n,\alpha})\to1$ if $\theta-\theta_0<0.$
    \end{enumerate}
    These results also remain valid if $F_n\to F$ in outer probability.
\end{lemma}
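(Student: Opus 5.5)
The plan is to reduce everything to the quantile convergence $F_n^{-1}(\alpha)\xrightarrow{p}F^{-1}(\alpha)$ and then write
\[
r_n(T_n-\theta_0)=r_n(T_n-\theta)+r_n(\theta-\theta_0).
\]

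\textbf{Step 1 (quantile convergence).} Since $F$ is continuous and $T$ is non-degenerate, we have $0<F(x)<1$ near $q:=F^{-1}(\alpha)$. Fix $\eta>0$. Then $F(q-\eta)<\alpha$, because otherwise $q$ would not be the infimum defining the quantile. Also $F(q+\eta)>\alpha$ provided $F$ is strictly increasing at $q$. This is the delicate point: if $F$ is flat at level $\alpha$, the quantile need not be consistent. In that case I would bound $\varphi_{n,\alpha}$ directly through $F_n(\cdot)$ rather than through quantiles. Concretely, I would use the event inclusions
\[
\{r_n(T_n-\theta_0)\le F_n^{-1}(\alpha)\}\subseteq\{F_n(r_n(T_n-\theta_0))\ge\alpha\},
\]
together with the reverse inclusion with $\alpha$ replaced by $\alpha+\delta$. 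Under strict monotonicity at $q$, uniform convergence in probability gives that, with probability tending to one, $F_n(q-\eta)<\alpha<F_n(q+\eta)$. Hence $|F_n^{-1}(\alpha)-q|\le\eta$ with probability tending to one. The outer-probability version follows identically, replacing $P$ by $P^*$.

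\textbf{Step 2 (cases).} For (a), Slutsky's lemma gives $r_n(T_n-\theta_0)-F_n^{-1}(\alpha)\xrightarrow{d}T-q$. Since $F$ is continuous at $q$, the Portmanteau theorem yields $E\varphi_{n,\alpha}\to P(T\le q)=F(q)=\alpha$. For (b), $r_n(\theta-\theta_0)\to\infty$ while $r_n(T_n-\theta)-F_n^{-1}(\alpha)=O_p(1)$, so $P(\varphi_{n,\alpha}=1)\to0$. Part (c) is symmetric, with the drift tending to $-\infty$, so the probability tends to one.

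\textbf{Main obstacle.} The main obstacle is Step 1 when $F$ has a flat piece at level $\alpha$. There, (a) can still be rescued by sandwiching: for any $\delta>0$,
\[
P(Z_n\le F_n^{-1}(\alpha))\le P(F_n(Z_n)\ge\alpha)\le P(F(Z_n)\ge\alpha-\delta)+o(1),
\]
where $Z_n=r_n(T_n-\theta_0)$. The right-hand side converges to $P(F(T)\ge\alpha-\delta)=1-(\alpha-\delta)$ since $F(T)$ is uniform. This is the wrong direction, so I need to redo the sandwich with $F_n(Z_n)<\alpha\Rightarrow Z_n<F_n^{-1}(\alpha)$, which gives a lower bound of $P(F(Z_n)\le\alpha-\delta)\to\alpha-\delta$. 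The matching upper bound comes from $Z_n\le F_n^{-1}(\alpha)\Rightarrow F_n(Z_n)\le$ something close to $\alpha$, up to a jump of $F_n$, and that jump vanishes by uniform closeness to the continuous $F$. Letting $\delta\to0$ then gives $\alpha$ without requiring quantile consistency. I would present the proof via this probability-integral-transform argument, with $F(Z_n)\xrightarrow{d}U[0,1]$, to avoid strict monotonicity altogether.
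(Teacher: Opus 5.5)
Your argument is correct, but for the case where $F$ is flat at level $\alpha$ it takes a different route from the paper.

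\textbf{The paper's route.} It first proves (a) only for those $\alpha$ at which $F^{-1}$ is continuous, using quantile convergence at continuity points plus Slutsky; this is essentially your Steps 1--2. It then extends to every $\alpha\in(0,1)$ by noting three facts: $F^{-1}$ has at most countably many discontinuities, $\alpha\mapsto E(\varphi_{n,\alpha})$ is monotone, and the limit $\alpha$ is continuous. So it squeezes between neighbouring good levels $\alpha'<\alpha<\alpha''$. For (b) and (c) it needs tightness of $F_n^{-1}(\alpha)$, and cites Beran (1984) for the localization $[q(\alpha)^- -\delta,q(\alpha)^+ +\delta]$ when $F^{-1}$ jumps at $\alpha$.

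\textbf{Your route.} You squeeze in the $x$-direction instead. You use the switching relations $\{F_n(Z_n)<\alpha\}\subseteq\{Z_n<F_n^{-1}(\alpha)\}$ and $\{Z_n\le F_n^{-1}(\alpha)\}\subseteq\{F_n(Z_n-)\le\alpha\}$, together with $\sup_x|F_n(x)-F(x)|\le\delta$ and $F(Z_n)=F(Z_n-)$. This gives
\[
P\bigl(F(Z_n)\le\alpha-\delta\bigr)-o(1)\;\le\; P\bigl(Z_n\le F_n^{-1}(\alpha)\bigr)\;\le\; P\bigl(F(Z_n)\le\alpha+\delta\bigr)+o(1),
\]
and the probability integral transform then finishes (a).

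\textbf{What each buys.} Your argument is self-contained and treats every $\alpha$ directly. It never needs consistency of the quantile. It also yields (b) and (c) in the same frame, because $Z_n\to\pm\infty$ in probability forces $F(Z_n)\to 1$ or $0$. The paper's argument is shorter given standard lemmas (Lemma 23.3 of van der Vaart), but it pays with the interpolation-in-$\alpha$ step and an external reference for the discontinuous-quantile case.

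\textbf{Two points to tidy.}
\begin{enumerate}
\item The first inclusion you display in Step 1, $\{Z_n\le F_n^{-1}(\alpha)\}\subseteq\{F_n(Z_n)\ge\alpha\}$, is false. The correct switching relation is $F_n(x)\ge\alpha\iff x\ge F_n^{-1}(\alpha)$, so your inclusion has the direction reversed. Drop it, together with the first sandwich in your obstacle paragraph, and keep only the corrected inclusions.
\item In Step 2 you use $F_n^{-1}(\alpha)=O_p(1)$ for (b) and (c), but Step 1 only delivers this under strict monotonicity at $q$. You can fix this in either of two ways. One is to derive (b) and (c) from the PIT bounds above. The other is to note that tightness holds in general: pick $a<b$ with $F(a)<\alpha<F(b)$; then $F_n^{-1}(\alpha)\in[a,b]$ with probability tending to one.
\end{enumerate}
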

\begin{proof}[Proof of Lemma \ref{behavior_general_tests}]
    The proof of (a) is almost identical to the proof of Lemma 23.3 of \citet{vanderVaart1998}, we present the argument for sake of completeness. Assume for now, that $F^{-1}$ is continuous at $\alpha.$ Because convergence of distribution functions implies convergence of quantiles at  continuity points, an application of Slutzky's lemma and the assumption $\theta=\theta_0$ yields
    \[(r_n(T_n-\theta_0), F_n^{-1}(\alpha))\xrightarrow{d} (T, F^{-1}(\alpha))\]
    and therefore
    \[E(\varphi_{n,\alpha})=P(r_n(T_n-\theta_0)\leq  F_n^{-1}(\alpha))\to F(F^{-1}(\alpha))=\alpha\]
    as $n\to \infty.$ As quantile functions have at most countably many discontinuities, this must be valid for all but countably many $\alpha.$ Both left and right hand side of the above equation are monotone in $\alpha$ and the right hand side is continuous, therefore the statement must be valid for all $\alpha\in(0,1).$ In the case $F_n\to F$ in outer probability, the argument goes through unchanged, if the expectation $E(\varphi_{n,\alpha})$ is replaced by the outer expectation $E^{\ast}(\varphi_{n,\alpha}),$ because Slutzky's lemma is also valid for convergence in outer probability to constants (Example 1.4.7 of \citet{vanderVaartWellner2023}).

    To prove (b) and (c), note that
    \[r_n(T_n-\theta_0)=r_n(T_n-\theta)+r_n(\theta-\theta_0).\]
    The first term converges in distribution to $T,$ the second tends to (minus) infinity. Slutzky's lemma yields $r_n(T_n-\theta)\to \infty$ (b) or $r_n(T_n-\theta)\to -\infty$ (c) in (outer) probability. Because  $F_n^{-1}(\alpha)$ is asymptotically tight, the limit of $E(\varphi_{n,\alpha})$ is 0 or 1 respectively. Note that even if $F^{-1}$ is not continuous at $\alpha$,
    \[\lim_{n\to\infty} P_{\ast}(F_n^{-1}(\alpha) \in [q(\alpha)^--\delta,q(\alpha)^++\delta])=1\]
    with $q(\alpha)^-\coloneqq F^{-1}(\alpha)\leq q(\alpha)^+\coloneqq \inf\{p: F(p)>\alpha\}$ and for some $\delta>0,$ see Theorem 1 of \citet{beran1984bootstrap} for details.
\end{proof}

\begin{proof}[Proof of Theorem \ref{oracletest_limts}]
    The result follows directly from Theorem \ref{limitABCgeneral} and Lemma \ref{behavior_general_tests} because $D_{\Delta_{\tau}},$ as a mixture of two continuous distributions, has a continuous distribution function.
\end{proof}
\begin{proof}[Proof of Theorem \ref{subsampletest_limits}]
    Due to our regularity assumptions, Theorem 3.1 of \citet{politis2010k} ensures that the subsampling distribution $L_{n,b}$ converges in probability to the distribution of $D_{\Delta_{\tau}}$ under $H_0$ as well as $H_1.$ We can therefore apply Lemma \ref{behavior_general_tests}, which completes the proof.
\end{proof}
In order to prove the validity of our bootstrap procedure, we first establish a consistency result for the estimated derivative $\hat{\Psi}_n'.$
\begin{lemma}\label{consistency_derivative}
    For any fixed $h\in D[0,\tau]$ we have $\hat{\Psi}'_n(h)\to\Psi'_{S_1-S_2}(h)$ as $n\to\infty$ in (outer) probability.
\end{lemma}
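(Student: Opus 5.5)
Write $f=S_1-S_2$ and $\hat f_n=\hat S_1-\hat S_2$. The plan is to compare $\hat\Psi'_n(h)$ with $\Psi'_f(h)$ pointwise in $t$ and then control the Lebesgue measure of the set where the two integrands differ. Both integrands are bounded by $|h(t)|\le\|h\|_\infty<\infty$, since $h\in D[0,\tau]$ is bounded. Hence
\[
|\hat\Psi'_n(h)-\Psi'_f(h)|\le \frac{2\|h\|_\infty}{\tau}\,\lambda(E_n),
\]
where $\lambda$ is Lebesgue measure and $E_n\subseteq[0,\tau]$ is the set of $t$ at which the integrands differ. It therefore suffices to show $\lambda(E_n)\to0$ in outer probability.

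Using the uniform convergence $\sqrt n\,\|\hat f_n-f\|_\infty=O_p(1)$ (the weak convergence $\sqrt n(\hat f_n-f)\to\mathbb G$ from the proof of Theorem~\ref{limitABCgeneral}), I would split $[0,\tau]$ into three pieces.

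\emph{On $\{f=0\}$:} here $|\hat f_n(t)|\le\|\hat f_n-f\|_\infty=O_p(n^{-1/2})$. Since $c_n/\sqrt n\to0$, we have $\|\hat f_n-f\|_\infty\le 1/c_n$ with probability tending to one. On that event, for every $t\in\{f=0\}$ the estimator uses $|h(t)|$, matching $\Psi'_f$. So this set contributes nothing to $E_n$ with probability tending to one.

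\emph{On $\{|f|>2/c_n\}$:} on the same event, $|\hat f_n(t)|>1/c_n$ and $\operatorname{sign}(\hat f_n(t))=\operatorname{sign}(f(t))$. The integrands therefore agree here as well.

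\emph{Remaining set:} what is left is contained in $\{0<|f|\le 2/c_n\}$. Since $c_n\to\infty$, these sets decrease to the empty set. By continuity of Lebesgue measure from above on the finite interval $[0,\tau]$, their measure tends to zero deterministically.

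Combining the three cases, $\lambda(E_n)\le\lambda(\{0<|f|\le2/c_n\})\to0$ on events of probability tending to one, which gives the claim.

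The main obstacle is measurability. $\hat\Psi'_n(h)$ is a functional of the Kaplan–Meier process, and the event $\{\|\hat f_n-f\|_\infty\le1/c_n\}$ need not be measurable. I would handle this by phrasing the argument with inner probabilities of these events, i.e.\ outer probability of their complements, which tends to one by asymptotic tightness of $\sqrt n(\hat f_n-f)$. The convergence is then stated in outer probability, matching the formulation of the lemma. Beyond this, the only remaining point is the rate condition \eqref{boot_ratecondition}: it is used twice, once as $c_n\to\infty$ (to shrink the boundary set) and once as $c_n/\sqrt n\to0$ (to keep the estimation error below the threshold $1/c_n$).
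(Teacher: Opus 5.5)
Your proof is correct. It rests on the same two ingredients as the paper's proof: continuity from above of Lebesgue measure on the sets $\{0<|f|\le\epsilon\}$, and the rate condition $c_n/\sqrt n\to0$ combined with $\sqrt n\|\hat f_n-f\|_\infty=O_p(1)$. The organization differs, though.

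The paper splits the error into a sign part $A$ and an absolute-value part $B$. It proves only $B$, deferring $A$ to \citet{bastian2024comparing}. It controls $\lambda(M_1\oplus M_2)$ by treating the random sets as subsets of $[0,\tau]\times\Omega$ under the product measure $\tilde\lambda\otimes P$ and applying Fubini. The set where a nonzero $f$ is declared small is handled by uniform consistency with an auxiliary tolerance $\delta$. This gives $\{0<|f|\le\delta+1/c_n\}$ and requires a double limit ($n\to\infty$, then $\delta\searrow0$). Only the set where a zero of $f$ is missed uses the $\sqrt n$-rate.

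You instead use the single event $\{\|\hat f_n-f\|_\infty\le 1/c_n\}$ for both error types. This yields the deterministic, pathwise bound $\frac{2\|h\|_\infty}{\tau}\lambda(\{0<|f|\le 2/c_n\})$. It treats the sign and absolute-value parts at once, and it avoids the product-space construction with its joint-measurability and envelope caveats.

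Your route also gives more than the lemma asks. On that event the bound holds uniformly over $\{h:\|h\|_\infty\le K\}$, so you get uniform consistency of $\hat\Psi'_n$ on sup-norm balls directly. This implies the uniform-consistency condition that Theorem 3.2 of \citet{fang2019inference} places on the derivative estimator, without going through the Lipschitz argument of their Remark 3.4.

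One small wording fix: $c_n$ need not be monotone, so the sets $\{0<|f|\le 2/c_n\}$ need not decrease. Your conclusion still holds, because $\epsilon\mapsto\lambda(\{0<|f|\le\epsilon\})$ is monotone and tends to $0$ as $\epsilon\searrow0$.
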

\begin{proof}[Proof of Lemma \ref{consistency_derivative}]
    The subsequent argument shares the initial steps with the proof of Lemma 1 in \citet{bastian2024comparing}.

By the triangle inequality, $|\hat{\Psi}'_n(h)-\Psi'_f(h)|\leq A+B$ with
\begin{equation*}
    \begin{split}
    A=&\Bigg|\frac{1}{\tau}\int_{\{t\in[0,\tau]:|\hat{S}_1(t)-\hat{S}_2(t)|> \frac{1}{c_n} \}}\operatorname{sign}(\hat{S}_1(t)- \hat{S}_2(t))h(t)dt\\
    &-\frac{1}{\tau}\int_{\{t\in[0,\tau]:|S_1(t)-S_2(t)|> 0 \}}\operatorname{sign}(S_1(t)-S_2(t))h(t)dt\Bigg|,\\
        B=&\Bigg|\frac{1}{\tau}\int_{\{t\in[0,\tau]:|\hat{S}_1(t)-\hat{S}_2(t)|\leq \frac{1}{c_n} \}}|h(t)|dt-\frac{1}{\tau}\int_{\{t\in[0,\tau]:|S_1(t)-S_2(t)|=0 \}}|h(t)|dt\Bigg|,
    \end{split}
\end{equation*}
and it therefore suffices to show $A\to0$ and $B\to0$ in outer probability. Both terms can be handled similarly and, because \citet{bastian2024comparing} focused on $A$, we only prove the convergence of $B.$ Define the sets 
\begin{equation*}
    M_1=\{t\in[0,\tau]:|\hat{S}_1(t)-\hat{S}_2(t)|\leq 1/c_n\},\quad M_2=\{t\in[0,\tau]:|S_1(t)-S_2(t)|=0\}.
\end{equation*}
In order to avoid additional notation, we subsequently replace all sets and functions by a measurable envelope whenever necessary, without explicit distinction between them. Arguments given in \citet{vanderVaartWellner2023} formally justify this approach, we refer especially to Part I, Lemmas 1.2.1 and 1.2.2.
Due to the inequality 
\begin{equation*}
    \left|\int_{M_1}|h(x)|dx-\int_{M_2}|h(x)|dx\right|\leq\int_{M_1\oplus M_2}|h(x)|dx
\end{equation*}
and the boundedness of $h$, it is sufficient to prove 
\begin{equation}\label{convergence_symmdiff}
    \lambda(M_1\oplus M_2)\to 0
\end{equation} in (outer) probability as $n\to \infty$, where $\oplus$ denotes the symmetric difference of sets and $\lambda$ Lebesgue measure. We denote by $(\Omega, \mathcal{A}, P)$ the probability space our estimators are based upon. Additionally denote by $\mathcal{B}([0,\tau])$ the Borel $\sigma$-field on $[0,\tau].$ If we define a (probability) measure on the measurable space $([0,\tau]\times\Omega, \mathcal{B}([0,\tau])\times \mathcal{A})$ by
$\Tilde{P}=\Tilde{\lambda}\otimes P$ with $\Tilde{\lambda}=\frac{\lambda_{|[0,\tau]}}{\tau}$, we can reformulate \eqref{convergence_symmdiff} as
\[\mathbbm{1}\{E_n\cup Q_n\}=:u_n\xrightarrow{\Tilde{P}}0\]
with $$E_n=\{t\in[0,\tau]:|\hat{S}_1(t)-\hat{S}_2(t)|\leq 1/c_n, |S_1(t)-S_2(t)|>0 \} \subset [0,\tau] \times \Omega,$$
and $$Q_n=\{t\in[0,\tau]:|\hat{S}_1(t)-\hat{S}_2(t)|> 1/c_n, |S_1(t)-S_2(t)|=0 \} \subset [0,\tau] \times \Omega.$$ Due to the uniform consistency of the Kaplan-Meier estimators, for any $\eta>0$ and any $\delta>0$ there exists an $n_0 = n_0(\eta, \delta)$ which does not depend on $t$ such that the event $H_j=\{t\in[0,\tau]:|\hat{S}_j(t)-S_j(t)|\leq\delta\}$ has $\Tilde{P}$-probability at least $1-\eta/2$ for $n\geq n_0.$ Also note that 
\begin{equation*}
    E_n\cap H_1\cap H_2 \subset E_n'=\{t\in[0,\tau]:0<|S_1(t)-S_2(t)|\leq \delta +1/c_n\},
\end{equation*}
since
\begin{equation*}
    |S_1(t)-S_2(t)|\leq |\hat{S}_1(t)-\hat{S}_2(t)|+ |\hat{S}_1(t)-S_1(t)|+|\hat{S}_2(t)-S_2(t)|.
\end{equation*}
Additionally 
\begin{equation*}
    Q_n\subset Q_n'=\{t\in[0,\tau]:|(\hat{S}_1(t)-\hat{S}_2(t))- (S_1(t)-S_2(t))|> 1/c_n\}.
\end{equation*}
Therefore, for $n\geq n_0$ and by Fubini's theorem
\begin{equation*}
    \begin{split}
        &\int u_nd\Tilde{P}\\
        =&\Tilde{P}(E_n+Q_n)\\
        \leq&\Tilde{P}(E_n)+\Tilde{P}(Q_n)\\
        =&\int\int\mathbbm{1}\{E_n\}d\Tilde{\lambda} dP+ \int\int\mathbbm{1}\{Q_n\}d\Tilde{\lambda} dP\\
        \leq& \int\int\mathbbm{1}\{E_n'\}d\Tilde{\lambda} dP +\eta + \int\int\mathbbm{1}\{Q_n'\}d\Tilde{\lambda} dP\\
        \leq& \int\mathbbm{1}\{E_n'\}d\Tilde{\lambda} +\eta + P(\sup_{t\in[0,\tau]}\sqrt{n}|(\hat{S}_1(t)-\hat{S}_2(t))- (S_1(t)-S_2(t))|> \sqrt{n}/c_n).
    \end{split}
\end{equation*}
By letting $n\to\infty$ followed by $\delta\searrow0$ the first and third term vanish by continuity from above of the Lebesgue measure and the weak convergence of the Kaplan-Meier estimators on $D[0,\tau]$ together with assumption \eqref{boot_ratecondition}, respectively. This completes the proof, as $\eta>0$ was arbitrary.
\end{proof}

\begin{proof}[Proof of Theorem \ref{boottest_limits}]
     For $h_1,h_2\in D[0,\tau]$ by direct calculation
\[|\hat{\Psi}'_n(h_1)-\hat{\Psi}'_n(h_2)|\leq 2\sup_{t\in[0,\tau]}|h_1(t)-h_2(t)|,\]
i.e. $\hat{\Psi}'_n(h)$ is Lipschitz-continuous in $h.$ Due to Remark 3.4 of \citet{fang2019inference} and Lemma \ref{consistency_derivative}
we can apply Theorem 3.2 of \citet{fang2019inference}, which yields
    \begin{equation*}
    \hat{\Psi}'_n(\sqrt{n}((\hat{S}_1^{(B)}-\hat{S}_2^{(B)}) -(\hat{S}_1-\hat{S}_2))) \xrightarrow{d} D_{\Delta_{\tau}}
\end{equation*}
given the data in (outer) probability. Applying Lemma \ref{behavior_general_tests} completes the proof.
\end{proof}

\section{Extension to competing risk models}\label{appendix:comprisk}
Our approaches can easily be extended to a two-sample competing risks setup. To do so, we need to introduce some additional notation. Each sample, $j=1,2$, is modeled by individual competing risk processes  $(Z_{j}(t))_{t\ge0}$ with $m\in \mathbb{N}$ competing risks. We assume these processes to be càdlàg,  non-homogeneous Markov processes with state space $\{0,1, \dots, m\}$ and initial state $0$, i.e., $P(Z_{j}(0) = 0) = 1$. The other states $1, \dots, m$ represent the competing events and are absorbing.  For ease of presentation, we limit ourselves to the consideration of only $m=2$ risks. In most applications, this can be done without loss of valuable information if only one competing risk (here, the first) is of interest; all other causes of failure can be combined into a single competing endpoint (the second), which is then treated as a nuisance parameter.
The event time now is defined as the transition time out of the initial state $T_j=\inf\{t>0 | Z_{j}(t) \ne 0\}$, and we denote by $D_{j} \in \{1,2\}$ the event type. Furthermore, denote by $N_{jk}(t)=\sum_{i=1}^{n_j}\mathbbm{1}\{X_{ij}\leq t, \delta_{ij}=1, D_{ij}=k\}, k=1,2$ the number of observed cause-$k$ events up to $t$ in group $j.$
The distribution of $T_j$ is determined by the cumulative cause-specific hazard functions
\begin{equation*}
    A_{jk}(t)=\int_0^t \frac{1}{S_j(s-)}dH_{jk}(s), k=1,2
\end{equation*}
with $H_{jk}(s)=E(N_{jk}(s)),$ as the transition probability matrix of the process $Z_j$ can be recovered from them in terms of the product integral, c.f. \citet{gill1990survey}.
The main quantities of interest in the competing risks setup are the cumulative incidence functions (CIFs), also called subdistribution functions
\begin{equation*}
	F_{jk}(t) = P(T_{jk} \le t, D_{j}=k) = \int_{0}^{t}S(s-)dA_{jk}(s).
\end{equation*}
Since we are only interested in the first risk, we define our generalized discrepancy measure as
\begin{equation}\label{effectmeasure_comprisks}
    \Delta_{\tau}(F_{11}, F_{21})= \frac{1}{\tau}\int_0^{\tau}|F_{11}(t)-F_{21}(t)|dt,
\end{equation}
which also is contained in $[0,1].$ However, the upper bound is in general not attainable outside the case that the second risk only occurs with probability 0. While rescaling by e.g. $1/\max(F_{11}(\tau), F_{21}(\tau))$ would lead to an attainable upper bound, we refrain from doing so, as the measure would lose much of its geometric interpretation.
The hypothesis of equivalence in the competing risks setting can be formulated in terms of the generalized discrepancy measure \eqref{effectmeasure_comprisks} as
\begin{equation}\label{equivalencehypothesis_comprisk}
    H_0:\Delta_{\tau}(F_{11}, F_{21}) \geq \varepsilon \quad \text{versus}\quad H_1:\Delta_{\tau}(F_{11}, F_{21})<\varepsilon.
\end{equation}
We estimate the CIFs by the Aalen-Johansen estimators \citep{aalen1978empirical}
 \begin{equation*} \label{AJest}
 	\hat{F}_{jk}(t) = \int_{0}^{t} \frac{\hat{S}_j(s-)}{Y_j(s)} dN_{jk}(s),  
 \end{equation*}
which leads to the plug-in estimator for the generalized discrepancy measure \eqref{effectmeasure_comprisks}
\begin{equation*}\label{estimatedL1_distance_comprisk}
    \hat{\Delta}_{\tau}(F_{11}, F_{21})= \frac{1}{\tau}\int_0^{\tau}|\hat{F}_{11}(t)-\hat{F}_{21}(t)|dt.
\end{equation*}
The subsequent results are stated without formal proof, they can be established analogously to the corresponding results of the main paper, if the (conditional) convergence results for Aalen-Johansen estimators of \citet{dobler2017discontinuity, dobler2024erratum} are used instead of those for the Kaplan-Meier estimator. The following theorem establishes consistency and the asymptotic distribution of the \eqref{effectmeasure_comprisks}.
\begin{theorem}\label{limit_comprisk}
Suppose assumption \eqref{groupassumption} holds. Then as $\min(n_1,n_2)\to\infty,$ we have
    \begin{itemize}
        \item[(a)]
        \begin{equation*}
            \hat{\Delta}_{\tau}(F_{11}, F_{21})\to \Delta_{\tau}(F_{11}, F_{21})
        \end{equation*}
        in (outer) probability.
        \item[(b)]
        \begin{equation*}
            \begin{split}
        \sqrt{n}(\hat{\Delta}_{\tau}(F_{11}, F_{21})-\Delta_{\tau}(F_{11}, F_{21}))\xrightarrow{d}&\Tilde{D}_{\Delta_\tau}\coloneqq\frac{1}{\tau}\int_{\{t\in[0,\tau]:F_{11}(t)=F_{21}(t) \}}|\Tilde{\mathbb{G}}(t)|dt\\
                    &+
        \frac{1}{\tau}\int_{\{t\in[0,\tau]:F_{11}(t)\ne F_{21}(t)\}}\operatorname{sign}(F_{11}(t)-F_{21}(t))\Tilde{\mathbb{G}}(t)dt
    \end{split}
        \end{equation*}
    \end{itemize}
    where $\Tilde{\mathbb{G}}$ is a mean zero Gaussian process with an involved covariance function which can be derived from the one-sample results of \citet{dobler2017discontinuity, dobler2024erratum}.
\end{theorem}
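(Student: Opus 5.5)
The plan mirrors the Kaplan–Meier case, with the Aalen–Johansen estimator taking the place of the Kaplan–Meier estimator. Everything hinges on one input, a weak convergence result for the difference of the two cause-1 Aalen–Johansen estimators on $D[0,\tau]$. After that, part (a) follows from Lemma~\ref{continous_integral} and part (b) from Lemma~\ref{directional_diff_L1}.

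\textbf{Step 1: process-level convergence.} For each group $j$ I take from \citet{dobler2017discontinuity, dobler2024erratum} the one-sample result
$$\sqrt{n_j}(\hat F_{j1}-F_{j1})\xrightarrow{d} U_j \quad\text{on } D[0,\tau],$$
with $U_j$ a centred Gaussian process. This holds without continuity assumptions, provided $P(X_{ij}\ge\tau)>0$. The two samples are independent, so the pair converges jointly to independent limits. Writing $\sqrt n=\sqrt{n/n_j}\,\sqrt{n_j}$ and using assumption \eqref{groupassumption} with Slutsky's lemma gives
$$\sqrt n\big((\hat F_{11}-\hat F_{21})-(F_{11}-F_{21})\big)\xrightarrow{d}\Tilde{\mathbb G}:=(\kappa^{(1)})^{-1/2}U_1-(\kappa^{(2)})^{-1/2}U_2,$$
which is a mean-zero Gaussian process. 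Subtraction is a continuous linear map on $D[0,\tau]\times D[0,\tau]$, which justifies passing to the difference. The covariance of $\Tilde{\mathbb G}$ is then read off the one-sample covariance kernels.

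\textbf{Step 2: consistency.} Step 1 with the factor $n^{-1/2}\to0$ gives
$$\sup_{t\le\tau}|(\hat F_{11}-\hat F_{21})-(F_{11}-F_{21})|\to0$$
in outer probability. Since $\Psi$ is Lipschitz with respect to the sup norm by Lemma~\ref{continous_integral}, part (a) follows from the continuous mapping theorem. Here I only claim convergence in probability rather than almost surely, because I have not checked a Glivenko–Cantelli argument for the Aalen–Johansen functional.

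\textbf{Step 3: limit distribution.} Apply Lemma~\ref{directional_diff_L1} at the point $f=F_{11}-F_{21}\in D[0,\tau]$. The lemma is stated for general $f\in D[0,\tau]$, so it applies unchanged. Then invoke the delta method for directionally Hadamard differentiable maps (Theorem 2.1 of \citet{shapiro1991asymptotic}). This yields
$$\sqrt n(\hat\Delta_\tau-\Delta_\tau)\xrightarrow{d}\Psi'_{F_{11}-F_{21}}(\Tilde{\mathbb G}),$$
and the right-hand side is exactly $\Tilde D_{\Delta_\tau}$. One technical point needs care. Lemma~\ref{directional_diff_L1} works with $L^1$-convergent directions $h_n\to h$, while the delta method needs Hadamard differentiability tangentially to the support of $\Tilde{\mathbb G}$ in the sup-norm topology. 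Sup-norm convergence on $[0,\tau]$ implies $L^1$ convergence, and the chain rule through $\Psi_1$ was already handled in that lemma, so this is consistent.

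\textbf{Main obstacle.} The real difficulty is Step 1: getting a clean sup-norm weak convergence statement for the Aalen–Johansen estimator on all of $[0,\tau]$ that allows discontinuous $F_{jk}$. This is the reason for citing the erratum \citet{dobler2024erratum}, which corrects the limit's covariance in the discontinuous case. I would first check that their conditions, namely a positive at-risk probability at $\tau$ and independent right-censoring, match the present setup. Once they do, the rest is a direct transcription of the proof of Theorem~\ref{limitABCgeneral}. I would also record that $\Tilde D_{\Delta_\tau}$ is a mixture of a nonnegative non-Gaussian part and a Gaussian part, so that continuity of its distribution function can be verified when it is later used with Lemma~\ref{behavior_general_tests}.
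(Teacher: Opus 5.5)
Your proposal is correct and follows the paper's route. The paper gives no formal proof; it only says the result follows by repeating the proofs of Theorems~\ref{consistency_effectmeasure} and~\ref{limitABCgeneral} with the Aalen--Johansen convergence results of \citet{dobler2017discontinuity, dobler2024erratum} in place of the Kaplan--Meier ones, and your Steps 1--3 (using Lemma~\ref{continous_integral} and Lemma~\ref{directional_diff_L1}) carry out exactly that sketch. One point to tighten is your claim that Step 1 holds ``without continuity assumptions'': the remark following the theorem notes that Theorem 5.1 of \citet{dobler2017discontinuity} only allows finitely many jumps of the cumulative incidence function, so you should either carry that restriction or obtain the weak convergence directly from the Hadamard differentiability of the product integral and of $(S,A)\mapsto\int S(s-)\,dA(s)$.
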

\begin{remark}
    Theorem 5.1 of \citet{dobler2017discontinuity} only allows for a finite number of jumps of the cumulative incidence function. However, this condition is generally not restrictive in practice, as tied event times in clinical trials are usually a result of discretizing the time scale, i.e. rounding to full calender days.
\end{remark}
If we denote the $\alpha$-quantile of $\Tilde{D}_{\Delta_\tau}$ by $\Tilde{q}_{\alpha},$ a test for the equivalence hypothesis \eqref{equivalencehypothesis_comprisk} can be defined analogously to \eqref{oracletest}:

\begin{equation*}
    \Tilde{\varphi}_{n,\alpha}=\mathbbm{1}\left\{\hat{\Delta}_{\tau}(F_{11}, F_{21})-\varepsilon\leq \frac{\Tilde{q}_{\alpha}}{\sqrt{n}}\right\}.
\end{equation*}
Similar to the classical survival case, resampling procedures are necessary to obtain a feasible test, as the mixture weights in the distribution of $\Tilde{D}_{\Delta_\tau}$ as well as the covariance function of $\Tilde{\mathbb{G}}$ are unknown. We only present the subsampling and modified bootstrap analogues of sections \ref{sec:subsample} and \ref{sec:modified_boot}, the adjustments described in section \ref{sec:corrections_typeI_errors} can be defined analogously.
Let $\Tilde{Z}_{1j},\dots,\Tilde{Z}_{N_j}$ be equal to the $N_j=\binom{n_j}{b_j}$ possible unordered subsets of $(X_{1j}, \delta_{1j}, D_{ij}), \dots, (X_{n_jj}, \delta_{n_jj}, D_{n_jj}).$ Denote by $\hat{F}_{l_jj1}$ the Aalen-Johansen estimator computed from the subset $Z_{l_j}.$
Then the subsampling distribution in the competing risks setup is given by
\begin{equation*}
    L^{CR}_{n,b}(x)=\frac{1}{\binom{n_1}{b_1}\binom{n_2}{b_2}}\sum_{l_1=1}^{\binom{n_1}{b_1}}\sum_{l_2=1}^{\binom{n_2}{b_2}}\mathbbm{1}\left\{\sqrt{b}\left(\frac{1}{\tau}\int_0^{\tau}|\hat{F}_{l_111}(t)-\hat{F}_{l_221}(t)|dt-\frac{1}{\tau}\int_0^{\tau}|\hat{F}_{11}(t)-\hat{F}_{21}(t)|dt\right)\leq x\right\}
\end{equation*}
Based on the subsampling distribution, we obtain the equivalence test
\begin{equation*}
        \Tilde{\varphi}_{n,b,\alpha}^S=\mathbbm{1}\left\{\hat{\Delta}_{\tau}(F_{11}, F_{21})-\varepsilon\leq\frac{L^{CR(-1)}_{n,b}(\alpha)}{\sqrt{n}} \right\}.
\end{equation*}
where $L^{CR(-1)}_{n,b}(\alpha)$ denotes the $\alpha$-quantile of the (random) distribution function $L^{CR}_{n,b}.$
For this test, an analogous result to Theorem \ref{subsampletest_limits} holds:
\begin{theorem}\label{subsampletest_limits_comprisk}
    Suppose assumptions \eqref{groupassumption} and \eqref{subsamplingassumption} hold. Then as $\min(n_1,n_2)\to \infty$
    \begin{itemize}
        \item[(a)] If $\Delta_{\tau}(F_{11}, F_{21})= \varepsilon$, then $\lim_{n\to\infty} E(\Tilde{\varphi}_{n,b,\alpha}^S)=\alpha.$
        \item[(b)] If $\Delta_{\tau}(F_{11}, F_{21})> \varepsilon$, then $\lim_{n\to\infty} E(\Tilde{\varphi}_{n,b,\alpha}^S)=0.$
        \item[(c)] If $\Delta_{\tau}(F_{11}, F_{21})< \varepsilon$, then $\lim_{n\to\infty} E(\Tilde{\varphi}_{n,b,\alpha}^S)=1.$
    \end{itemize}
\end{theorem}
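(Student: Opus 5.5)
The plan is to follow the proof of Theorem \ref{subsampletest_limits} almost verbatim. We verify the hypotheses of Lemma \ref{behavior_general_tests} with $T_n=\hat{\Delta}_{\tau}(F_{11},F_{21})$, $\theta=\Delta_{\tau}(F_{11},F_{21})$, $\theta_0=\varepsilon$, $r_n=\sqrt{n}$, $T=\Tilde{D}_{\Delta_\tau}$ and $F_n=L^{CR}_{n,b}$. The weak convergence $r_n(T_n-\theta)\xrightarrow{d}T$ is Theorem \ref{limit_comprisk}(b). This leaves three ingredients: non-degeneracy and continuity of the distribution function of $\Tilde{D}_{\Delta_\tau}$, uniform consistency of $L^{CR}_{n,b}$ for that distribution, and the bookkeeping between the scaling $\sqrt{b}$ in the subsampling distribution and $\sqrt{n}$ in the test.

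For continuity, I would argue as in the proof of Theorem \ref{oracletest_limts}. Write $\Tilde{D}_{\Delta_\tau}=\frac1\tau\int_{M}|\Tilde{\mathbb{G}}|dt+L$, where $L$ is a continuous linear functional of the Gaussian process $\Tilde{\mathbb{G}}$ and $M=\{F_{11}=F_{21}\}\cap[0,\tau]$.
\begin{itemize}
\item If $\lambda(M)=0$, then $\Tilde{D}_{\Delta_\tau}=L$ is centered Gaussian. It is non-degenerate as long as $\Tilde{\mathbb{G}}$ does not vanish on the relevant set, which holds whenever $F_{11}$ or $F_{21}$ is non-constant there, since the Aalen--Johansen limit has positive variance where the CIF increases.
\item Otherwise, $\Tilde{D}_{\Delta_\tau}$ is a convex functional of a Gaussian process. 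Its law has no atoms except possibly at the infimum of its support, by Tsirelson-type results on distributions of convex/seminorm functionals of Gaussian measures. An atom at the lower endpoint would require $\Tilde{\mathbb{G}}=0$ a.e.\ on $M$ with positive probability, which again is excluded by non-degeneracy of the covariance.
\end{itemize}
If needed, I would add the harmless assumption that $\Delta_{\tau}(F_{11},F_{21})$ is not attained with $\Tilde{\mathbb{G}}\equiv0$.

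For consistency of $L^{CR}_{n,b}$, I would invoke Theorem 3.1 of \citet{politis2010k}. Its requirements are:
\begin{itemize}
\item two independent i.i.d.\ samples;
\item the rate condition \eqref{subsamplingassumption} together with \eqref{groupassumption};
\item existence of a limit law for the root $\sqrt{n}(\hat\Delta-\Delta)$ under the proportional sampling regime.
\end{itemize}
The last is Theorem \ref{limit_comprisk}(b). Because $b_j/n_j\to0$ with $b_j\to\infty$ and the subsample proportions $b_j/b$ also converge to $\kappa^{(j)}$, the root computed on subsamples of sizes $(b_1,b_2)$ has the same limit $\Tilde{D}_{\Delta_\tau}$. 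The theorem then gives $\sup_x|L^{CR}_{n,b}(x)-P(\Tilde{D}_{\Delta_\tau}\le x)|\to0$ in probability, using continuity of the limit. This step must be checked under both $H_0$ and $H_1$, but the subsampling argument never uses the hypothesis, only the fixed true parameter.

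The main obstacle is making sure the Aalen--Johansen weak convergence of \citet{dobler2017discontinuity, dobler2024erratum} holds in the form needed. In particular, the finite-number-of-jumps restriction noted in the Remark must be imposed, and the Gaussian limit $\Tilde{\mathbb{G}}$ must have the continuity/non-degeneracy used above. Once Theorem \ref{limit_comprisk} and continuity of the limit are granted, Lemma \ref{behavior_general_tests}(a)--(c) yields the three claimed limits directly. Outer expectations replace expectations if measurability fails, as permitted by the last sentence of that lemma.
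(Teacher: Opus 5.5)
This is essentially the paper's own argument: the paper gives no separate proof and declares the result analogous to Theorem~\ref{subsampletest_limits}, i.e.\ Theorem~3.1 of \citet{politis2010k} fed with the Aalen--Johansen limit of Theorem~\ref{limit_comprisk} and then Lemma~\ref{behavior_general_tests}, with continuity of the limit law simply asserted as in the proof of Theorem~\ref{oracletest_limts}. There are two caveats; the first concerns the condition $b_j/b\to\kappa^{(j)}$, which does not follow from \eqref{groupassumption} and \eqref{subsamplingassumption} (e.g.\ $b_1=n_1^{1/2}$, $b_2=n_2^{2/3}$ is admissible) yet is genuinely needed, because the covariance of $\Tilde{\mathbb{G}}$ depends on $\kappa^{(j)}$, so you must impose it as an extra condition rather than derive it (the paper also relies on it only tacitly, cf.\ the proportional subsampling mentioned in Section~\ref{sec:corrections_typeI_errors}); the second concerns your continuity step, where positive homogeneity shows that a finite lower endpoint of the support must be $0$ and that an atom there means $P(\frac{1}{\tau}\int_M|\Tilde{\mathbb{G}}|dt=-L(\Tilde{\mathbb{G}}))>0$, which does not force $\Tilde{\mathbb{G}}=0$ a.e.\ on $M$ (a rank-one $\Tilde{\mathbb{G}}=g\phi$ with $\frac{1}{\tau}\int_M|\phi|dt=-L(\phi)>0$ gives an atom of mass $1/2$), so that claim needs an explicit non-degeneracy argument or assumption.
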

The modified bootstrap of \citet{fang2019inference} can be generalized to the competing risks setting as well. As an estimator for the directional derivative we propose
\begin{equation*}\label{estimated_derivative_comprisk}
\begin{split}
        \hat{\Tilde{\Psi}}_n'(h)=&\frac{1}{\tau}\int_{\{t\in[0,\tau]:|\hat{F}_{11}(t)-\hat{F}_{21}(t)|\leq \frac{1}{c_n} \}}|h(t)|dt\\ &+\frac{1}{\tau}\int_{\{t\in[0,\tau]:|\hat{F}_{11}(t)-\hat{F}_{21}(t)|> \frac{1}{c_n} \}}\operatorname{sign}(\hat{F}_{11}(t)-\hat{F}_{21}(t))h(t)dt
\end{split}
\end{equation*}
Denoting by $\hat{F}^{(B)}_{j1}$ a bootstrap counterpart of $\hat{F}_{j1},$ we define the bootstrap approximation by
\begin{equation*}
    \hat{\Tilde{\Psi}}_n'(\sqrt{n}((\hat{F}^{(B)}_{11}-\hat{F}^{(B)}_{21})- (\hat{F}_{11}-\hat{F}_{21})))
\end{equation*}
and denote its conditional $\alpha$-quantile by $\Tilde{q}^B_{n,\alpha}.$
\begin{remark}
    While it is possible to use the Efron bootstrap for the Aalen-Johansen estimator, usually a wild bootstrap approach is preferable, as it leads to better finite sample behavior. See \citet{lin1997non, beyersmann2013weak, dobler2017non} for details regarding the wild bootstrap of the Aalen-Johansen estimator in the continuous case, and \citet{dobler2017discontinuity, dobler2024erratum} for necessary adjustments in the discontinuous case.
\end{remark}
Based on the conditional quantiles $\Tilde{q}^B_{n,\alpha}$ we obtain the test
\begin{equation*}
        \Tilde{\varphi}_{n,\alpha}^B=\mathbbm{1}\left\{\hat{\Delta}_{\tau}(F_{11}, F_{21})-\varepsilon\leq\frac{q^{B}_{n,\alpha}}{\sqrt{n}} \right\}
\end{equation*}
for the equivalence hypothesis \eqref{equivalencehypothesis_comprisk}. It's asymptotic properties are summarized in the theorem below.
\begin{theorem}\label{boottest_limits_comprisk}
    Suppose assumptions \eqref{groupassumption} and \eqref{boot_ratecondition} hold. Then as $\min(n_1,n_2)\to \infty$
    \begin{itemize}
        \item[(a)] If $\Delta_{\tau}(F_{11}, F_{21})= \varepsilon$, then $\lim_{n\to\infty} E(\Tilde{\varphi}_{n,\alpha}^B)=\alpha.$
        \item[(b)] If $\Delta_{\tau}(F_{11}, F_{21})>\varepsilon$, then $\lim_{n\to\infty} E(\Tilde{\varphi}_{n,\alpha}^B)=0$
        \item[(c)] If $\Delta_{\tau}(F_{11}, F_{21})< \varepsilon$, then $\lim_{n\to\infty} E(\Tilde{\varphi}_{n,\alpha}^B)=1.$
    \end{itemize}
\end{theorem}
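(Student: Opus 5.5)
The argument mirrors the proof of Theorem~\ref{boottest_limits}, with the Kaplan--Meier estimators replaced by Aalen--Johansen estimators, and it ends with an application of Lemma~\ref{behavior_general_tests}. That lemma needs two ingredients:
\begin{itemize}
\item weak convergence $\sqrt{n}(\hat{\Delta}_{\tau}(F_{11},F_{21})-\Delta_{\tau}(F_{11},F_{21}))\xrightarrow{d}\Tilde{D}_{\Delta_\tau}$ with a continuous distribution function;
\item conditional weak convergence, in outer probability, of the modified bootstrap distribution to $\Tilde{D}_{\Delta_\tau}$.
\end{itemize}
The first ingredient is Theorem~\ref{limit_comprisk}(b). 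Its limit is a sum of $\frac1\tau\int_{\{F_{11}=F_{21}\}}|\Tilde{\mathbb{G}}|$ and a centered Gaussian linear functional of $\Tilde{\mathbb{G}}$. Continuity of its distribution function I argue exactly as in the proof of Theorem~\ref{oracletest_limts}, assuming $\Tilde{\mathbb{G}}$ is nondegenerate.

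For the bootstrap step I start from the joint and conditional weak convergence of the (wild or Efron) bootstrapped Aalen--Johansen processes. By independence of the two groups and \eqref{groupassumption}, these give
\[
\sqrt{n}\bigl((\hat F^{(B)}_{11}-\hat F^{(B)}_{21})-(\hat F_{11}-\hat F_{21})\bigr)\xrightarrow{d}\Tilde{\mathbb G}
\]
on $D[0,\tau]$, conditionally on the data in outer probability. The same results give $\sqrt n((\hat F_{11}-\hat F_{21})-(F_{11}-F_{21}))\xrightarrow{d}\Tilde{\mathbb G}$ unconditionally. The cited references are \citet{dobler2017discontinuity, dobler2024erratum}, under the finite-jump condition of the remark. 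Lemma~\ref{directional_diff_L1} applies verbatim, since it is a statement about $\Psi$ on $D[0,\tau]$ and does not depend on which estimator is plugged in. So Theorem~3.2 of \citet{fang2019inference} applies once two conditions on $\hat{\Tilde{\Psi}}_n'$ are checked.

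First, $\hat{\Tilde{\Psi}}_n'$ is Lipschitz in $h$ with constant $2$, uniformly in $n$ and in the data. This is the same direct calculation as before.

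Second, I need consistency $\hat{\Tilde{\Psi}}_n'(h)\to\Psi'_{F_{11}-F_{21}}(h)$ in outer probability for each fixed $h$. This I will prove by repeating the argument of Lemma~\ref{consistency_derivative}:
\begin{itemize}
\item split the error into the terms $A$ and $B$;
\item reduce $B$ to showing $\lambda(M_1\oplus M_2)\to0$ in outer probability;
\item bound the product-measure probability of the symmetric difference by three pieces: the Lebesgue measure of $\{0<|F_{11}-F_{21}|\le\delta+1/c_n\}$, a small $\eta$ coming from uniform consistency of $\hat F_{j1}$, and the probability $P(\sup_t\sqrt n|(\hat F_{11}-\hat F_{21})-(F_{11}-F_{21})|>\sqrt n/c_n)$.
\end{itemize}
The last probability vanishes by tightness of the $\sqrt n$-scaled process together with $\sqrt n/c_n\to\infty$ from \eqref{boot_ratecondition}. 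Uniform consistency follows from Theorem~\ref{limit_comprisk}'s underlying convergence, or from Glivenko--Cantelli arguments for the Aalen--Johansen estimator. Term $A$ is handled in the same way, via sign agreement on the set where $|\hat F_{11}-\hat F_{21}|>1/c_n$.

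Remark~3.4 and Theorem~3.2 of \citet{fang2019inference} then yield conditional convergence of the modified bootstrap statistic to $\Tilde{D}_{\Delta_\tau}$. From this, Kolmogorov-distance convergence of its conditional distribution function $F_n$ to $F$, in outer probability, follows because $F$ is continuous. Lemma~\ref{behavior_general_tests}, applied with $T_n=\hat{\Delta}_\tau(F_{11},F_{21})$, $r_n=\sqrt n$, $\theta_0=\varepsilon$ and $F_n^{-1}(\alpha)=\Tilde q^B_{n,\alpha}$, gives (a)--(c).

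I expect the main obstacle to be the bootstrap input rather than the functional-analytic part. One has to make sure that the chosen resampling scheme satisfies conditional weak convergence on $D[0,\tau]$ in the form Fang--Santos require, namely bounded-Lipschitz conditional convergence in outer probability plus the required measurability. This must hold when the CIFs have jumps, using the corrected results of \citet{dobler2024erratum}. I would first try to transfer those one-sample statements to the two-sample difference through independence and a continuous-mapping argument for the conditional bounded-Lipschitz metric.
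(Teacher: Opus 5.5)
Your proposal is correct and follows the route the paper intends. The paper gives no separate proof; it only states that the Kaplan--Meier arguments carry over once Dobler's (conditional) Aalen--Johansen convergence results are substituted. That means Lipschitz continuity and pointwise consistency of $\hat{\Tilde{\Psi}}_n'$ (via the argument of Lemma~\ref{consistency_derivative}), then Remark~3.4 and Theorem~3.2 of Fang and Santos, then Lemma~\ref{behavior_general_tests} — exactly your steps, with the sign-agreement bound for term $A$ and the combination of the two group-wise bootstrap results spelled out in more detail than the paper gives.
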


\section{Dynamic choice of the equivalence threshold}\label{appendix:dynamic}
A general difficulty of equivalence tests is the choice of the thresholds used to constitute equivalence which also applies to our methods. Especially in the competing risks setup where range of the effect measure is usually unknown, it may be difficult to find a reasonable value for $\varepsilon$ in advance. However, it is possible to choose $\varepsilon$ in a data dependent fashion. We adopt the recent proposal of \citet{mollenhoff2024testing}, who suggested the subsequent method in the context of parametric competing risk models, see also Remark 1 (b) of \citet{lange2025testing}.

We formulate the application directly to the competing risks setting, which includes the classical survival setup considered in the main part of the paper as a special case for $F_{12} = F_{22} \equiv 0$. For any fixed $\varepsilon_1\leq \varepsilon_2$ we have 
\begin{equation*}
    \{F_{11},F_{21}:\Delta_{\tau}(F_{11}, F_{21})\geq \varepsilon_2\}\subset\{F_{11},F_{21}:\Delta_{\tau}(F_{11}, F_{21})\geq \varepsilon_1\}
\end{equation*}
i.e. the null hypotheses in \eqref{equivalencehypothesis_comprisk} are nested. Furthermore,
\begin{equation*}
    \mathbbm{1}\left\{\hat{\Delta}_{\tau}(F_{11}, F_{21})-\varepsilon_1\leq \frac{\Tilde{q}_{\alpha}}{\sqrt{n}}\right\} \leq     \mathbbm{1}\left\{\hat{\Delta}_{\tau}(F_{11}, F_{21})-\varepsilon_2\leq \frac{\Tilde{q}_{\alpha}}{\sqrt{n}}\right\}
\end{equation*}
and the same is true if the subsampling or bootstrap quantiles $\Tilde{q}^S_{n,b,\alpha}$ and $\Tilde{q}^B_{n,\alpha}$ are used. Therefore a rejection of the null hypothesis \eqref{equivalencehypothesis_comprisk} for $\varepsilon$ also implies the rejection of the null for all $\Tilde{\varepsilon}>\varepsilon.$ This allows for the application of the sequential rejection principle \citep{goeman2010sequential}, which implies that the following procedure controls the type I error asymptotically: First, choose a grid $0<\varepsilon_1<\varepsilon_2<\dots< \varepsilon_r<1$ for some $r\in \mathbb{N}.$ Then test \eqref{equivalencehypothesis_comprisk} with $\varepsilon_1$; in case the null hypothesis is not rejected, proceed with $\varepsilon_{2},\varepsilon_{3}$ etc. until the first rejection occurs. If the null hypothesis is rejected in the $j$-th step, $\varepsilon_j$ can be interpreted as a measure of evidence for similarity between $F_{11}$ and $F_{21}.$ For practical applications, we recommend to plot the different values of $\varepsilon$ (x-axis) together with the $p$-values (y-axis) from the corresponding tests against a horizontal line at $\alpha$ in a joint coordinate system, then the intersection of both lines yields the value $\varepsilon_j$ in the procedure described above.

\bibliographystyle{abbrvnat}
\bibliography{literatur}

\end{document}